\renewenvironment{table}
               {\setlength\abovecaptionskip{10\p@}%
                \setlength\belowcaptionskip{10\p@}%
                \@float{table}}
               {\end@float}
\renewenvironment{table*}
               {\setlength\abovecaptionskip{10\p@}%
                \setlength\belowcaptionskip{10\p@}%
                \@dblfloat{table}}
               {\end@dblfloat}
\renewcommand{\thesubtable}{\thetable.\arabic{subtable}}
\renewcommand{\thesubfigure}{\thefigure.\arabic{subfigure}}
\renewcommand{\p@subtable}{}
\renewcommand{\p@subfigure}{}
\renewcommand{\@thesubtable}{\textbf{Table \thesubtable:}\hskip\subfiglabelskip}
\renewcommand{\@thesubfigure}{\textbf{Fig. \thesubfigure:}\hskip\subfiglabelskip}
\newcommand{\tp}{digram\xspace}
\newcommand{\tps}{digrams\xspace}
\newcommand{\ltp}{Digram\xspace}
\newcommand{\ltps}{Digrams\xspace}
\newcommand{\trp}{\mbox{TreeRePair}\xspace}
\newcommand{\hairsp}{\hspace{1pt}}% hair space
\newcommand{\ie}{\mbox{\textit{i.\hairsp{}e.}}\xspace}
\newcommand{\cf}{\textit{cf.}~}
\begin{document}

\title{Tree structure compression with RePair}
\author{Markus Lohrey\inst{1,}\thanks{The first author 
is supported by the DFG  research project ALKODA.} 
\and Sebastian Maneth\inst{2} \and  Roy Mennicke\inst{1}
\institute{
Universit\"at
Leipzig, Institut f\"ur Informatik, Germany
\and
NICTA and University of New South Wales, Australia
\\
\email{lohrey@informatik.uni-leipzig.de,
  sebastian.maneth@nicta.com.au,
  roy@mennicke.info}}}

\maketitle

\begin{abstract}
In this work we introduce a new linear time compression algorithm,
called "Re-pair for Trees", which compresses ranked ordered trees
using linear straight-line context-free tree grammars. Such grammars
generalize straight-line context-free string grammars and allow basic
tree operations, like traversal along edges, to be executed without
prior decompression. Our algorithm can be considered as a
generalization of the "Re-pair" algorithm developed by N. Jesper
Larsson and Alistair Moffat in 2000. The latter 
algorithm is a dictionary-based compression algorithm for strings. 
We also introduce a succinct coding which is specialized in further
compressing the grammars generated by our algorithm. This is
accomplished without loosing the ability do directly execute queries
on this compressed representation of the input tree. Finally, we
compare the grammars and output files generated by a prototype of the
Re-pair for Trees algorithm with those of similar compression
algorithms. The obtained results show that that 
our algorithm outperforms its competitors 
in terms of compression ratio, runtime and memory usage.
\end{abstract}

%\tableofcontents

\section{Introduction}

\subsection{Motivation}

Trees are nowadays a common data structure used in computer science to
represent data hierarchically. This is, for instance, evidenced by XML
documents which are widely used after their introduction in 1996. They
are sequential representations of ordered unranked trees.	When
processing trees it is often convenient to hold the tree structure in
memory in order to retain fast and random access to its
nodes. However, this often leads to a heavy resource consumption in
terms of memory usage due to the necessary pointer structure which
represents the tree structure. The space needed to load an entire XML
document into main memory in order to access it through a DOM proxy is
usually 3--8 times larger than the size of the document itself
\cite{Wang07space}. Therefore, it is essential for very large tree
structures to use a memory 
efficient representation. 

In \cite{Frick03query,Buneman03path} directed acyclic graphs (DAGs) were proposed to overcome 
this problem. By sharing common subtrees one is able to reduce the
size of the in-memory representation by a factor of about 10
\cite{Buneman03path}. One of the most appealing properties of this
representation is that queries like the ones of the XPath language can
be directly executed on the compressed representation, \ie, it is not
necessary to completely unfold the DAG.

Later, in \cite{Busatto08efficient} so called linear straight-line
context-free tree grammars were proposed as a more succinct
representation of an input tree. These grammars represent exactly one
tree and generalize the concept of sharing common subtrees to the
sharing of repeating tree patterns. Most important, this new
representation is still queryable, \ie, queries can be evaluated
without prior decompression. At the same time, the complexity of
querying, e.g., using XQuery, stays the same as for 
DAGs \cite{Lohrey2006complexity}.

However, finding the smallest linear straight-line context-free tree
grammar generating a given tree is NP-hard. Already finding the
smallest context-free string grammar for a given string is NP-complete
\cite{Charikar05smallest}. In \cite{Busatto08efficient} an algorithm
called BPLEX was introduced which generates a small linear
straight-line context-free tree grammar for a given input tree. On
average, the resulting grammar is 3.5--4 times smaller than the
minimal DAG (in terms of the number of edges). 
An implementation of this algorithm, which 
processes the underlying tree structure of XML documents, was also provided.

\subsection{Main Contribution}

Our main contribution is a compression algorithm, called "Re-pair for
Trees", which is based on linear straight-line context-free tree
grammars. Our investigations show that, regarding our test data, the
grammars generated by Re-pair for Trees are always smaller than the
grammars produced by the BPLEX algorithm. In addition, our algorithm

outperforms BPLEX 
in terms of runtime and memory usage. Note that especially 
runtime was a huge drawback of the BPLEX implementation.

The Re-pair for Trees algorithm is a generalization 
of the "Re-pair" algorithm which was developed by \textsc{Larsson} and
\textsc{Moffat} in \cite{larsson2000off}. The latter algorithm is an
offline dictionary-based compression method for strings consisting of
a simple but powerful phrase derivation method and a compact
dictionary encoding. A \emph{dictionary-based} compression algorithm
is an algorithm where the input message is parsed into a sequence of
phrases selected from a dictionary. Since the reference to a phrase in
the dictionary is more compact than the phrase itself often a
considerable compression can be achieved. Re-pair's dictionary is
inferred \emph{offline} since it is generated by considering the whole 
input message and since it is written out as a 
part of the compressed data so that it is available to the decoder.
\begin{figure}[t]
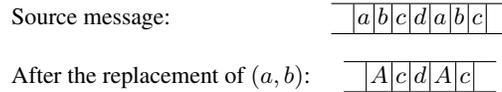

	\centering
	\renewcommand{\arraystretch}{2.0}
	\begin{tabular}{lcc}
		Source message:&&{\renewcommand{\arraystretch}{1.0}
			$\begin{array}{c|c|c|c|c|c|c|c|c}\hline
			\hspace{0.2cm}&a&b&c&d&a&b&c&\hspace{0.2cm}\\\hline
			\end{array}$}\\
		After the replacement of $(a,b)$:&&{\renewcommand{\arraystretch}{1.0}
			$\begin{array}{c|c|c|c|c|c|c}\hline
				\hspace{0.2cm}&A&c&d&A&c&\hspace{0.2cm}\\\hline
			\end{array}$}
	\end{tabular}
	\caption{The pair $(a,b)$ is replaced by the new symbol $A$.}
	\label{fig:repalg}
\end{figure}
The name Re-pair stands for "recursive pairing" and describes the idea
of the algorithm. The latter is to count the frequencies of all pairs
formed by two adjacent symbols of the source message, replacing the
most frequent pair by a new symbol (see Fig.~\ref{fig:repalg}),
updating the frequency counters of all involved pairs and repeating
this process until there are no pairs occurring twice in 
the source message. This compression technique allows 
searching the compressed data without prior decompression.

\subsection{Organization of this Work}

In Sect.~\ref{ch:repairAlgorithm} we explain in detail the two steps
of which the Re-pair for Trees algorithm consists. We also present a
complete example of a run of our algorithm and consider the
compressibility of special types of trees depending on the maximal
rank allowed for a nonterminal. Sect.~\ref{ch:implementationDetails}
explains some of the implementation details for the Re-pair for Trees
algorithm which is called \trp. In particular, we elaborate on its
linear runtime, the internal data structures used and its efficient
in-memory representation of the input tree. Moreover, in
Sect.~\ref{ch:succinctCoding} we present a succinct coding which is
specialized in further compressing the grammars generated by the
Re-pair for Trees algorithm without loosing the ability to directly
execute queries on this compressed representation of the input
tree. By using a combination of multiple Huffman codings, a run-length
coding and a fixed-length coding the resulting file sizes are always
smaller than the sizes of the files generated by competing compression
algorithms when executed on our test data. In
Sect.~\ref{ch:experimentalResults} we compare the compression results
of our implementation of the Re-pair for Trees algorithm with several
other compression algorithms. In particular, we consider BPLEX and
"Extended-Repair". The latter algorithm is also based on the Re-pair
for strings algorithm and was independently developed at the
University of Paderborn, Germany \cite{Krislin08repair,Boettcher10clustering}.

\section{Preliminaries}

In the following, $\mathbb{N}_{>0}=\mathbb{N}\setminus\{0\}$ denotes
the set of non-zero natural numbers. For a set $X$ we denote by 
$X^\ast$ the set of all finite words over $X$. 
For $w=x_1x_2\ldots x_n\in X^\ast$ we define $|w|=n$. 
The empty word is denoted by $\varepsilon$.

We sometimes surround an element of $\mathbb{N}$ by square brackets in
order to emphasize that we currently consider it a character instead
of a number. For instance, for the sequence of integers 
$222221$ we shortly write $[2]^5[1]$ instead of $2^51$ to clarify that we are not dealing with the fifth power of $2$.

\subsection{Labeled Ordered Tree}

A \emph{ranked alphabet} is a tuple
$(\mathcal{F},\mathsf{rank})$, where $\mathcal{F}$ is a finite set of
\emph{function symbols} and the function
$\mathsf{rank}:\mathcal{F}\rightarrow\mathbb{N}$ assigns to each
$\alpha\in\mathcal{F}$ its \emph{rank}. Furthermore, we define
$\mathcal{F}_i=\{a\in\mathcal{F}\mid\mathsf{rank}(\alpha)=i\}$.
We fix a ranked alphabet $(\mathcal{F},\mathsf{rank})$ in the following.
An \emph{$\mathcal{F}$-labeled ordered tree} 
is a pair $t=(\mathsf{dom}_t,\lambda_t)$, where
\begin{enumerate}[(1)]
	\item $\mathsf{dom}_t\subseteq\mathbb{N}_{>0}^\ast$ is a finite set of \emph{nodes},
	\item $\lambda_t:\mathsf{dom}_t\to\mathcal{F}$,
	\item if $w=vv'\in\mathsf{dom}_t$, then also $v\in\mathsf{dom}_t$, and
	\item if $v\in\mathsf{dom}_t$ and $\lambda_t(v)\in\mathcal{F}_n$, then $vi\in\mathsf{dom}_t$ if and only if $1\leq i\leq n$.
\end{enumerate}
The node $\varepsilon\in\mathsf{dom}_t$ is called the \emph{root} of
$t$. By $\mathsf{index}(w)$, where
$w=vi\in\mathsf{dom}_t\setminus\{\varepsilon\}$ and
$i\in\mathbb{N}_{>0}$, we denote the \emph{index} $i$ of the node $w$,
\ie, $w$ is the $i$-th child of its parent node. Furthermore, we
define $\mathsf{parent}(w)=v$. The \emph{size} of $t$ is given by the
number of edges of which it consists, \ie, we have
$|t|=|\mathsf{dom}_t|-1$. The \emph{depth} of the tree $t$ is
$\mathsf{depth}(t)=\max\{|u|\mid u\in\mathsf{dom}_t\}$. We identify an
$\mathcal{F}$-labeled tree $t$ with a term in the usual way: if
$\lambda_t(\varepsilon)=\alpha\in\mathcal{F}_i$, then this term is
$\alpha(t_1,\ldots,t_i)$, where $t_j$ is the term associated with the
subtree of $t$ rooted at node $j$, where $j\in\{1,\ldots,i\}$. The set
of all 
$\mathcal{F}$-labeled trees is $T(\mathcal{F})$.
\begin{example}\label{ex:tree}
	In Fig.~\ref{fig:tree} an $\mathcal{F}$-labeled ordered tree $t$ is shown. We have
	\[\mathsf{dom}_t=\{\varepsilon,1,2,3,11,12,21,22,31,111,112,121,122,211,212,221,222\}\enspace\mbox{.}\]
\end{example}
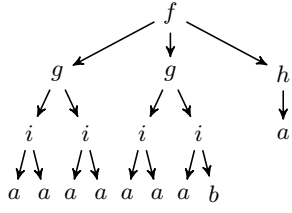
\begin{figure}[t]\centering
	\centering
	\begin{tikzpicture}[->,>=stealth',semithick]
		\tikzstyle{level 1}=[level distance=0.8cm, sibling distance=1.5cm]
		\tikzstyle{level 2}=[level distance=0.8cm, sibling distance=0.75cm]
		\tikzstyle{level 3}=[level distance=0.8cm, sibling distance=0.4cm]

		\node {$f$}
			child {node {$g$}
				child {node {$i$}
					child {node {$a$}}
					child {node {$a$}}
				}
				child {node {$i$}
					child {node {$a$}}
					child {node {$a$}}
				}
			}
			child {node {$g$}
				child {node {$i$}
					child {node {$a$}}
					child {node {$a$}}
				}
				child {node {$i$}
					child {node {$a$}}
					child {node {$b$}}
				}
			}
			child {node {$h$}
				child {node {$a$}}
			}
		;
	\end{tikzpicture}
	\caption{$\mathcal{F}$-labeled ordered tree $t$}\label{fig:tree}
\end{figure}
We fix a countable set $\mathcal{Y}=\{y_1,y_2,\ldots\}$ with
$\mathcal{Y}\cap\mathcal{F}=\emptyset$ of \emph{(formal context-)
  parameters} (below we also use a distinguished parameter
${z\notin\mathcal{Y}}$). The set of all $\mathcal{F}$-labeled trees
with parameters from $Y\subseteq\mathcal{Y}$ is denoted by
$T(\mathcal{F},Y)$. Formally, we consider 
parameters as function symbols of rank $0$ and define $T(\mathcal{F},Y)=T(\mathcal{F}\cup Y)$. 
The tree $t\in T(\mathcal{F},Y)$ is said to be \emph{linear} if every
parameter $y\in Y$ occurs at most once in $t$. By
$t[y_1/t_1,\ldots,y_n/t_n]$ we denote the tree that is obtained by
replacing in $t$ for every $i\in\{1,2,\ldots,n\}$ every 
$y_i$-labeled leaf with $t_i$, where 
$t\in T(\mathcal{F},\{y_1,\ldots,y_n\})$ and $t_1,\ldots,t_n\in T(\mathcal{F},Y)$.
A \emph{context} is a tree $C\in T(\mathcal{F},\mathcal{Y}\cup\{z\})$ 
in which the distinguished parameter $z$ appears exactly once. 
Instead of $C[z/t]$ we write briefly $C[t]$.
Let $t=(\mathsf{dom}_t,\lambda_t)\in
T(\mathcal{F},\{y_1,\ldots,y_n\})$ such that for every $y_i$ there
exists a node $v\in\mathsf{dom}_t$ with $\lambda_t(v)=y_i$. We say
that \emph{$t$ is a tree pattern occurring in $t'\in
  T(\mathcal{F},\mathcal{Y})$} if there exist a context $C\in
T(\mathcal{F},\mathcal{Y}\cup\{z\})$ and trees $t_1,\ldots,t_n\in
T(\mathcal{F},\mathcal{Y})$ 
such that
\[C\big[t[y_1/t_1,y_2/t_2,\ldots,y_n/t_n]\big]=t'\enspace\mbox{.}\]

\subsection{SLCF Tree Grammar}

For further consideration, let us fix a countable infinite set
$\mathcal{N}_i$ of symbols of rank $i\in\mathbb{N}$ with
$\mathcal{F}_i\cap\mathcal{N}_i=\emptyset$ and
$\mathcal{Y}\cap\mathcal{N}_0=\emptyset$. 
Hence, every finite subset 
$N\subseteq\bigcup_{i\geq0}\mathcal{N}_i$ is a ranked alphabet.
A \emph{context-free tree grammar (over the 
ranked alphabet $\mathcal{F}$)} or short \emph{CF tree grammar} is a triple ${\cal G}=(N,P,S)$, where
\begin{enumerate}[(1)]
	\item $N\subseteq\bigcup_{i\geq0}\mathcal{N}_i$ is a finite set of \emph{nonterminals},
	\item $P$ (the set of \emph{productions}) is a finite set of
          pairs $(A\to t)$, where $A\in N$, $t\in T(\mathcal{F}\cup
          N,\{y_1,\ldots,y_{\mathsf{rank}(A)}\})$,
          $t\notin\mathcal{Y}$, each of the parameters
          $y_1,\ldots,y_{\mathsf{rank}(A)}$ appears in $t$,
          and\footnote{In contrast to
            \cite{Lohrey09parameterreduction}, our definition of a
            context-free tree grammar inherits productivity, \ie,
            $t\notin\mathcal{Y}$ and each parameter
            $y_1,\ldots,y_{\mathsf{rank}(A)}$ appears in $t$ for every
            $(A\to t)\in P$. This is justified by the fact that the 
            grammars generated by the Re-pair for Trees algorithm are always productive.}
	\item $S\in N$ is 
the \emph{start nonterminal} of rank $0$.
\end{enumerate}
We assume that every nonterminal $B\in N\setminus\{S\}$ 
as well as every terminal symbol from $\mathcal{F}$ 
occurs in the right-hand side $t$ of some production $(A\to t)\in P$. 

Let us define the derivation relation $\Rightarrow_{\cal G}$ on
$T(\mathcal{F}\cup N,\mathcal{Y})$ as follows: $s\Rightarrow_{\cal G}
s'$ iff there exists a production $(A\rightarrow t)\in P$ with
$\mathsf{rank}(A)=n$, a context $C\in T(\mathcal{F}\cup
N,\mathcal{Y}\cup\{z\})$, and trees $t_1,\ldots,t_n\in
T(\mathcal{F}\cup N,\mathcal{Y})$ such that 
$s=C[A(t_1,\ldots,t_n)]$ and $s'=C[t[y_1/t_1\cdots y_n/t_n]]$. Let
\[L({\cal G})=\{t\in T(\mathcal{F})\mid S\Rightarrow_{\cal G}^\ast t\}\subseteq T(\mathcal{F})\enspace\mbox{.}\]
The \emph{size} $|{\cal G}|$ of the CF tree grammar ${\cal G}$ is defined by
\[|{\cal G}|=\sum_{(A\to t)\in P}|t|\enspace\mbox{.}\]
That means that $|\mathcal{G}|$ equals the sum of the numbers of edges of the right-hand sides of $P$'s productions.
We consider the following restrictions on context-free tree grammars:
\begin{itemize}
	\item $\mathcal{G}$ is \emph{$k$-bounded} (for $k\in\mathbb{N}$) if $\mathsf{rank}(A)\leq k$ for every $A\in N$.
	\item $\mathcal{G}$ is \emph{monadic} if it is $1$-bounded.
	\item $\mathcal{G}$ is \emph{linear} if for every $(A\to t)\in P$ the term $t$ is linear.
\end{itemize}
Let $\mathcal{G}=(N,P,S)$ be a CF tree grammar. 
We denote the set of all nodes in the right-hand 
sides of $\mathcal{G}$'s productions which are labeled by the nonterminal $A\in N$ by $\mathsf{ref}_{\mathcal{G}}(A)$, \ie,
\[\mathsf{ref}_{\mathcal{G}}(A)=\{(t,v)\mid\exists(B\to t)\in P:v\in\mathsf{dom}_t\wedge\lambda_t(v)=A\}\enspace\mbox{.}\]
Furthermore, let us define the following relation:
\[\leadsto_{\mathcal{G}}\,=\{(A,B)\in N\times N\mid (B\to t)\in P\wedge A\mbox{ occurs in }t\}\]
A \emph{straight-line context-free tree grammar (SLCF tree grammar)} 
is a CF tree grammar ${\cal G}=(N,P,S)$, where
\begin{enumerate}[(1)]
	\item\label{slcf_grammar_cond1} for every $A\in N$ there is \emph{exactly one} production $(A\rightarrow t)\in P$ with left-hand side $A$, and
	\item\label{slcf_grammar_cond2} the relation $\leadsto_{\mathcal{G}}$ is acyclic.
\end{enumerate}
The conditions (\ref{slcf_grammar_cond1}) and 
(\ref{slcf_grammar_cond2}) ensure that $L({\cal G})$ contains exactly one tree, which we denote by $\mathsf{val}({\cal G})$.
Let $\mathcal{G}$ be an SLCF tree grammar. 
We call the reflexive transitive closure of 
$\leadsto_{\mathcal{G}}$ the \emph{hierarchical order} 
of ${\cal G}$ and denote it by $\leadsto_{\mathcal{G}}^\ast$.

\begin{example}
	Consider the (linear and monadic) SLCF tree grammar $\mathcal{G}=(N,P,S)$ given by the following productions:
	\begin{align*}
		S&\to f\big(A(a),A(b),B\big)\\
		A(y_1)&\to g\big(i(a,a),i(a,y_1)\big)\\
		B&\to h(a)
	\end{align*}
	We have $\mathsf{val}(\mathcal{G})=t$, where $t\in T(\mathcal{F})$ is the tree from Example \ref{ex:tree} on page \pageref{ex:tree}.
\end{example}
SLCF tree grammars can be considered as a generalization of the
well-known DAGs (see, for instance, \cite{Lohrey2006complexity} for a
common definition). Whereas the latter is a structure preserving
compression of a tree by sharing common subtrees (see
Fig.~\ref{fig:subtree} for a depiction), SLCF tree grammars broaden
this concept to the sharing of repeated 
tree patterns in a tree (see Fig.~\ref{fig:subpattern}). 
Actually, a DAG can be considered as a $0$-bounded SLCF tree grammar.
\begin{figure}[t]
        \centering
	\subfigure[{A tree $t$ containing two occurrences of the very same subtree $t'$.}]{
		\begin{tikzpicture}[semithick,scale=.09]
		
			\draw (0,0) -- +(3,0) -- +(12,16) -- +(21,0) -- + (23,0) -- +(32,16) -- +(41,0) -- +(50,0) -- +(25,45) -- +(0,0);
			\draw +(22,28) node {\large $t$};
			
			\foreach \x in {3,23} {
			\begin{scope}[xshift=\x cm, yshift=-4cm]
				\filldraw (9,16) circle (10pt);
				\filldraw (9,20) circle (10pt);
				\draw[densely dotted] (9,16) -- +(0,4);
				\filldraw[fill=black!10, draw=black] (0,0) -- +(9,16) -- +(18,0) -- +(0,0);
				\draw (9,6) node {$t'$};
			\end{scope}
			}
		\end{tikzpicture}
		\label{fig:subtree}
	}
	\hspace{2cm}
	\subfigure[{A tree $t$ containing two occurrences of the tree pattern $p$.}]{
		\begin{tikzpicture}[semithick,scale=.09]
		
			\draw (0,0) -- +(50,0) -- +(25,45) -- +(0,0);
			\draw +(22,28) node {\large $t$};
	
			\foreach \xs/\ys/\lr in {10/5/2, 25/17/-2} {
				\begin{scope}[xshift=\xs cm, yshift=\ys cm]
					\filldraw[fill=black!10, draw=black] (0,0) -- +(12,0) -- +(6,10) -- +(0,0);
					\filldraw (0,0) circle (10pt);
					\draw[thin,densely dotted] (0,0) -- +(-1,-3) -- +(0,0) -- +(1,-3);
					\filldraw +(4,0) circle (10pt);
					\draw[thin,densely dotted] +(4,0) -- +(3,-3) -- +(4,0) -- +(5,-3);
					\filldraw +(8,0) circle (10pt);
					\draw[thin,densely dotted] +(8,0) -- +(7,-3) -- +(8,0) -- +(9,-3);
					\filldraw +(12,0) circle (10pt);
					\draw[thin,densely dotted] +(12,0) -- +(11,-3) -- +(12,0) -- +(13,-3);
					\draw +(6,3) node {$p$};
					\filldraw (6,10) circle (10pt);
					\draw[thin,densely dotted] (6,10) -- +(\lr,3);
				\end{scope}
			}
		\end{tikzpicture}
		\label{fig:subpattern}
	}
\end{figure}

Let $\mathcal{G}=(N,P,S)$ be a linear SLCF tree grammar. We define the function
\begin{equation}
\label{def:savValue}
\mathsf{sav}_{\mathcal{G}}(A)=|\mathsf{ref}_{\mathcal{G}}(A)|\cdot(|t|-\mathsf{rank}(A))-|t| ,
\end{equation}
which computes for every production 
$(A\to t)\in P$ its contribution to a small 
representation of the tree $\mathsf{val}(\mathcal{G})$ by the linear SLCF tree grammar $\mathcal{G}$.
The value $\mathsf{sav}_{\mathcal{G}}(A)$ specifies the number of
edges by which the production with left-hand side $A$ reduces the size
of the grammar $\mathcal{G}$. However, $\mathsf{sav}_{\mathcal{G}}$ is
not restricted to positive values. In particular, for a production
$(A\to t)\in P$ with $|\mathsf{ref}_{\mathcal{G}}(A)|=1$ we have 
$\mathsf{sav}_{\mathcal{G}}(A)=-\mathsf{rank}(A)$. 
Thus, a production which is only referenced once can 
be safely removed from the grammar without increasing the size of $\mathcal{G}$. 

Context-free tree grammars \cite{tata2007} and especially SLCF tree
grammars have been thoroughly studied recently. 
In theory, SLCF tree grammars in theory can be exponentially 
more succinct than DAGs \cite{Lohrey2006complexity}, which already can achieve exponential compression ratios.
Furthermore, in \cite{Lohrey2006complexity} various membership and
complexity problems were considered. It was shown that in many cases
the same complexity bounds hold as for DAGs. In particular, it was
pinpointed that for a given nondeterministic tree automaton ${\cal A}$
and a linear, $k$-bounded SLCF tree grammar ${\cal G}$ it can be
checked in polynomial time if $\mathsf{val}({\cal G})$ is accepted by
${\cal A}$ -- provided that $k$ is a constant. This is a worth
mentioning result since in the context of XML, for instance, 
tree automata are used to type check XML documents against an XML schema (\cf \cite{Murata05taxonomy,Neven02automata}).
Moreover, this result was further improved in
\cite{Lohrey09parameterreduction}, where it was shown 
that every linear SLCF tree grammar can be transformed in polynomial time into a monadic
(and linear) one. Together with the above mentioned result from
\cite{Lohrey2006complexity}, a polynomial time
algorithm for testing if a given nondeterministic tree automaton
accepts a tree given by a linear SLCF tree grammar 
(of arbitrary maximal rank for the nonterminals) can be obtained.

In \cite{Busatto08efficient} the so called BPLEX algorithm was
presented. It produces for a given $0$-bounded SLCF tree grammar
$\mathcal{G}_1$, \ie, $\mathcal{G}_1$ represents a DAG, in time
$O(|\mathcal{G}_1|)$ an equivalent linear SLCF tree grammar
$\mathcal{G}_2$, where
$\mathsf{val}(\mathcal{G}_2)=\mathsf{val}(\mathcal{G}_1)$ and
$\mathcal{G}_2$ is $k$-bounded 
($k$ is an input parameter). Experiments have shown that $|\mathcal{G}_2|$ is approximately 2--3 times smaller than $|\mathcal{G}_1|$.

Moreover, in \cite{Lohrey09parameterreduction} it was proved that the
evaluation problem for core XPath (the navigational part of XPath)
over SLCF tree grammars is PSPACE-complete just as this was proved
earlier for DAG-compressed trees by \textsc{Frick}, \textsc{Grohe} and
\textsc{Koch} in \cite{Frick03query}. The evaluation problem for XPath
asks whether a given node in a given tree is selected by a given XPath
expression. This result is remarkable since with SLCF tree grammars
one achieves better compression 
ratios than with DAGs.

\subsection{XML Terminology}

Regarding XML documents, we use the official terminology introduced in
\cite{Yergeau08}. Thus an XML document contains one or more
\emph{elements} which are either delimited by \emph{start-tags} and
\emph{end-tags} or by an \emph{empty-element tag}. The text between
the start-tag and the end-tag of an element is called the element's
\emph{content}. An element with no content is said to be 
\emph{empty}. There is exactly one element, called \emph{root}, which does not appear in the content of any other element.
\begin{figure}[t]\small
	\[
		\begin{array}{l}
			\begin{array}{l}
				\mbox{\ttfamily <books>}
			\end{array}\\
			\left.\begin{array}{l}
				\mbox{\ttfamily\quad <book>}\\
				\mbox{\ttfamily\qquad <author\,/><title\,/><isbn\,/>}\\
				\mbox{\ttfamily\quad </book>}\\
				\mbox{\ttfamily\quad ...}\\
				\mbox{\ttfamily\quad <book>}\\
				\mbox{\ttfamily\qquad <author\,/><title\,/><isbn\,/>}\\
				\mbox{\ttfamily\quad </book>}
			\end{array}\right\}\enspace\begin{rotate}{90}\mbox{\hskip-3ex 5 times}\end{rotate}\\
			\begin{array}{l}
				\mbox{\ttfamily </books>}
			\end{array}
		\end{array}
	\]
	\caption{An simplified XML document.\label{fig:XmlDocument}}
\end{figure}
\begin{example}\label{ex:XmlDocument}
The simplified XML document from Fig.~\ref{fig:XmlDocument} consists
of 21 elements of the five types \texttt{books}, \texttt{book},
\texttt{author}, \texttt{title} and \texttt{isbn}. The elements of
type \texttt{books} and \texttt{book} are delimited by start- and
end-tags and exhibit element content. The remaining elements are empty
elements delimited by 
empty-element tags. The root of the XML document is the element of type \texttt{books}.
\end{example}
The \emph{name} in the start- and end-tags of an 
element give the element's \emph{type}. Elements can 
specify \emph{attributes} by using name-value pairs. Consider for instance the element
\begin{verbatim}
      <phone prefix="012">3456</phone>
\end{verbatim}
exhibiting one attribute specification with attribute name \texttt{prefix} and attribute value \texttt{012}.

In addition to these terms we denote by \emph{XML document tree} 
the nested structure of elements which is left after 
removing all character data and attribute specifications from an XML document.

\subsection{Binary Tree Model}\label{sec:binaryTreeModel}

An XML document tree can be considered as an unranked tree, \ie, nodes
with the same label possibly have a varying number of
children. Figure~\ref{fig:XmlDocumentTree} shows the XML document tree
of the XML document from Example \ref{ex:XmlDocument}. In our case,
the XML document tree is a ranked tree, \ie, all nodes with the same
label exhibit the same number of children. However, the XML document
might as well have contained an element of type \verb|book| exhibiting
a second \verb|author| child element. In this case, 
we would have not obtained a ranked tree.
\begin{figure*}[t]
	\centering%\scriptsize
	\begin{tikzpicture}[->,>=stealth',semithick]
		\tikzstyle{level 1}=[level distance=1.2cm, sibling distance=3cm]
		\tikzstyle{level 2}=[level distance=1.2cm, sibling distance=0.9cm]

		\node {$\mathsf{books}$}
			child {node {$\mathsf{book}$}
				child {node {$\mathsf{author}$}}
				child {node {$\mathsf{title}$}}
				child {node {$\mathsf{isbn}$}}
			}
			child {node {$\mathsf{book}$}
				child {node {$\mathsf{author}$}}
				child {node {$\mathsf{title}$}}
				child {node {$\mathsf{isbn}$}}
			}
			child {node {$\mathsf{book}$}
				child {node {$\mathsf{author}$}}
				child {node {$\mathsf{title}$}}
				child {node {$\mathsf{isbn}$}}
			}
			child {node {$\mathsf{book}$}
				child {node {$\mathsf{author}$}}
				child {node {$\mathsf{title}$}}
				child {node {$\mathsf{isbn}$}}
			}
			child {node {$\mathsf{book}$}
				child {node {$\mathsf{author}$}}
				child {node {$\mathsf{title}$}}
				child {node {$\mathsf{isbn}$}}
			}
		;
	\end{tikzpicture}
	\caption{XML document tree of the XML document listed in Fig.~\ref{fig:XmlDocument}}\label{fig:XmlDocumentTree}
\end{figure*}
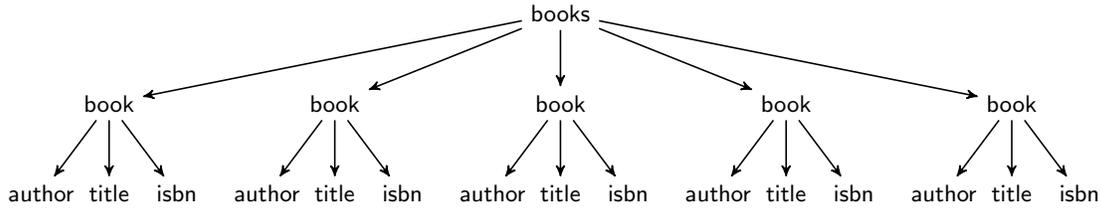

In the next section we will learn that our Re-pair for Trees algorithm
operates on ranked trees only. Therefore, in general, a transformation
of an XML document tree becomes necessary. A common way of modeling
such a tree in a ranked way is to transform it into a binary
$\mathcal{F}$-labeled ordered tree $t$ by encoding first-child and
next-sibling relations. 
In fact, 
\begin{itemize}
	\item the first child element of an XML element becomes the left child of the node representing its parent element and 
	\item the right sibling element of another element becomes the right-child of the node representing its left sibling (\cf Fig.\ \ref{fig:binaryRepresentation}).
\end{itemize}
\begin{figure}[t]
	\centering
	\begin{tikzpicture}
		[bend angle=25, node distance=0.25cm,text height=1.5ex,
		place/.style={},
		pre/.style={<-,>=stealth'},
		post/.style={->,>=stealth'}]
		\node[place] (books) {$\mathsf{books}^{10}$};
		
		\node[place] (book1) [below=of books] {$\mathsf{book}^{11}$}
			edge [pre] (books);
		\node[place] (author1) [below=of book1] {$\mathsf{author}^{01}$}
			edge [pre] (book1);
		\node[place] (title1) [below=of author1] {$\mathsf{title}^{01}$}
			edge [pre] (author1);
		\node[place] (isbn1) [below=of title1] {$\mathsf{isbn}^{00}$}
			edge [pre] (title1);
			
		\node[place] (book2) [right=of book1] {$\cdots$}
			edge [pre] (book1);
			
		\node[place] (book4) [right=of book2] {$\mathsf{book}^{11}$}
			edge [pre] (book2);
		\node[place] (author4) [below=of book4] {$\mathsf{author}^{01}$}
			edge [pre] (book4);
		\node[place] (title4) [below=of author4] {$\mathsf{title}^{01}$}
			edge [pre] (author4);
		\node[place] (isbn4) [below=of title4] {$\mathsf{isbn}^{00}$}
			edge [pre] (title4);
			
		\draw[decorate,decoration={brace,mirror}] ([xshift=-5pt]isbn1.south) -- ([xshift=5pt]isbn4.south) node[midway,sloped,below=2pt] {$4$ times};

		\node[place] (book5) [right=of book4] {$\mathsf{book}^{10}$}
			edge [pre] (book4);
		\node[place] (author5) [below=of book5] {$\mathsf{author}^{01}$}
			edge [pre] (book5);
		\node[place] (title5) [below=of author5] {$\mathsf{title}^{01}$}
			edge [pre] (author5);
		\node[place] (isbn5) [below=of title5] {$\mathsf{isbn}^{00}$}
			edge [pre] (title5);
	\end{tikzpicture}
	\caption{Binary tree representation of the XML 
            document tree from Fig.~\ref{fig:XmlDocumentTree}.\label{fig:binaryRepresentation}}
\end{figure}
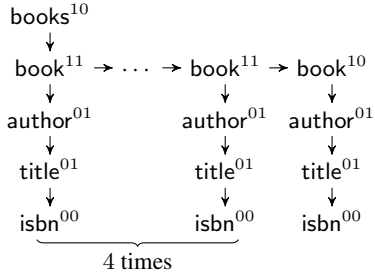
Note that a node representing a leaf (resp.~a last sibling) of the XML
document has no left (resp.~no right) child in the binary tree model
representation. Therefore $\mathcal{F}$ does not consist of the
element types of the XML document but of special versions of the
element types indicating that the left, the right, both or no children
are missing. In Fig.~\ref{fig:binaryRepresentation} this is denoted by
superscripts at the end of the element types. These superscripts are
listed in Table \ref{tab:superscriptsAndTheirMeanings} 
together with their meanings.
\begin{table}[h]
	\centering
%	\begin{tabular}{cl}
	\begin{tabular*}{3.6cm}{c@{\extracolsep{\fill}}l}
		\toprule
		Superscript&Meaning\\
		\midrule
		00&no children\\
		10&no right child\\
		01&no left child\\
		11&two children\\
		\bottomrule
	\end{tabular*}
	\caption{The superscripts and their meanings.}\label{tab:superscriptsAndTheirMeanings}
\end{table}
Let us point out that another way of preserving the rankedness along
with circumventing the introduction of special labels with a lower
rank is the introduction of placeholder nodes. These can be used to
indicate missing left or right children. However, our experiments
showed that our implementation of Re-pair for Trees achieves slightly
less competitive 
compression results in this setting.

In \cite{Busatto08efficient} it was stated that the binary tree model
allows access to the next-in-preorder and previous-in-preorder node in
$O(\mathit{depth})$, where $\mathit{depth}$ refers to the longest path
from the root of the XML document to one of its leaves. Furthermore,
in \cite{Milo03typechecking} it was demonstrated that XML query
languages can be readily evaluated on the binary tree model.

\section{Re-Pair for Trees}\label{ch:repairAlgorithm}

In this section we study the Re-pair for Trees algorithm in detail.
It consists of two steps, namely, a \emph{replacement step} and a \emph{pruning step}. Furthermore, a detailed example of a run of our algorithm is presented. Finally, we investigate the impact of a possible restriction on the maximal rank allowed for nonterminals.

\subsection{Digrams}

In order to be able to elaborate on our Re-pair for Trees algorithm we need the following definitions. Recall that we have fixed a ranked alphabet $\mathcal{F}$ of function symbols, a set $\mathcal{N}$ of nonterminals and a set $\mathcal{Y}$ of parameters. We define the set of triples
\[\Pi=\bigcup_{a\in\mathcal{F}\cup\mathcal{N}}\{a\}\times\{1,2,\ldots,\mathsf{rank}(a)\}\times(\mathcal{F}\cup\mathcal{N})\enspace\mbox{.}\]
A \emph{\tp} is a triple $\alpha=(a,i,b)\in\Pi$. The symbol $a$ is called the \emph{parent symbol of the \tp $\alpha$} and $b$ is called the \emph{child symbol of the \tp $\alpha$}, respectively. We define
\begin{align*}
	\mathsf{par}(\alpha)&=\mathsf{rank}(a)+\mathsf{rank}(b)-1\text{ and}\\
	\mathsf{pat}(\alpha)&=a\big(y_1,\ldots,y_{i-1},b(y_i,\ldots,y_{j-1}),y_j,\ldots,y_{\mathsf{par}(\alpha)}\big)\enspace\mbox{,}
\end{align*}
where $j=i+\mathsf{rank}(b)$ and $y_1,y_2,\ldots,y_{\mathsf{par}(\alpha)}\in\mathcal{Y}$. Let $m\in\mathbb{N}\cup\{\infty\}$. We further define the set
\[\Pi_m=\{\alpha\in\Pi\mid\mathsf{par}(\alpha)\leq m\}\enspace\mbox{.}\]
Obviously, it holds that $\Pi_\infty=\Pi$.
We can consider $\mathsf{pat}(\alpha)$ as the tree pattern which is represented by the \tp $\alpha$. We usually denote \tps by possibly indexed lowercase letters $\alpha,\alpha_1,\alpha_2,\ldots,\beta,\ldots$ of the Greek alphabet.
An \emph{occurrence} of the \tp $\alpha\in\Pi$ within the tree 
$t=(\mathsf{dom}_t,\lambda_t)\in T(\mathcal{F}\cup\mathcal{N},\mathcal{Y})$ is a node $v\in\mathsf{dom}_t$ at which a subtree
	\[\mathsf{pat}(\alpha)[y_1/t_1,y_2/t_2,\ldots,y_{\mathsf{par}(\alpha)}/t_{\mathsf{par}(\alpha)}]\enspace\mbox{,}\]
with $t_1,t_2,\ldots,t_{\mathsf{par}(\alpha)}\in T(\mathcal{F}\cup\mathcal{N},\mathcal{Y})$, is rooted. 
The \emph{set of all occurrences of the \tp $\alpha$ in $t$} is denoted by 
$\mathsf{OCC}_t(\alpha)\subseteq\mathsf{dom}_t$.

Let $\alpha=(a,i,b)\in\Pi$ and $t\in T(\mathcal{F}\cup\mathcal{N},\mathcal{Y})$. 
Two occurrences $v, w \in \mathsf{OCC}_t(\alpha)$ are \emph{overlapping} 
if one of the following equations holds: $v=w$, $vi=w$ or $wj=v$.  
Otherwise, \ie, if $v$ and $w$ 
are not overlapping, $v$ and $w$ are said to be \emph{non-overlapping}.
A subset $\sigma\subseteq\mathsf{OCC}_t(\alpha)$ is said to be overlapping 
if there exist overlapping $v,w\in\sigma$, otherwise it is called
\emph{non-overlapping}.
It is easy to see that the set $\mathsf{OCC}_t(\alpha)$ is non-overlapping if $a\neq b$. 
In contrast, if we have $a=b$, the set $\mathsf{OCC}_t(\alpha)$ potentially 
contains overlapping occurrences. Consider the following example:
\begin{example}
Let $t\in T(\mathcal{F})$ be the tree depicted in Fig.~\ref{fig:treeOfOverlappingOccs} and let $\alpha=(f,2,f)$. Hence, $\{\varepsilon,2,22\}\subseteq\mathsf{OCC}_t(\alpha)$, where on the one hand $\varepsilon$ and $2$ and on the other hand $2$ and $22$ are overlapping occurrences of $\alpha$.
\end{example}
\begin{figure}[t]
	\centering
		\begin{tikzpicture}[->,>=stealth',semithick,level distance=0.8cm, sibling distance=1.5cm]
			\tikzstyle{fan}=[anchor=north,isosceles triangle, shape border uses incircle,inner sep=0.5pt,shape border rotate=90,draw]
			
			\node {$f$} [child anchor=north] % [grow=-110]
				child {node [fan] {$t_1$}}
				child {node [below=2pt] {$f$}
					child {node [fan] {$t_2$}}
					child {node [below=5pt] {$f$}
						child {node [fan] {$t_3$}}
						child {node [below=2pt] {$f$}
							child {node [fan] {$t_4$}}
							child {node [fan] {$t_5$}}
						}
					}
				}
			;
		\end{tikzpicture}
	\caption{Tree $t\in T(\mathcal{F})$ consisting of nodes labeled by the terminal $f\in\mathcal{F}_2$ and the subtrees $t_1,t_2,\ldots,t_5\in T(\mathcal{F})$. We have to deal with overlapping occurrences of the \tp $(f,2,f)$.}\label{fig:treeOfOverlappingOccs}
\end{figure}
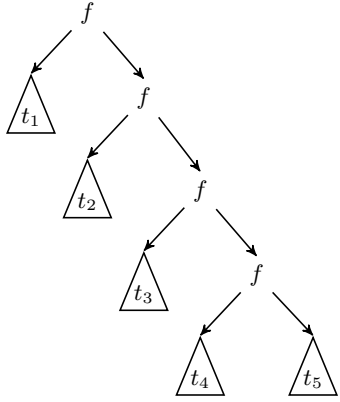
Let $\alpha\in\Pi$ and $t\in T(\mathcal{F}\cup\mathcal{N},\mathcal{Y})$. Let $\sigma\subseteq\mathsf{OCC}_t(\alpha)$ be a non-overlapping set. Furthermore, let us assume that $\sigma\cup\{v\}$ is overlapping for all $v\in\mathsf{OCC}_t(\alpha)\setminus\sigma$, \ie, $\sigma$ is maximal with respect to inclusion among non-overlapping subsets. Then $\sigma$ is not necessarily maximal with respect to cardinality.
\begin{example}\label{ex:maximality}
	Consider the tree $t\in T(\mathcal{F})$ which is depicted in Fig.~\ref{fig:treeOfOverlappingOccs}. Let $\alpha=(f,2,f)\in\Pi$. We have $\mathsf{OCC}_{t}(\alpha)=\{\varepsilon,2,22\}$. Let $\sigma=\{2\}\subseteq\mathsf{OCC}_{t}(\alpha)$. The set $\sigma$ is non-overlapping and $\sigma\cup\{v\}$ is overlapping for all $v\in\mathsf{OCC}_{t}(\alpha)\setminus\sigma$. However, $\sigma$ is not maximal with respect to cardinality. Consider the non-overlapping subset $\sigma'=\{\varepsilon,22\}\subseteq\mathsf{OCC}_{t}(\alpha)$. We have $|\sigma|<|\sigma'|$.
\end{example}
Example \ref{ex:maximality} shows us that we cannot choose an arbitrary subset $\sigma\subseteq\mathsf{OCC}_t(\alpha)$ which is non-overlapping and maximal with respect to inclusion to obtain a set which is maximal with respect to cardinality. 
Let us also point out that the set $\mathsf{OCC}_t(\alpha)$ may contain more than one maximal (with respect to cardinality) non-overlapping subset. 
\begin{example}
Consider the tree $f(f(f(a)))$ over the ranked alphabet $\mathcal{F}$. The sets $\{\varepsilon\}$ and $\{1\}$ are both maximal with respect to cardinality.
\end{example}
\begin{figure}[tb]
\lstset{emph={FUNCTION,ENDFUNC,while,endwhile,true,for,if,else,then,do,endif,endfor,return}, emphstyle=\bfseries,morecomment=[l]{//},commentstyle=\color{gray}}
\begin{lstlisting}
FUNCTION next-in-postorder((*@$t,v$@*)) // let (*@\color{gray}$t=(\mathsf{dom}_t,\lambda_t)$@*)
	if ((*@$v=\varepsilon$@*)) then
		(*@$v:=\;$@*)walk-down((*@$t$@*), (*@$v$@*));
	else	
		(*@$i:=\mathsf{index}(v)+1$@*);
		(*@$v:=\mathsf{parent}(v)$@*);
		
		if ((*@$\mathsf{rank}(\lambda_t(v))\geq i$@*)) then
			(*@$v:=vi$@*);
			(*@$v:=\;$@*)walk-down((*@$t$@*), (*@$v$@*));	
		endif
	endif
	return (*@$v$@*);
ENDFUNC

FUNCTION walk-down((*@$t$@*), (*@$v$@*)) // let (*@$\color{gray}t=(\mathsf{dom}_t,\lambda_t)$@*)
	while (true) do
		if ((*@$\mathsf{rank}(\lambda_t(v))>0$@*)) then
			(*@$v:=v1$@*);
		else
			return (*@$v$@*);
		endif
	endwhile
ENDFUNC
\end{lstlisting}
	\caption{The algorithm which is used to traverse a tree in postorder.}\label{lst:traversalAlgorithm}
\end{figure}
\begin{figure}[tb]
	\lstset{emph={FUNCTION,ENDFUNC,return,for,if,then,do,endif,endfor,while,endwhile}, emphstyle=\bfseries,morecomment=[l]{//},commentstyle=\color{gray}}
	\begin{lstlisting}
FUNCTION retrieve-occurrences((*@$t,\alpha$@*)) // let (*@\color{gray}$\alpha=(a,i,b)$@*)
	(*@$\mathsf{occ}_t(\alpha):=\emptyset$@*); (*@$v:=\varepsilon$@*);
	while (true) do
		(*@$v:=\;$@*)next-in-postorder((*@$t$@*), (*@$v$@*));
		if ((*@$v\in\mathsf{OCC}_t(\alpha)\;\wedge\;vi\notin\mathsf{occ}_t(\alpha)$@*)) then
			(*@$\mathsf{occ}_t(\alpha):=\mathsf{occ}_t(\alpha)\cup\{v\}$@*)
		endif
		if ((*@$v=\varepsilon$@*)) then
			return (*@$\mathsf{occ}_t(\alpha)$@*);
		endif
	endwhile
ENDFUNC
	\end{lstlisting}
	\caption{The function \texttt{retrieve-occurrences} which is used to construct the set $\mathsf{occ}_t(\alpha)$ for a \tp $\alpha\in\Pi$ and a tree $t\in T(\mathcal{F}\cup\mathcal{N})$.}\label{lst:functionRetrieveOccurrences}
\end{figure}
The algorithm \texttt{retrieve-occurrences($t,\alpha$)}
from Fig.~\ref{lst:functionRetrieveOccurrences} computes one non-over\-lapping subset of $\mathsf{OCC}_t(\alpha)$ which we denote by $\mathsf{occ}_t(\alpha)$. Lemma \ref{lemma:occtIsMaximal} ascertains that this subset is maximal with respect to cardinality.
Using the function \verb|next-in-postorder| listed in Fig.~\ref{lst:traversalAlgorithm} we traverse the tree $t$ in postorder. We begin by passing the parameters $t$ and $\varepsilon$ and obtain the first node $u\in\mathsf{dom}_t$ of $t$ in postorder. The second node in post order is obtained by passing the parameters $t$ and $u$. This step can be repeated to traverse the whole tree $t$ in postorder.
For every node $v$ which is encountered during the postorder traversal it is checked if $v$ is an occurrence of $\alpha$ and if it is non-overlapping with all occurrences already contained in the current set $\mathsf{occ}_t(\alpha)$. If both conditions are fulfilled, the node $v$ is added to $\mathsf{occ}_t(\alpha)$. 

Now, let us assume that we have constructed the set $\mathsf{occ}_t(\alpha) \subseteq\mathsf{OCC}_t(\alpha)$ using the function \texttt{retrieve-occurrences}. If $a\neq b$ in $\alpha=(a,i,b)$ we have 
$\mathsf{occ}_t(\alpha)=\mathsf{OCC}_t(\alpha)$. In the following, we show that the subset $\mathsf{occ}_t(\alpha)\subseteq\mathsf{OCC}_t(\alpha)$ is maximal with respect to cardinality.
\begin{lemma}\label{lemma:occtIsMaximal}
	Let $\alpha\in\Pi$ and $t\in T(\mathcal{F}\cup\mathcal{N},\mathcal{Y})$. Let $\sigma\subseteq\mathsf{OCC}_t(\alpha)$ be non-overlapping and maximal with respect to cardinality. Then the equation $|\mathsf{occ}_t(\alpha)|=|\sigma|$ holds.
\end{lemma}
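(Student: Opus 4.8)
The plan is to reduce everything to the case $a = b$ (the case $a \neq b$ being trivial, since then $\mathsf{occ}_t(\alpha) = \mathsf{OCC}_t(\alpha)$ and this set is already non-overlapping). So assume $\alpha = (a,i,a)$. The key structural observation is that the overlap relation on $\mathsf{OCC}_t(\alpha)$ has a very restricted shape: two occurrences $v, w$ overlap only if one is the immediate $i$-th child of the other, i.e.\ $w = vi$ or $v = wi$ (the third clause $wj = v$ with $j = i + \mathsf{rank}(a)$ cannot occur when $b = a$ sits at position $i$, or reduces to the same thing — I would check this carefully). Consequently, the occurrences decompose into maximal ``chains'' $v, vi, vii, \ldots$ (paths all of whose edges have index $i$), and two occurrences in different chains are automatically non-overlapping. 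Within a single chain of length $\ell$ (i.e.\ $\ell$ nodes all belonging to $\mathsf{OCC}_t(\alpha)$, forming a path $v, vi, \dots, vi^{\ell-1}$), a non-overlapping subset is exactly an independent set in a path graph on $\ell$ vertices, so its maximum cardinality is $\lceil \ell/2 \rceil$, and the unique way to achieve it is to take every other element starting from the top (the shallowest node of the chain).

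The main work is then to show that \texttt{retrieve-occurrences}, which walks the tree in postorder and greedily adds $v$ to $\mathsf{occ}_t(\alpha)$ whenever $v \in \mathsf{OCC}_t(\alpha)$ and $vi \notin \mathsf{occ}_t(\alpha)$, actually picks exactly $\lceil \ell/2 \rceil$ elements out of each chain. The crucial point is the postorder: within a chain, the deepest node $vi^{\ell-1}$ is visited first, then $vi^{\ell-2}$, and so on up to $v$ last; at each node the guard ``$vi \notin \mathsf{occ}_t(\alpha)$'' looks at the child already processed. I would prove by induction along the chain (from the bottom up) that the selected nodes inside the chain are exactly $vi^{\ell-1}, vi^{\ell-3}, vi^{\ell-5}, \ldots$, which has cardinality $\lceil \ell/2 \rceil$ — matching the optimum. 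One has to argue that the postorder traversal really does visit the nodes of a chain consecutively in this bottom-up order with no occurrence of $\alpha$ slipped in between that would disturb the parity (this follows because any node strictly between $vi^{k+1}$ and $vi^k$ in postorder lies in the subtree rooted at $vi^{k+1}$ below its first child, hence cannot be $vi^k$'s relevant child and, more importantly, cannot lie on the chain), and that the guard condition correctly references only the chain-child.

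Finally I would assemble the pieces: both $\mathsf{occ}_t(\alpha)$ and any cardinality-maximal non-overlapping $\sigma$ are disjoint unions over the chains of their restrictions to each chain; the restriction of $\sigma$ to a chain of length $\ell$ has size at most $\lceil \ell/2 \rceil$ by the independent-set bound, while the restriction of $\mathsf{occ}_t(\alpha)$ has size exactly $\lceil \ell/2 \rceil$ by the traversal analysis; summing over all chains gives $|\mathsf{occ}_t(\alpha)| = \sum_{\text{chains}} \lceil \ell/2 \rceil \geq |\sigma|$, and since $\mathsf{occ}_t(\alpha)$ is itself non-overlapping the reverse inequality $|\mathsf{occ}_t(\alpha)| \le |\sigma|$ holds by maximality of $\sigma$. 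Hence $|\mathsf{occ}_t(\alpha)| = |\sigma|$.

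I expect the main obstacle to be the bookkeeping in the traversal argument: precisely formalizing ``the postorder visits a chain consecutively from the bottom up'' and verifying that the greedy guard, which tests membership of $vi$ in the set built so far, behaves exactly as the path-graph greedy that picks alternate vertices. The overlap-relation clauses (especially the $wj = v$ case with $j = i + \mathsf{rank}(b)$) also need a careful case check to confirm that when $a = b$ they genuinely collapse to the parent/child-at-index-$i$ relation and do not introduce longer-range overlaps that would break the clean chain decomposition.
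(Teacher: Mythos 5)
Your proposal is correct and takes essentially the same route as the paper: the paper builds the graph whose vertices are $\mathsf{OCC}_t(\alpha)\cup\{vi\mid v\in\mathsf{OCC}_t(\alpha)\}$ and whose edges are the occurrences, notes it is a disjoint union of paths, and identifies cardinality-maximal non-overlapping subsets with maximum matchings achieved by the bottom-up greedy of the postorder traversal --- which is exactly your chain decomposition, with independent sets of size $\lceil \ell/2\rceil$ per chain playing the role of matchings (the line-graph view). The only blemishes are immaterial: your claim that the optimum is \emph{uniquely} attained by taking every other element from the top is false for even-length chains (and clashes with your later, correct, description of the algorithm selecting alternate elements from the bottom), and the parenthetical justification of why no chain element is visited between $vi^{k+1}$ and $vi^{k}$ is garbled (those nodes lie in subtrees of higher-indexed children of $vi^{k}$, not below $vi^{k+1}$), but neither affects the counting argument, which only needs the value $\lceil \ell/2\rceil$ per chain.
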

\begin{proof}
	In the following we briefly write "maximal" for "maximal with
        respect to cardinality". Let $\alpha=(a,i,b)\in\Pi$, $t\in
        T(\mathcal{F}\cup\mathcal{N},\mathcal{Y})$ and $\sigma$ as
        above. The graph $(V,E)$ with
$$
		V=\mathsf{OCC}_t(\alpha)\cup\{vi\mid v\in\mathsf{OCC}_t(\alpha)\} \text{ and }
		E=\{(v,vi)\mid v\in\mathsf{OCC}_t(\alpha)\}
$$
is a disjoint union of paths. Maximal non-overlapping subsets of $\mathsf{OCC}_t(\alpha)$ exactly correspond to maximum matchings in $(V,E)$. Clearly, a path with an odd number of edges has a unique maximum matching, whereas a path with an even number of edges has two maximum matchings: one containing the first edge (in direction from the root) and one containing the last edge on the path. Intuitively, the algorithm from Fig.~\ref{lst:functionRetrieveOccurrences} finds the maximum matching in $(V,E)$ which contains for every path with an even number of edges the last edge in direction from the root.
\qed
\end{proof}
Let $t\in T(\mathcal{F}\cup\mathcal{N},\mathcal{Y})$, $\alpha=(a,i,b)\in\Pi$ and $A\in\mathcal{N}_{\mathsf{par}(\alpha)}$. By $t[\alpha/A]$ we denote the tree which is obtained by replacing all occurrences from $\mathsf{occ}_t(\alpha)$ in the tree $t$ by the nonterminal $A$ (in parallel). More precisely, we replace every subtree
\[\mathsf{pat}(\alpha)[y_1/t_1,y_2/t_2,\ldots,y_{\mathsf{par}(\alpha)}/t_{\mathsf{par}(\alpha)}]\mbox{,}\]
where $t_1,t_2,\ldots,t_{\mathsf{par}(\alpha)}\in T(\mathcal{F}\cup\mathcal{N},\mathcal{Y})$, which is rooted at an occurrence $v\in\mathsf{occ}_t(\alpha)$ by a new subtree $A(t_1,t_2,\ldots,t_{\mathsf{par}(\alpha)})$.
\begin{example}
Consider the tree $t\in T(\mathcal{F})$ which is depicted in Fig.~\subref{fig:replacementProcessBefore}. We have $\mathsf{occ}_t(\alpha)=\{\varepsilon,11,12,21,22\}$,	where $\alpha=(f,2,f)$. By replacing the \tp $\alpha$ in $t$ by a nonterminal $A\in\mathcal{N}_3$ we obtain the tree $t[\alpha/A]$ which is depicted in Fig.~\subref{fig:replacementProcessAfter}.
\end{example}
\begin{figure}[tb]
	\begin{center}
	\subfigure[The tree $t\in T(\mathcal{F})$.]{
		\parbox{6.3cm}{
		\beginpgfgraphicnamed{figReplacementProcessBefore}
		\begin{tikzpicture}[->,>=stealth']
			\tikzstyle{level 1}=[level distance=0.8cm, sibling distance=3.2cm]
			\tikzstyle{level 2}=[level distance=0.8cm, sibling distance=1.6cm]
			\tikzstyle{level 3}=[level distance=0.8cm, sibling distance=0.8cm]
			\tikzstyle{level 4}=[level distance=0.8cm, sibling distance=0.4cm]
			\node {$f$}
				child {node {$f$}
					child {node {$f$}
						child {node {$f$}
							child {node {$a$}}
							child {node {$a$}}
						}
						child {node {$f$}
							child {node {$a$}}
							child {node {$a$}}
						}
					}
					child {node {$f$}
						child {node {$f$}
							child {node {$a$}}
							child {node {$a$}}
						}
						child {node {$f$}
							child {node {$a$}}
							child {node {$a$}}
						}
					}
				}
				child {node {$f$}
					child {node {$f$}
						child {node {$f$}
							child {node {$a$}}
							child {node {$a$}}
						}
						child {node {$f$}
							child {node {$a$}}
							child {node {$a$}}
						}
					}
					child {node {$f$}
						child {node {$f$}
							child {node {$a$}}
							child {node {$a$}}
						}
						child {node {$f$}
							child {node {$a$}}
							child {node {$a$}}
						}
					}
				}
			;
		\end{tikzpicture}
		\endpgfgraphicnamed
		}
		\label{fig:replacementProcessBefore}
	}
	\hspace{1cm}
	\subfigure[{The tree $t[\alpha/A]$.}]{
		\parbox{6.3cm}{
		\beginpgfgraphicnamed{figReplacementProcessAfter}
		\begin{tikzpicture}[->,>=stealth']
			\tikzstyle{level 1}=[level distance=0.8cm, sibling distance=3.2cm]
			\tikzstyle{level 2}=[level distance=0.8cm, sibling distance=1.6cm]
			\tikzstyle{level 3}=[level distance=0.8cm, sibling distance=0.8cm]
			\tikzstyle{level 4}=[level distance=0.8cm, sibling distance=0.4cm]
			\node {$A$}
				child[sibling distance=2.7cm] {node {$f$}
					child {node {$A$}
						child[sibling distance=0.4cm] {node {$f$}
							child {node {$a$}}
							child {node {$a$}}
						}
						child {node {$a$}}
						child[sibling distance=0.4cm] {node {$a$}}
					}
					child {node {$A$}
						child[sibling distance=0.4cm] {node {$f$}
							child {node {$a$}}
							child {node {$a$}}
						}
						child {node {$a$}}
						child[sibling distance=0.4cm] {node {$a$}}
					}
				}
				child {node {$A$}
					child[sibling distance=0.4cm] {node {$f$}
						child[sibling distance=0.4cm] {node {$a$}}
						child[sibling distance=0.4cm] {node {$a$}}
					}
					child {node {$a$}}
					child[sibling distance=0.4cm] {node {$a$}}
				}
				child[sibling distance=1.6cm] {node {$A$}
					child[sibling distance=0.4cm] {node {$f$}
						child[sibling distance=0.4cm] {node {$a$}}
						child[sibling distance=0.4cm] {node {$a$}}
					}
					child {node {$a$}}
					child[sibling distance=0.4cm] {node {$a$}}
				}
			;
		\end{tikzpicture}
		\endpgfgraphicnamed
		}
		\label{fig:replacementProcessAfter}
	}
	\end{center}
	\stepcounter{figure}
\end{figure}
For $t\in T(\mathcal{F}\cup\mathcal{N})$ and $m\in\mathbb{N}\cup\{\infty\}$, we define
\[\mathsf{max}_m(t)=\begin{cases}
	\alpha\in\Pi_m&\text{if }\mathsf{occ}_t(\alpha)\neq\emptyset\text{ and }
		\forall\beta\in\Pi_m:|\mathsf{occ}_t(\beta)|\leq|\mathsf{occ}_t(\alpha)|\\
	\mathsf{undefined}&\text{if }\forall\alpha\in\Pi_m:\mathsf{occ}_t(\alpha)=\emptyset
\end{cases}\]
The function $\mathsf{max}_m:T(\mathcal{F}\cup\mathcal{N})\to\Pi$ associates with every tree $t\in T(\mathcal{F}\cup\mathcal{N})$ a \tp $\alpha\in\Pi_m$ which occurs in $t$ most frequently (with respect to all \tps from $\Pi_m$). If there are multiple most frequent \tps, we can choose any of them. In contrast, we have $\mathsf{max}_m(t)=\mathsf{undefined}$ if there is no most frequent \tp. If $m=\infty$ there is no most frequent pair if and only if the tree $t$ consists of exactly one node. Now let us assume that $m\neq\infty$. We have $\mathsf{max}_m(t)=\mathsf{undefined}$ if and only if $t$ consists of exactly one node or if for all \tps $\alpha$ occurring in $t$ it holds that $\alpha\notin\Pi_m$. 

In the sequel, if we do not specify the maximal rank allowed for a nonterminal, we always assume that $m=\infty$. For convenience we write $\mathsf{max}(t)$ instead of $\mathsf{max}_\infty(t)$, \ie, we omit the symbol $\infty$.

\subsection{Replacement of \ltps}\label{sec:replacementStep}

In this section we introduce the first step of our Re-pair for Trees algorithm, namely, the \emph{replacement step}. Let $m\in\mathbb{N}\cup\{\infty\}$ be the maximal rank allowed for a nonterminal\footnote{Regarding our implementation of the Re-pair for Trees algorithm which is described in Sect.~\ref{ch:implementationDetails}, $m$ is a parameter which can be specified by the user.} and let the tree $t=(\mathsf{dom}_t,\lambda_t)\in T(\mathcal{F})$ be the input of our algorithm.

We describe a run of the Re-pair for Trees algorithm by a sequence of $h+1$ linear SLCF tree grammars
$\mathcal{G}_0,\mathcal{G}_1,\ldots,\mathcal{G}_{h}$, where $h\in\mathbb{N}$. For every $i\in\{0,1,\ldots,h\}$ we have $\mathcal{G}_i=(N_i,P_i,S_i)$, $(S_i\to t_i)\in P_i$, $\alpha_i=\mathsf{max}_m(t_i)$ and $\mathsf{val}(\mathcal{G}_i)=t$. 
The grammar $\mathcal{G}_0$ contains solely the start production $(S_0\to t_0)$, where $t_0=t$. We obtain the grammar $\mathcal{G}_{i+1}$ by replacing the \tp $\alpha_i$ in the right-hand side of $\mathcal{G}_i$'s start production $t_i$ by a new nonterminal $A_{i+1}\in\mathcal{N}_{\mathsf{par}(\alpha_i)}\setminus N_i$ ($0\leq i\leq h-1$). We set
\begin{align*}
	N_{i+1}&=(N_i\setminus\{S_i\})\cup\{S_{i+1},A_{i+1}\}\mbox{ and}\\
	P_{i+1}&=(P_i\setminus\{(S_i\to t_i)\})\cup\{\big(A_{i+1}\to\mathsf{pat}(\alpha_i)\big),(S_{i+1}\to t_{i+1})\}\enspace\mbox{,}
\end{align*}
where $t_{i+1}=t_i[\alpha_i/A_{i+1}]$.

The computation stops if there is no \tp $\alpha\in\Pi_m$ occurring at least twice in the start production of the current grammar, \ie, either the equation $|\mathsf{occ}_{t_h}(\mathsf{max}_m(t_h))|=1$ or the equation $\mathsf{max}_m(t_h)=\mathsf{undefined}$ holds. In contrast, for all $0\leq i\leq h-1$ we have $|\mathsf{occ}_{t_i}(\mathsf{max}_m(t_i))|>1$.

Note that the linear SLCF tree grammar $\mathcal{G}_h$ is almost in Chomsky normal form (CNF) as it is defined in \cite{Lohrey09parameterreduction}. By appropriately transforming the right-hand side of $S_h$ (as it is described in the proof of Proposition 5 of \cite{Lohrey09parameterreduction}) and introducing a production with right-hand side $a(y_1,\ldots,y_n)$ for every terminal $a\in\mathcal{F}_n$ ($n\in\mathbb{N}$) we would obtain a linear SLCF tree grammar which perfectly meets the requirements of the CNF.

The linear SLCF tree grammar $\mathcal{G}_h$ can only be considered an intermediate result, since it potentially consists of productions which do not contribute to a compact representation of the input tree $t$. Therefore, we get rid of unprofitable productions by eliminating them during the so-called \emph{pruning step}. The latter, which is described in the next section, is executed directly after the replacement step.

\subsection{Pruning the Grammar}\label{sec:pruningStep}

Let $\mathcal{G}=(N,P,S)$ be a linear SLCF tree grammar. We \emph{eliminate} a production $(A\to t)$ from $P$ as follows: 
	\begin{enumerate}[(1)]
		\item For every reference $(t',v)\in\mathsf{ref}_{\mathcal{G}}(A)$ we replace the subtree $A(t_1,t_2,\ldots,t_n)$ rooted at $v\in\mathsf{dom}_{t'}$ by the tree
			\[t[y_1/t_1,y_2/t_2,\ldots,y_n/t_n]\enspace\mbox{,}\]
			where $t_1,\ldots,t_n\in T(\mathcal{F}\cup\mathcal{N},\mathcal{Y})$ and $n=\mathsf{rank}(A)$. 
		\item We update the set of productions by setting 
			\[P:=P\setminus\{(A\to t)\}\enspace\mbox{.}\]
	\end{enumerate}
Let $\mathcal{G}=(N,P,S)$ be the linear SLCF tree grammar generated in the replacement step of our algorithm, \ie, we have $\mathcal{G}=\mathcal{G}_h$. Let $n=|N|$ and let
\[\omega=B_1,B_2,\ldots,B_{n-1},B_n\]
be a sequence of all nonterminals of $N$ in hierarchical order, \ie, the following conditions hold:
\begin{enumerate}[(i)]
	\item $B_n=S$
	\item $\forall 1\leq i<j\leq n:B_j\not\leadsto_{\mathcal{G}}^\ast B_i$
\end{enumerate}
Let $(B_i\to t_i),(B_j\to t_j)\in P$, where $1\leq i,j<n$ and $i\neq
j$. If we eliminate $B_i$ this may have an impact on the value of
$\mathsf{sav}_{\mathcal{G}}(B_j)$ from \eqref{def:savValue}. We need to differentiate between two cases:
\begin{enumerate}[(1)]
	\item $B_j\to\begin{tikzpicture}[baseline=(X.base)]\node [draw,isosceles triangle,shape border rotate=90,inner sep=1pt] (X) {$B_i$};\draw ([xshift=3pt,yshift=4pt]X.north east) node {$t_j$};\end{tikzpicture}$
	
		If $B_i$ occurs in $t_j$, \ie, $B_i\leadsto_{\mathcal{G}}B_j$, then $|t_j|$ is increased because of the elimination of $B_i$. At the same time, $\mathsf{sav}_{\mathcal{G}}(B_j)$ goes up if we have $|\mathsf{ref}_{\mathcal{G}}(B_j)|>1$. The increase of $|t_j|$ is due to the fact that we can assume that the inequality $|\{v\in\mathsf{dom}_{t_i}\mid\lambda_{t_i}(v)\notin\mathcal{Y}\}|\geq2$ holds. Every production which was introduced in the replacement step represents a \tp and therefore consists of at least two nodes labeled by the parent and child symbol, respectively, of this \tp.
	\item $B_i\to\begin{tikzpicture}[baseline=(X.base)]\node [draw,isosceles triangle,shape border rotate=90,inner sep=1pt] (X) {$B_j$};\draw ([xshift=3pt,yshift=4pt]X.north east) node {$t_i$};\end{tikzpicture}$
	
		If $B_j$ occurs in $t_i$, \ie, $B_j\leadsto_{\mathcal{G}}B_i$, then $|\mathsf{ref}_{\mathcal{G}}(B_j)|$ and therefore $\mathsf{sav}_{\mathcal{G}}(B_j)$ are possibly increased by eliminating $B_i$. In fact, both values go up if $|\mathsf{ref}_\mathcal{G}(B_i)|>1$.
\end{enumerate}

\paragraph*{First phase} In the first phase of the pruning step, we eliminate every production $(A\to t)\in P$ with $|\mathsf{ref}_{\mathcal{G}}(A)|=1$. That way we achieve not only a possible reduction of the size of $\mathcal{G}$ (because we have $\mathsf{sav}_{\mathcal{G}}(A)=-\mathsf{rank}(A)$ for every $A\in N$ referenced only once) but we also decrement the number of nonterminals $|N|$ each time we eliminate such a production.

\paragraph*{Second phase} In the second phase of the pruning step we eliminate all remaining inefficient productions. We consider a production $(A\to t)\in P$ as inefficient if $\mathsf{sav}_{\mathcal{G}}(A)\leq0$. 
Unfortunately, this time we have to deal with a rather complex optimization problem. In contrast to the first phase, the decision whether to eliminate a production $(A\to t)\in P$ or not does now depend on the value $\mathsf{sav}_{\mathcal{G}}(A)$. However, the latter may be increased by eliminating other nonterminals (see the above case distinction). This forces us to use a heuristic to decide what productions to remove next from the grammar. In fact, after completing the first phase, we cycle through the remaining productions in their reverse hierarchical order. For every $(A\to t)\in P$ we check if $\mathsf{sav}_{\mathcal{G}}(A)\leq0$. If this proves to be true, we eliminate $(A\to t)$. That way $|\mathcal{G}|$ and $|N|$ are possibly further reduced.

The following example shows that the size of the final grammar generated by the Re-pair for Trees algorithm may depend on the order in which possible inefficient productions are eliminated.
\begin{example}\label{ex:differentOrdersPruningStep}
	Consider the linear SLCF tree grammar $\mathcal{G}=(N,P,S)$, where $N=\{S,A,B\}$ and $P$ is the following set of productions:
	\begin{align*}
		S&\to f(A(a,a),B(A(a,a)))\\
		A(y_1,y_2)&\to f(B(y_1),y_2)\\
		B(y_1)&\to f(y_1,a)
	\end{align*}
	Let us assume that the grammar $\mathcal{G}$ was generated by the replacement step of our algorithm and that we now want to remove all inefficient productions. We have $\mathsf{sav}_{\mathcal{G}}(A)=-1$ and $\mathsf{sav}_{\mathcal{G}}(B)=0$, \ie, the productions with left-hand sides $A$ and $B$ do not contribute to a small representation of the input tree $\mathsf{val}(\mathcal{G})$. Let us consider the following two cases:
	\begin{enumerate}[(1)]
		\item If we eliminate the production with left-hand side $A$, we obtain the grammar $\mathcal{G}_1=(N_1,P_1,S_1)$, where $N_1=\{S_1,B_1\}$ and $P_1$ is the following set of productions:
			\begin{align*}
				S_1&\to f(f(B_1(a),a),B_1(f(B_1(a),a)))\\
				B_1(y_1)&\to f(y_1,a)
			\end{align*}
			We have $|\mathcal{G}_1|=11$ and $\mathsf{sav}_{\mathcal{G}_1}(B_1)=1$, \ie, the production with left-hand side $B_1$ is not considered inefficient.
		\item In contrast, the elimination of the production with left-hand side $B$ yields the linear SLCF tree grammar $\mathcal{G}_2=(N_2,P_2,S_2)$, where $N_2=\{S_2,A_2\}$ and $P_2$ is the following set of productions:
			\begin{align*}
				S_2&\to f(A_2(a,a),f(A_2(a,a),a))\\
				A_2(y_1,y_2)&\to f(f(y_1,a),y_2)
			\end{align*}
			We also eliminate the production with left-hand side $A_2$ since we have $\mathsf{sav}_{\mathcal{G}_2}(A_2)=0$. This leads to an updated grammar $\mathcal{G}_2=(N_2,P_2,S_2)$, where $N_2=\{S_2\}$ and $P_2$ contains solely the production
			\[S_2\to f(f(f(a,a),a),f(f(f(a,a),a),a))\enspace\mbox{.}\]
			We have $|\mathcal{G}_2|=12$.
	\end{enumerate}
	This case distinction shows that the order in which inefficient productions are eliminated has an influence on the size of the final grammar (since $|\mathcal{G}_1|<|\mathcal{G}_2|$). Let us consider the sequence $A,B,S$ which is the only way to enumerate the nonterminals from $N$ in hierarchical order. Due the fact that the above described heuristic cycles through the productions in their reverse hierarchical order to eliminate inefficient productions we would obtain the larger grammar $\mathcal{G}_2$ if we would execute the pruning step with $\mathcal{G}$ as the input grammar.
\end{example}
Given the above example one might expect better compression results if the inefficient productions are eliminated in the order of their $\mathsf{sav}_{\mathcal{G}}$-values, \ie, if we would proceed as follows: as long as their is a production whose left-hand side has a $\mathsf{sav}_{\mathcal{G}}$-value smaller or equal to $0$ we remove a production whose left-hand side has the smallest occurring $\mathsf{sav}_{\mathcal{G}}$-value. However, our investigations showed that this approach leads to unappealing final grammars --- at least for our set of test input trees. The grammars generated by this approach exhibit nearly the same number of edges but much more nonterminals (about 50\% more) compared to the grammars obtained using the above heuristic.

\bigskip

\noindent Note that it is not possible to already detect \tps leading to inefficient productions during the replacement step. For instance, we would not act wisely if we would ignore \tps occurring only twice and exhibiting a large number of parameters a priori.
\begin{example}
	Imagine an input tree $t\in T(\mathcal{F})$ comprising two instances of a large tree pattern $t'\in T(\mathcal{F},\mathcal{Y})$. Let $\lambda_{t'}(v)\neq\lambda_{t'}(u)$ for all $v,u\in\mathsf{dom}_{t'}$, $u\neq v$. Furthermore, let us assume that all symbols in the tree pattern $t'$ are not occurring outside of this pattern. For every \tp $\alpha$ occurring in the tree pattern $t'$ (whose replacement may firstly lead to a production with a large number of parameters) we would have $|\mathsf{occ}_{t}(\alpha)|=2$. It becomes clear that this great redundancy in the input tree $t$, which can be represented by a production with right-hand side $t'$, would not be detected if we would not carry out these initially anything but efficient seeming \tp replacements.
\end{example}

\subsection{Complete Example}

Let the tree depicted in Fig.~\ref{fig:binaryRepresentation} be our input tree $t_0$ and let there be no restrictions on the maximal rank allowed for a nonterminal. We set $\mathcal{G}_0=(N_0,P_0,S_0)$, where $N_0=\{S_0\}$ and $P_0$ solely contains the production $(S_0\to t_0)$. Table \subref{tab:pairs0iteration} shows every \tp $\alpha$ encountered in $t_0$ along with its number of non-overlapping occurrences $|\mathsf{occ}_{t_0}(\alpha)|$. Furthermore, this table tells us that the two \tps $(\mathsf{title}^{01}, 1, \mathsf{isbn}^{00})$ and $(\mathsf{author}^{01}, 1, \mathsf{title}^{01})$ are the most frequent \tps occuring in $t_0$. We decide to replace the former \tp and therefore have $\mathsf{max}(t_0)=(\mathsf{title}^{01}, 1, \mathsf{isbn}^{00})=:\alpha_0$.

\begin{table}[t]
\centering
	%\hspace{-1.3cm}
	\parbox{14.5cm}{
		\footnotesize
		\subtable[All \tps encountered in the input tree $t_0$ and their number of non-overlapping occurrences.]{
			\begin{tabular}{p{2.1cm}c}
				\toprule
				\tp $\alpha$&$|\mathsf{occ}_{t_0}(\alpha)|$\\
				\midrule
				$(\mathsf{title}^{01}, 1, \mathsf{isbn}^{00})$&$5$\\
				$(\mathsf{author}^{01}, 1, \mathsf{title}^{01})$&$5$\\
				$(\mathsf{book}^{11}, 1, \mathsf{author}^{01})$&$4$\\
				$(\mathsf{book}^{11}, 2, \mathsf{book}^{11})$&$2$\\
				$(\mathsf{book}^{11}, 2, \mathsf{book}^{10})$&$1$\\
				$(\mathsf{book}^{10},1,\mathsf{author}^{01})$&$1$\\
				$(\mathsf{books}^{10}, 1, \mathsf{book}^{11})$&$1$\\
				\bottomrule
			\end{tabular}
			\label{tab:pairs0iteration}
		}
		\hfill
		\subtable[All \tps encountered in the tree $t_1$ and their number of non-overlapping occurrences.]{
			\begin{tabular}{p{2.1cm}c}
				\toprule
				\tp $\alpha$&$|\mathsf{occ}_{t_1}(\alpha)|$\\
				\midrule
				$(\mathsf{author}^{01}, 1, A_1)$&$5$\\
				$(\mathsf{book}^{11}, 1, \mathsf{author}^{01})$&$4$\\
				$(\mathsf{book}^{11}, 2, \mathsf{book}^{11})$&$2$\\
				$(\mathsf{book}^{11}, 2, \mathsf{book}^{10})$&$1$\\
				$(\mathsf{book}^{10},1,\mathsf{author}^{01})$&$1$\\
				$(\mathsf{books}^{10}, 1, \mathsf{book}^{11})$&$1$\\
				&\\
				\bottomrule
			\end{tabular}
			\label{tab:pairs1iteration}
		}
		\hfill
		\subtable[All \tps encountered in the tree $t_2$ and their number of non-overlapping occurrences.]{
			\begin{tabular}{p{2.1cm}c}
				\toprule
				\tp $\alpha$&$|\mathsf{occ}_{t_2}(\alpha)|$\\
				\midrule
				$(\mathsf{book}^{11}, 1, A_2)$&$4$\\
				$(\mathsf{book}^{11}, 2, \mathsf{book}^{11})$&$2$\\
				$(\mathsf{book}^{11}, 2, \mathsf{book}^{10})$&$1$\\
				$(\mathsf{book}^{10},1,A_2)$&$1$\\
				$(\mathsf{books}^{10}, 1, \mathsf{book}^{11})$&$1$\\
				&\\
				&\\
				\bottomrule
			\end{tabular}
			\label{tab:pairs2iteration}
		}
	}
	\stepcounter{table}
\end{table}

\begin{figure}[t]
	\centering
	\begin{tikzpicture}
		[bend angle=25, node distance=0.25cm, text height=1.5ex,
		place/.style={},
		pre/.style={<-,>=stealth'},
		post/.style={->,>=stealth'}]
		\node[place] (books) {$\mathsf{books}^{10}$};
		
		\node[place] (book1) [below=of books] {$\mathsf{book}^{11}$}
			edge [pre] (books);
		\node[place] (author1) [below=of book1] {$\mathsf{author}^{01}$}
			edge [pre] (book1);
		\node[place] (A1) [below=of author1] {$A_1$}
			edge [pre] (author1);
	
		\node[place] (book2) [right=of book1] {$\cdots$}
			edge [pre] (book1);

		\node[place] (book4) [right=of book2] {$\mathsf{book}^{11}$}
			edge [pre] (book2);
		\node[place] (author4) [below=of book4] {$\mathsf{author}^{01}$}
			edge [pre] (book4);
		\node[place] (A4) [below=of author4] {$A_1$}
			edge [pre] (author4);
			
		\draw[decorate,decoration={brace,mirror}] ([xshift=-5pt]A1.south) -- ([xshift=5pt]A4.south) node[midway,sloped,below=2pt] {$4$ times};

		\node[place] (book5) [right=of book4] {$\mathsf{book}^{10}$}
			edge [pre] (book4);
		\node[place] (author5) [below=of book5] {$\mathsf{author}^{01}$}
			edge [pre] (book5);
		\node[place] (A5) [below=of author5] {$A_1$}
			edge [pre] (author5);
	\end{tikzpicture}
	\caption{Tree $t_1$ which evolved from the input tree $t_0$ in the first iteration of our computation.}\label{fig:treeAfterFirstRepl}
\end{figure}
Now, in the first iteration of our computation, we generate a new linear SLCF tree grammar $\mathcal{G}_1=(N_1,P_1,S_1)$ as follows. We introduce a new nonterminal $A_1\in\mathcal{N}_0$ and set $N_1=\{S_1,A_1\}$. After that, we introduce the new production $\big(A_1\to\mathsf{pat}(\alpha_0)\big)$, where $\mathsf{pat}(\alpha_0)=\mathsf{title}^{01}(\mathsf{isbn}^{00})$. Finally, we set $P_1=\{\big(S_1\to t_1\big),\big(A_1\to\mathsf{pat}(\alpha_0)\big)\}$, where we have $t_1=t_0[\alpha_0/A_1]$. The tree $t_1$ is depicted in Fig.~\ref{fig:treeAfterFirstRepl}.

In the second iteration, during which we generate the grammar $\mathcal{G}_2=(N_2,P_2,S_2)$, we have $\mathsf{max}(t_1)=(\mathsf{author}^{01}, 1, A_1)=:\alpha_1$ as it can be seen in Table \subref{tab:pairs1iteration}. Again, we introduce a new nonterminal $A_2\in\mathcal{N}_0$ with right-hand side $\mathsf{pat}(\alpha_1)$, set $N_2=\{S_2,A_1,A_2\}$ and set $P=\{(S_2\to t_2),\big(A_1\to\mathsf{pat}(\alpha_0)\big),\big(A_2\to\mathsf{pat}(\alpha_1)\big)\}$, where $t_2=t_1[\alpha_1/A_2]$ (see Fig.~\ref{fig:treeAfterSecondRepl}).
\begin{figure}[t]
	\centering
	\begin{tikzpicture}
		[bend angle=25, node distance=0.25cm, text height=1.5ex,
		place/.style={},
		pre/.style={<-,>=stealth'},
		post/.style={->,>=stealth'}]
		\node[place] (books) {$\mathsf{books}^{10}$};
		
		\node[place] (book1) [below=of books] {$\mathsf{book}^{11}$}
			edge [pre] (books);
		\node[place] (B1) [below=of book1] {$A_2$}
			edge [pre] (book1);
				
		\node[place] (book2) [right=of book1] {$\cdots$}
			edge [pre] (book1);
				
		\node[place] (book4) [right=of book2] {$\mathsf{book}^{11}$}
			edge [pre] (book2);
		\node[place] (B4) [below=of book4] {$A_2$}
			edge [pre] (book4);
			
		\draw[decorate,decoration={brace,mirror}] ([xshift=-5pt]B1.south) -- ([xshift=5pt]B4.south) node[midway,sloped,below=2pt] {$4$ times};
		
		\node[place] (book5) [right=of book4] {$\mathsf{book}^{10}$}
			edge [pre] (book4);
		\node[place] (B5) [below=of book5] {$A_2$}
			edge [pre] (book5);
	\end{tikzpicture}
	\caption{Tree $t_2$ which evolved from the tree $t_1$ in the second iteration of our computation.}\label{fig:treeAfterSecondRepl}
\end{figure}
We have $\mathsf{max}(t_2)=(\mathsf{book}^{11}, 1, A_2)=:\alpha_2$ (\cf Table \subref{tab:pairs2iteration}) in the third iteration of our algorithm. This time, we need to introduce a new nonterminal $A_3\in\mathcal{N}_1$, \ie, a nonterminal with one parameter, with right-hand side $\mathsf{pat}(\alpha_2)=\mathsf{book}^{11}(A_2,y_1)$. We obtain the grammar $\mathcal{G}_3=(N_3,P_3,S_3)$, where
\begin{align*}
N_3&=\{S_3,A_1,A_2,A_3\}\enspace\mbox{,}\\
P_3&=(P_2\setminus\{(S_2\to t_2)\})\cup\{(S_3\to t_3),(A_3\to\mathsf{pat}(\alpha_2))\}\text{ and}\\
t_3&=\mathsf{books}^{10}(A_3(A_3(A_3(A_3(\mathsf{book}^{10}(A_2))))))
\end{align*}
by replacing the $4$ occurrences of $\alpha_2$.

In the fourth and last iteration the \tp $(A_3, 1, A_3)$ is replaced by a new nonterminal $A_4\in\mathcal{N}_1$. Therefore, we obtain the grammar $\mathcal{G}_4=(N_4,P_4,S_4)$ with $10$ edges and $5$ nonterminals, where we have $N_4=\{S_4,A_1,A_2,A_3,A_4\}$ and $P_4$ is the following set of productions:
\begin{align*}
	S_4&\to\mathsf{books}^{10}(A_4(A_4(\mathsf{book}^{10}(A_2))))\\
	A_4(y_1)&\to A_3(A_3(y_1))\\
	A_3(y_1)&\to\mathsf{book}^{11}(A_2,y_1)\\
	A_2&\to\mathsf{author}^{01}(A_1)\\
	A_1&\to\mathsf{title}^{01}(\mathsf{isbn}^{00})
\end{align*}
Finally, in the pruning step, we begin with merging the right-hand
side of $A_1$ with the right-hand side of $A_2$ since
$|\mathsf{ref}_{\mathcal{G}_4}(A_1)|=1$, \ie, it is only referenced
once. This yields the updated production
$\big(A_2\to\mathsf{author}^{01}(\mathsf{title}^{01}(\mathsf{isbn}^{00}))\big)$. Furthermore,
we roll back the replacement of the \tp $(A_3, 1, A_3)$ due to the
fact that it does not contribute to the reduction of the total number
of edges. Although the production with left-hand side $A_4$ is
referenced twice in the right-hand sides of $\mathcal{G}_4$ and
removes redundancy this gain is neutralized by the necessary edge to
the parameter node. This is indicated by the
$\mathsf{sav}_{\mathcal{G}_4}$ value of $A_4$, see \eqref{def:savValue}:
\begin{align*}
	\mathsf{sav}_{\mathcal{G}_4}(A_4)&=|\mathsf{ref}_{\mathcal{G}_4}(A_4)|\cdot(|A_3(A_3(y_1))|-\mathsf{rank}(A_4))-|A_3(A_3(y_1))|\\
	&=2\cdot(2-1)-2=0
\end{align*}
With these adjustments we obtain the linear SLCF tree grammar $\mathcal{G}=(N,P,S_4)$\label{exampleGrammar}, where $N=\{S_4,A_2,A_3\}$ and $P$ is the following set of productions:
\begin{align*}
	S_4&\to\mathsf{books}^{10}(A_3(A_3(A_3(A_3(\mathsf{book}^{10}(A_2))))))\\
	A_3(y_1)&\to\mathsf{book}^{11}(A_2, y_1)\\
	A_2&\to\mathsf{author}^{01}(\mathsf{title}^{01}(\mathsf{isbn}^{00}))
\end{align*}
Compared to the grammar $\mathcal{G}_4$ it has the same number of edges (namely 10) but nearly half as much nonterminals only.

\subsection{Another Example}

It is very unlikely to be confronted with an XML document tree which, in the binary tree model, is represented by a perfect binary tree\footnote{A \emph{perfect binary tree} is a binary tree in which every node is either of rank $2$ or $0$ and all leaves are at the same level (\ie, the paths to the root are of the same length). In contrast, a \emph{full binary tree} has no restrictions on the level of the leaves, \ie, the only requirement is that every node is either of rank $2$ or $0$.}. Nevertheless we want to investigate the compression performance of our algorithm on this kind of trees since it is an interesting aspect from a theoretical point of view. Last but not least our undertaking is justified by the fact that the actual Re-pair for Trees algorithm is not restricted to applications processing XML files but can be used in other applications as well. The latter, in turn, may exhibit ranked trees similar to full binary trees.

Let $t\in T(\mathcal{F})$ be a sufficiently large perfect binary tree of which each inner node is labeled by a terminal $f\in\mathcal{F}_2$ and each leaf is labeled by a terminal $a\in\mathcal{F}_0$. A run of Re-pair for Trees on $t$ consists of $2\cdot(d-1)$ iterations folding the input tree beginning at its leaves, where $d=\mathsf{depth}(t)$. Thus, in the first two iterations, the \tps formed by the leaf nodes and their parents are replaced. We obtain the productions $A_1(y_1)\to f(y_1,a)$ and $A_2\to A_1(a)$ each occurring $2^{d-1}$ times. Now, we undertake further \tp replacements in a bottom up fashion. 
In the $(2i-1)$-th and $2i$-th iteration we replace two \tps resulting in the productions $A_{2i-1}(y_1)\to f(y_1,A_{2(i-1)})$ and $A_{2i}\to A_{2i-1}(A_{2(i-1)})$, respectively, where $2\leq i\leq d-1$. 

The production with left-hand side $A_{2k-1}$ occurs only once for every $1\leq k\leq d-1$. Therefore, in the pruning step, for every $1\leq k\leq d-1$ the production with left-hand side $A_{2k-1}$ is eliminated by merging its right-hand side with the right-hand side of the production with left-hand side $A_{2k}$. In particular, the production with left-hand side $A_1$ is merged with the production for $A_2$ resulting in a production $A_2\to f(a,a)$.

Finally, we obtain a linear SLCF tree grammar with $d$ nonterminals --- including the left-hand side of the start production $S\to f(A_{2(d-1)},A_{2(d-1)})$ --- and a total of $2\cdot d$ edges. Note that even though some of the intermediate productions exhibit parameters the final grammar consists only of nonterminals of rank $0$. Thus, the generated grammar is a DAG and in this particular case the minimal DAG of the input tree.
\begin{figure}[tb]
	\small
	\centering
	%\hspace{-1.3cm}
	\parbox{14.5cm}{
		\subfigure[Perfect binary tree $t\in T(\mathcal{F})$ of height 4]{
			\parbox[t]{6cm}{
			\beginpgfgraphicnamed{figPerfectBinaryTree}
			\begin{tikzpicture}[->,>=stealth']
				\tikzstyle{level 1}=[level distance=0.8cm, sibling distance=3.2cm]
				\tikzstyle{level 2}=[level distance=0.8cm, sibling distance=1.6cm]
				\tikzstyle{level 3}=[level distance=0.8cm, sibling distance=0.8cm]
				\tikzstyle{level 4}=[level distance=0.8cm, sibling distance=0.4cm]
				\node {$f$}
					child {node {$f$}
						child {node {$f$}
							child {node {$f$}
								child {node {$a$}}
								child {node {$a$}}
							}
							child {node {$f$}
								child {node {$a$}}
								child {node {$a$}}
							}
						}
						child {node {$f$}
							child {node {$f$}
								child {node {$a$}}
								child {node {$a$}}
							}
							child {node {$f$}
								child {node {$a$}}
								child {node {$a$}}
							}
						}
					}
					child {node {$f$}
						child {node {$f$}
							child {node {$f$}
								child {node {$a$}}
								child {node {$a$}}
							}
							child {node {$f$}
								child {node {$a$}}
								child {node {$a$}}
							}
						}
						child {node {$f$}
							child {node {$f$}
								child {node {$a$}}
								child {node {$a$}}
							}
							child {node {$f$}
								child {node {$a$}}
								child {node {$a$}}
							}
						}
					}
				;
			\end{tikzpicture}
			\endpgfgraphicnamed
			}
			\label{fig:perfectBinaryTree}
		}
		\hfill
		\subfigure[Productions before the pruning step without the start production.]{
			\parbox[b]{3.3cm}{\begin{align*}
				A_1(y_1)&\to f(y_1,a)\\
				A_2&\to A_1(a)\\
				A_3(y_1)&\to f(y_1,A_2)\\
				A_4&\to A_3(A_2)\\
				A_5(y_1)&\to f(y_1,A_4)\\
				A_6&\to A_5(A_4)
			\end{align*}}
			\label{fig:prodPerfectBinaryTree}
		}
		\hfill
		\subfigure[Productions after the pruning step.]{
			\parbox[b]{3.3cm}{\begin{align*}
				A_2&\to f(a,a)\\
				A_4&\to f(A_2,A_2)\\
				A_6&\to f(A_4,A_4)\\
				S&\to f(A_6,A_6)
			\end{align*}}
			\label{fig:prodPerfectBinaryTreeAfterPruning}
		}
	}
	\stepcounter{figure}
\end{figure}
\begin{example}
	Let $t\in T(\mathcal{F})$ be the perfect binary tree from Fig.~\subref{fig:perfectBinaryTree} with 30 edges and $\mathsf{depth}(t)=4$. A run of Re-pair for Trees initially generates the 6 productions listed in Fig.~\subref{fig:prodPerfectBinaryTree}. After the pruning step we finally obtain the linear SLCF tree grammar $\mathcal{G}=(N,P,S)$, where $N=\{A_2,A_4,A_6,S\}$ and the set of productions $P$ consists of the productions from Fig.~\subref{fig:prodPerfectBinaryTreeAfterPruning}. The size of $\mathcal{G}$ is $|\mathcal{G}|=8$.
\end{example}

\subsection{Unlimited Maximal Rank}\label{sec:observations}

It seems natural to assume that, in general, trees can be compressed best by the Re-pair for Trees algorithm if there are no restrictions on the maximal rank of a nonterminal. However, it turns out that there are (not so uncommon) types of trees for which the opposite is true. Firstly, in this section, we will construct a set of trees whose compressibility is best if there are no restrictions on the maximal rank of a nonterminal. After that, in the succeeding section, we will present a set of trees whose compressibility is best when restricting the maximal rank to $1$.

Let us consider the infinite set $M=\{t_1,t_2,t_3,\ldots\}\subseteq T(\mathcal{F})$ of trees, where for all $i\in\mathbb{N}_{>0}$ the tree $t_i$ has the following properties:
\begin{itemize}
	\item The tree $t_i$ is a perfect binary tree of depth $2^i$.
	\item Each inner node of $t_i$ is labeled by the terminal $f\in\mathcal{F}_2$.
	\item Each leaf of $t_i$ is labeled by a unique terminal from $\mathcal{F}_0$, \ie, there do not exist two different leaves which are labeled by the same symbol.
\end{itemize}

\begin{figure}[p]
      \centering
	\newcommand{\ldistance}{7}
	\newcommand{\innersize}{0.6}
	\newcommand{\colorOne}{green!50!black}
	\newcommand{\colorTwo}{blue!50!black}
	\newcommand{\colorThree}{red!50!black}
	\subfigure[The tree $t_3\in M$.]{
		\begin{tikzpicture}
			[level distance=\ldistance mm,
			inner/.style={fill=black,draw=black,circle,inner sep=\innersize pt},
			leaf/.style={fill=white,draw=black,circle,inner sep=0.5pt},
			level 1/.style={sibling distance=76.8mm,level distance=5mm},
			level 2/.style={sibling distance=38.4mm,level distance=5mm},
			level 3/.style={sibling distance=19.2mm,level distance=5mm},
			level 4/.style={sibling distance=9.6mm},
			level 5/.style={sibling distance=4.8mm},
			level 6/.style={sibling distance=2.4mm},
			level 7/.style={sibling distance=1.2mm},
			level 8/.style={sibling distance=0.6mm}
			]
			
			\node[inner,\colorOne] {}
				child foreach \i/\color in {1/black,2/\colorOne} {node[inner,draw=\color,fill=\color] (\i) {} edge from parent[draw=\color]
					child foreach \ii/\color in {1/black,2/black} {node[inner,draw=\colorOne,fill=\colorOne] {} edge from parent[draw=\color]
						child foreach \iii/\color in {1/black,2/\colorOne} {node[inner,draw=\color,fill=\color] {} edge from parent[draw=\color]
							child foreach \iiii/\color in {1/black,2/black} {node[inner,draw=\colorOne,fill=\colorOne] {} edge from parent[draw=\color]
								child foreach \iiiii/\color in {1/black,2/\colorOne} {node[inner,draw=\color,fill=\color] {} edge from parent[draw=\color]
									child foreach \iiiii/\color in {1/black,2/black} {node[inner,draw=\colorOne,fill=\colorOne] {} edge from parent[draw=\color]
										child foreach \iiiii/\color in {1/black,2/\colorOne} {node[inner,draw=\color,fill=\color] {} edge from parent[draw=\color]
											child foreach \iiiiii/\color in {1/black,2/black} {node[leaf] {}  edge from parent[draw=\color]}
										}
									}
								}
							}
						}
					}
				}
			;
		\end{tikzpicture}
		\label{fig:unlimitedRank1}
	}
		
	\vspace{0.5cm}
	
	\subfigure[The tree $t_3\in M$ after replacing the \tp $(f,2,f)$.]{
		\begin{tikzpicture}
			[level distance=\ldistance mm,
			inner/.style={fill=\colorOne,draw=\colorOne,circle,inner sep=\innersize pt},
			A/.style={inner}, %,fill=red},
			leaf/.style={fill=white,draw=black,circle,inner sep=0.5pt},
			level 1/.style={sibling distance=76.8mm,level distance=5mm},
			level 2/.style={sibling distance=38.4mm,level distance=5mm},
			level 3/.style={sibling distance=19.2mm,level distance=5mm},
			level 4/.style={sibling distance=9.6mm},
			level 5/.style={sibling distance=4.8mm},
			level 6/.style={sibling distance=2.4mm},
			level 7/.style={sibling distance=1.2mm},
			level 8/.style={sibling distance=0.6mm}
			]
			
			\node[A] {}
				child[sibling distance=63mm] {node[inner] {} edge from parent[draw=\colorOne]
					child[sibling distance=38mm] foreach \j in {1,2} {node[A] {} edge from parent[draw=black]
						child[sibling distance=14.4mm] {node[inner] {} edge from parent[draw=\colorOne]
							child[sibling distance=9mm] foreach \ii in {1,2} {node[A] {} edge from parent[draw=black]
								child[sibling distance=3.6mm] {node[inner] {} edge from parent[draw=\colorOne]
									child[sibling distance=1.8mm] foreach \iiii in {1,2} {node[A] {} edge from parent[draw=black]
										child[sibling distance=0.6mm] {node[inner] {} edge from parent[draw=\colorOne]
											child foreach \iiiiii in {1,2} {node[leaf] {} edge from parent[draw=black]}
										}
										child[sibling distance=0.6mm] foreach \iiiiii in {1,2} {node[leaf] {} edge from parent[draw=black]}
									}
								}
								child[sibling distance=1.8mm] foreach \iiii in {1,2} {node[A] {} edge from parent[draw=black]
									child[sibling distance=0.6mm] {node[inner] {} edge from parent[draw=\colorOne]
										child[sibling distance=0.6mm] foreach \iiiiii in {1,2} {node[leaf] {} edge from parent[draw=black]}
									}
									child[sibling distance=0.6mm] foreach \iiiiii in {1,2} {node[leaf] {} edge from parent[draw=black]}
								}
							}
						}
						child[sibling distance=9mm] foreach \ii in {1,2} {node[A] {} edge from parent[draw=black]
							child[sibling distance=3.6mm] {node[inner] {} edge from parent[draw=\colorOne]
								child[sibling distance=1.8mm] foreach \iiii in {1,2} {node[A] {} edge from parent[draw=black]
									child[sibling distance=0.6mm] {node[inner] {} edge from parent[draw=\colorOne]
										child[sibling distance=0.6mm] foreach \iiiiii in {1,2} {node[leaf] {} edge from parent[draw=black]}
									}
									child[sibling distance=0.6mm] foreach \iiiiii in {1,2} {node[leaf] {} edge from parent[draw=black]}
								}
							}
							child[sibling distance=1.8mm] foreach \iiii in {1,2} {node[A] {} edge from parent[draw=black]
								child[sibling distance=0.6mm] {node[inner] {} edge from parent[draw=\colorOne]
									child[sibling distance=0.6mm] foreach \iiiiii in {1,2} {node[leaf] {} edge from parent[draw=black]}
								}
								child[sibling distance=0.6mm] foreach \iiiiii in {1,2} {node[leaf] {} edge from parent[draw=black]}
							}
						}
					}
				}
				child[sibling distance=38mm] foreach \jj in {1,2} {node[A] {} edge from parent[draw=black]
					child[sibling distance=14.4mm] {node[inner] {} edge from parent[draw=\colorOne]
						child[sibling distance=9mm] foreach \ii in {1,2} {node[A] {} edge from parent[draw=black]
							child[sibling distance=3.6mm] {node[inner] {} edge from parent[draw=\colorOne]
								child[sibling distance=1.8mm] foreach \iiii in {1,2} {node[A] {} edge from parent[draw=black]
									child[sibling distance=0.6mm] {node[inner] {} edge from parent[draw=\colorOne]
										child[sibling distance=0.6mm] foreach \iiiiii in {1,2} {node[leaf] {} edge from parent[draw=black]}
									}
									child[sibling distance=0.6mm] foreach \iiiiii in {1,2} {node[leaf] {} edge from parent[draw=black]}
								}
							}
							child[sibling distance=1.8mm] foreach \iiii in {1,2} {node[A] {} edge from parent[draw=black]
								child[sibling distance=0.6mm] {node[inner] {} edge from parent[draw=\colorOne]
									child[sibling distance=0.6mm] foreach \iiiiii in {1,2} {node[leaf] {} edge from parent[draw=black]}
								}
								child[sibling distance=0.6mm] foreach \iiiiii in {1,2} {node[leaf] {} edge from parent[draw=black]}
							}
						}
					}
					child[sibling distance=9mm] foreach \ii in {1,2} {node[A] {} edge from parent[draw=black]
						child[sibling distance=3.6mm] {node[inner] {} edge from parent[draw=\colorOne]
							child[sibling distance=1.8mm] foreach \iiii in {1,2} {node[A] {} edge from parent[draw=black]
								child[sibling distance=0.6mm] {node[inner] {} edge from parent[draw=\colorOne]
									child[sibling distance=0.6mm] foreach \iiiiii in {1,2} {node[leaf] {} edge from parent[draw=black]}
								}
								child[sibling distance=0.6mm] foreach \iiiiii in {1,2} {node[leaf] {} edge from parent[draw=black]}
							}
						}
						child[sibling distance=1.8mm] foreach \iiii in {1,2} {node[A] {} edge from parent[draw=black]
							child[sibling distance=0.6mm] {node[inner] {} edge from parent[draw=\colorOne]
								child[sibling distance=0.6mm] foreach \iiiii in {1,2} {node[leaf] {} edge from parent[draw=black]}
							}
							child[sibling distance=0.6mm] foreach \iiiiii in {1,2} {node[leaf] {} edge from parent[draw=black] }
						}
					}
				}
			;
		\end{tikzpicture}
		\label{fig:unlimitedRank2}
	}
		
	\vspace{0.5cm}
	
	\subfigure[The tree $t_3\in M$ after the second iteration, \ie, after replacing the \tps $(f,2,f)$ and $(A_1,1,f)$.]{
		\begin{tikzpicture}
			[level distance=\ldistance mm,
			inner/.style={fill=black,draw=black,circle,inner sep=\innersize pt},
			A/.style={inner}, %,fill=red},
			B/.style={inner}, %,fill=blue},
			leaf/.style={fill=white,draw=black,circle,inner sep=0.5pt},
			level 1/.style={sibling distance=76.8mm,level distance=5mm},
			level 2/.style={sibling distance=38.4mm,level distance=5mm},
			level 3/.style={sibling distance=19.2mm},
			level 4/.style={sibling distance=9.6mm},
			level 5/.style={sibling distance=4.8mm},
			level 6/.style={sibling distance=2.4mm},
			level 7/.style={sibling distance=1.2mm},
			level 8/.style={sibling distance=0.6mm}
			]
			
			\node[B,draw=\colorOne,fill=\colorOne] {}
				child[sibling distance=38.4mm,draw=\colorOne,fill=\colorOne] foreach \ii/\color in {1/black,2/black,3/black,4/\colorOne} {node[B,draw=\color,fill=\color] {} edge from parent[draw=\color]
					child[sibling distance=9.6mm] foreach \iiii in {1,2,3,4} {node[B,draw=\colorOne,fill=\colorOne] {}  edge from parent[draw=black]
						child[sibling distance=2.4mm] foreach \iiiii/\color in {1/black,2/black,3/black,4/\colorOne} {node[B,draw=\color,fill=\color] {} edge from parent[draw=\color]
							child[sibling distance=0.6mm] foreach \iiiiii in {1,2,3,4} {node[leaf] {}  edge from parent[draw=black]}
						}
					}
				}
			;
		\end{tikzpicture}
		\label{fig:unlimitedRank3}
	}
		
	\vspace{0.5cm}
	
	\subfigure[The tree which remains after replacing the \tp $(A_2,4,A_2)$ in the tree from Fig.~\subref{fig:unlimitedRank3}.]{
		\begin{tikzpicture}
			[level distance=\ldistance mm,
			inner/.style={fill=black,draw=black,circle,inner sep=\innersize pt},
			A/.style={inner},
			B/.style={inner},
			leaf/.style={fill=white,draw=black,circle,inner sep=0.5pt}
			]
			
			\node[B,draw=\colorOne,fill=\colorOne] {}
				child[sibling distance=35mm] {node[B] {}
					child[sibling distance=9.6mm] foreach \iiii in {1,2,3,4} {node[B,draw=\colorOne,fill=\colorOne] {}
						child[sibling distance=2.4mm] foreach \iiiii/\color in {1/black,2/black,3/\colorOne} {node[B,draw=\color,fill=\color] {} edge from parent[draw=\color]
							child[sibling distance=0.6mm] foreach \iiiiii in {1,2,3,4} {node[leaf] {}  edge from parent[draw=black]}
						}
						child[sibling distance=0.6mm] foreach \iiiiii in {1,2,3,4} {node[leaf] {} edge from parent[draw=black]}
					}
				}
				child[sibling distance=33mm] {node[B] {}
					child[sibling distance=9.6mm] foreach \iiii in {1,2,3,4} {node[B,draw=\colorOne,fill=\colorOne] {}
						child[sibling distance=2.4mm] foreach \iiiii/\color in {1/black,2/black,3/\colorOne} {node[B,draw=\color,fill=\color] {} edge from parent[draw=\color]
							child[sibling distance=0.6mm] foreach \iiiiii in {1,2,3,4} {node[leaf] {}  edge from parent[draw=black]}
						}
						child[sibling distance=0.6mm] foreach \iiiiii in {1,2,3,4} {node[leaf] {} edge from parent[draw=black]}
					}
				}
				child[sibling distance=27mm] {node[B,draw=\colorOne,fill=\colorOne] {} edge from parent[draw=\colorOne]
					child[sibling distance=9.6mm] foreach \iiii in {1,2,3,4} {node[B,draw=\colorOne,fill=\colorOne] {} edge from parent[draw=black]
						child[sibling distance=2.4mm] foreach \iiiii/\color in {1/black,2/black,3/\colorOne} {node[B,draw=\color,fill=\color] {} edge from parent[draw=\color]
							child[sibling distance=0.6mm] foreach \iiiiii in {1,2,3,4} {node[leaf] {}  edge from parent[draw=black]}
						}
						child[sibling distance=0.6mm] foreach \iiiiii in {1,2,3,4} {node[leaf] {} edge from parent[draw=black]}
					}
				}
				child[sibling distance=9.6mm] {node[B,draw=\colorOne,fill=\colorOne] {}
					child[sibling distance=2.4mm] foreach \iiiii/\color in {1/black,2/black,3/\colorOne} {node[B,draw=\color,fill=\color] {} edge from parent[draw=\color]
						child[sibling distance=0.6mm] foreach \iiiiii in {1,2,3,4} {node[leaf] {}  edge from parent[draw=black]}
					}
					child[sibling distance=0.6mm] foreach \iiiiii in {1,2,3,4} {node[leaf] {} edge from parent[draw=black]}
				}
				child[sibling distance=9.6mm] foreach \iiii in {1,2,3} {node[B,draw=\colorOne,fill=\colorOne] {}
					child[sibling distance=2.4mm] foreach \iiiii/\color in {1/black,2/black,3/\colorOne} {node[B,draw=\color,fill=\color] {} edge from parent[draw=\color]
						child[sibling distance=0.6mm] foreach \iiiiii in {1,2,3,4} {node[leaf] {}  edge from parent[draw=black]}
					}
					child[sibling distance=0.6mm] foreach \iiiiii in {1,2,3,4} {node[leaf] {} edge from parent[draw=black]}
				}
			;
		\end{tikzpicture}
		\label{fig:unlimitedRank4}
	}
	
	\vspace{0.5cm}
	
	\subfigure[The tree $t_3\in M$ after $6$ iterations of our algorithm. We obtained a $16$-ary tree whose inner nodes are labeled by the nonterminal $A_6$.]{
		\begin{tikzpicture}
			[level distance=\ldistance mm,
			inner/.style={fill=black,draw=black,circle,inner sep=\innersize pt},
			A/.style={inner}, %,fill=red},
			B/.style={inner}, %,fill=blue},
			leaf/.style={fill=white,draw=black,circle,inner sep=0.5pt},
			level 1/.style={sibling distance=76.8mm},
			level 2/.style={sibling distance=38.4mm},
			level 3/.style={sibling distance=19.2mm},
			level 4/.style={sibling distance=9.6mm},
			level 5/.style={sibling distance=4.8mm},
			level 6/.style={sibling distance=2.4mm},
			level 7/.style={sibling distance=1.2mm},
			level 8/.style={sibling distance=0.6mm}
			]
			
			\node[B] {}
				child[sibling distance=9.6mm] foreach \ii in {1,2,3,4,5,6,7,8,9,10,11,12,13,14,15,16} {node[B] {}
					child[sibling distance=0.6mm] foreach \iiiiii in {1,2,3,4,5,6,7,8,9,10,11,12,13,14,15,16} {node[leaf] {}}
				}
			;
		\end{tikzpicture}
		\label{fig:unlimitedRank5}
	}
	\stepcounter{figure}
\end{figure}
\begin{example}\label{ex:unlimitedRank1}
	Figure \subref{fig:unlimitedRank1} shows a simplified depiction of the tree $t_3\in M$. The inner nodes labeled by the symbol $f\in\mathcal{F}_2$ are represented by a circle filled with paint. In contrast, the leaves, of which each is labeled by a unique symbol from $\mathcal{F}_0$, are depicted by a circle which is not filled with paint. 
	
	The tree $t_3$ is compressed by a run of our algorithm as follows. The \tps $(f,1,f)$ and $(f,2,f)$ occur equally often in $t_3$. It makes no difference to the size of the final grammar whether we replace the former or the latter. Let us replace the \tp $(f,2,f)$ (whose occurrences are painted in green in Fig.~\subref{fig:unlimitedRank1}) by a nonterminal $A_1\in\mathcal{N}_3$ with right-hand side $f(y_1,f(y_2,y_3))$. We obtain the tree of the form shown in Fig.~\subref{fig:unlimitedRank2}. After that, the \tp $(A_1,1,f)$, which occurs the same number of times as $(f,2,f)$ did, is replaced by the nonterminal $A_2\in\mathcal{N}_4$ with right-hand side $A_1\big(f(y_1,y_2),y_3,y_4\big)$. The occurrences of $(A_1,1,f)$ are marked with green paint in Fig.~\subref{fig:unlimitedRank2}. The right-hand side of the nonterminal $A_1$ is merged with the right-hand side of $A_2$ during the pruning step since $A_1$ is only referenced once. This yields the production with left-hand side $A_2$ and right-hand side $f\big(f(y_1,y_2),f(y_3,y_4)\big)$. 
	
	After the replacement of the above two \tps the right-hand side of the start production is a $4$-ary tree of depth $4$ whose inner nodes are labeled by $A_2$ (see Fig.~\subref{fig:unlimitedRank3}).
	Now, the \tps
	\[(A_2,1,A_2),(A_2,2,A_2),(A_2,3,A_2),(A_2,4,A_2)\] 
	occur equally often. Again, the order of the \tp replacements makes no difference to the final grammar. Assuming that at first we replace the \tp $(A_2,4,A_2)$, which is marked with green paint in Fig.~\subref{fig:unlimitedRank3}, by a new nonterminal $A_3$, we obtain the tree shown in Fig.~\subref{fig:unlimitedRank4}. After that, the \tps $(A_3,3,A_2)$, $(A_4,2,A_2)$ and $(A_5,1,A_2)$ are replaced in three additional iterations. The above four \tp replacements result in a new production
	\[A_6(y_1,\ldots,y_{16})\to A_2\big(A_2(y_1,\ldots,y_4),\ldots,A_2(y_{13},\ldots,y_{16})\big)\]
	after pruning the grammar. The remaining tree is a 16-ary tree of depth $2$ (of the form depicted in Fig.~\subref{fig:unlimitedRank5}) whose inner nodes are labeled by the nonterminal $A_6$. In this tree there is no \tp occurring more than once. Therefore, the execution of our algorithm stops.
\end{example}
\begin{figure}[tb]
	\centering\footnotesize
	\begin{tikzpicture}[semithick]
		\tikzstyle{level 1}=[level distance=1.35cm, sibling distance=3.5cm]
		\tikzstyle{level 2}=[level distance=1.35cm, sibling distance=0.8cm]
		\begin{scope}[->,>=stealth']
			\node {$B_{i-1}$}
				child {node (B1) {$B_{i-1}$}
					child {node {$y_1$}}
					child {node (y11) {$y_2$}}
					child[missing] {node {$y_i$}}
					child {node (y12) {$y_{r}$}}
				}
				child {node (B2) {$B_{i-1}$}
					child {node {$y_{r+1}$}}
					child {node (y21) {$y_{r+2}$}}
					child[missing] {node {$y_i$}}
					child {node (y22) {$y_{2\cdot r}$}}
				}
				child[missing] {node {$B_{i-1}$}}
				child[sibling distance=4cm] {node (Bn) {$B_{i-1}$}
					child[sibling distance=2cm] {node {$y_{(r-1)\cdot r+1}$}}
					child[sibling distance=2cm] {node (y31) {$y_{(r-1)\cdot r+2}$}}
					child[missing] {node {$y_i$}}
					child[sibling distance=1.2cm] {node (y32) {$y_{r^2}$}}
				}
			;
		\end{scope}
		
		\draw[dotted] ([xshift=1cm]B2.center) -- ([xshift=-1cm]Bn.center);
		\draw[dotted] ([xshift=1cm]y11.center) -- ([xshift=-1cm]y12.center);
		\draw[dotted] ([xshift=1cm]y21.center) -- ([xshift=-1cm]y22.center);
		\draw[dotted] ([xshift=1cm]y31.center) -- ([xshift=-0.5cm]y32.center);
	\end{tikzpicture}
	\caption{Right-hand side $s_i$ of the nonterminal $B_i$, where $r=\mathsf{rank}(B_{i-1})$ and $i>1$.}\label{fig:rhsOfB}
\end{figure}

\bigskip

\noindent Now, we want to analyze the behavior of Re-pair for Trees on a tree from $M$ in general. Let $x\in\mathbb{N}_{>0}$ and let $\mathsf{it}:\mathbb{N}_{>0}\to\mathbb{N}_{>0}$ be the following function:
\[\mathsf{it}(x)=\sum_{i=0}^{x-1}2^{2^i}\]
Let $B_1,B_2,B_3,\ldots$ be a sequence of nonterminals where for all $i>0$ the following conditions are fulfilled:
\begin{itemize}
	\item $\mathsf{rank}(B_i)=2^{2^i}$
	\item $s_i\in T(\mathcal{F}\cup\mathcal{N},\mathcal{Y})$ is the right-hand side of $B_i$
	\item If $i=1$, we have $s_i=f(f(y_1,y_2),f(y_3,y_4))$ and if $i>1$, the tree $s_i$ is of the form shown in Fig.~\ref{fig:rhsOfB}, where $r=\mathsf{rank}(B_{i-1})=2^{2^{i-1}}$.
\end{itemize}
Regarding the nonterminals $A_2$ and $A_6$ from Example \ref{ex:unlimitedRank1}, we have $B_1=A_2$ and \mbox{$B_2=A_6$}, respectively. Let $i\in\{1,2,\ldots,k\}$. The following two equations hold:
\begin{align}
	\mathsf{rank}(B_i)&=2^{2^i}=2^{2^{i-1}}\cdot2^{2^{i-1}}=\mathsf{rank}(B_{i-1})^2\label{eq:rankQuad}\\
	|s_i|&=\mathsf{rank}(B_i)+\mathsf{rank}(B_{i-1})\label{eq:sizeOfTi}
\end{align}
For convenience, we define $\mathsf{rank}(B_0)=\mathsf{rank}(f)=2$.

Now assume that we have an unlimited maximal rank allowed for a nonterminal. After  $\mathsf{it}(n)$ iterations on $t_{n+1}\in M$ we have obtained the nonterminals $B_1,B_2,\ldots,B_n$. The right-hand side of the start nonterminal is a $\mathsf{rank}(B_n)$-ary tree of height $2$ (see also Example~\ref{ex:unlimitedRank1}, where $n=2$). At this point, no further replacements are carried out. For each of the generated nonterminals $B_1,\ldots,B_n$ we have
\begin{equation}\label{eq:refBi}
	|\mathsf{ref}_{\mathcal{G}}(B_i)|=\mathsf{rank}(B_i)+1\enspace\mbox{,}
\end{equation}
where $i\in\{1,2,\ldots,n\}$ (\cf Fig.~\ref{fig:rhsOfB}). Hence, we have
\begin{align*}
	\mathsf{sav}_{\mathcal{G}}(B_i)&\overset{(\ref{def:savValue})}=|\mathsf{ref}_{\mathcal{G}}(B_i)|\cdot\mathsf{rank}(B_{i-1})-|s_{i}|\\
	&\overset{(\ref{eq:sizeOfTi})}=|\mathsf{ref}_{\mathcal{G}}(B_i)|\cdot\mathsf{rank}(B_{i-1})-\mathsf{rank}(B_i)-\mathsf{rank}(B_{i-1})\\
	&\overset{(\ref{eq:refBi})}=\mathsf{rank}(B_i)\cdot\mathsf{rank}(B_{i-1})-\mathsf{rank}(B_i)\\
	&\overset{(\ref{eq:rankQuad})}=\mathsf{rank}(B_{i-1})^3-\mathsf{rank}(B_{i-1})^2\,>\,0
\end{align*}
since $\mathsf{rank}(B_{i-1})\geq\mathsf{rank}(B_0)=2$. Therefore, none of the nonterminals $B_1,\ldots,B_n$ will be eliminated in the pruning step.

Now assume that the maximal rank is $m\in\mathbb{N}$, \ie, we have $m<\infty$. Choose the smallest $n\in\mathbb{N}$ such that
\begin{figure}[tb]
	\centering
	\begin{tikzpicture}[semithick]
		\tikzstyle{level 1}=[level distance=1.35cm]
		\tikzstyle{level 2}=[level distance=1.35cm]
		\begin{scope}[->,>=stealth']
			\node {$B_{n-1}$}
				child {node (y1) {$y_1$}}
				child {node (y2) {$y_2$}}
				child[missing] {node (y3) {$y_3$}}
				child {node (y4) {$y_{r-h}$}}
				child[sibling distance=2.5cm] {node (B2) {$B_{n-1}$}
					child[sibling distance=1.25cm] {node (y20) {$y_{r-h+1}$}}
					child[sibling distance=1.25cm] {node (y21) {$y_{r-h+2}$}}
					child[missing] {node {$y_i$}}
					child[sibling distance=0.75cm] {node (y22) {$y_{2r-h}$}}
				}
				child[missing] {node {$B_{n-1}$}}
				child[sibling distance=2.75cm] {node (Bn) {$B_{n-1}$}
					child[sibling distance=1.75cm] {node (y30) {$y_{1+h(r-1)}$}}
					child[sibling distance=1.75cm] {node (y31) {$y_{2+h(r-1)}$}}
					child[missing] {node {$y_i$}}
					child[sibling distance=1.2cm] {node (y32) {$y_{r+h(r-1)}$}}
				}
			;
		\end{scope}
		
		\draw[dotted] ([xshift=1cm]B2.center) -- ([xshift=-1cm]Bn.center);
		\draw[dotted] ([xshift=0.6cm]y2.center) -- ([xshift=-0.6cm]y4.center);
		\draw[dotted] ([xshift=1cm]y21.center) -- ([xshift=-1cm]y22.center);
		\draw[dotted] ([xshift=1cm]y31.center) -- ([xshift=-1cm]y32.center);
		
		\draw[decorate,decoration={brace,mirror}] ([xshift=-7pt]y1.south) -- ([xshift=8pt]y4.south) node[midway,sloped,below=2pt] {\footnotesize $(r-h)$ many};
		\draw[decorate,decoration={brace,mirror}] ([xshift=-10pt,yshift=4pt]B2.south) -- ([xshift=10pt,yshift=4pt]Bn.south) node[midway,sloped,below=2pt] {\footnotesize $h$ many};
		\draw[decorate,decoration={brace,mirror}] ([xshift=-13pt]y20.south) -- ([xshift=10pt]y22.south) node[midway,sloped,below=2pt] {\footnotesize $r$ many};
		\draw[decorate,decoration={brace,mirror}] ([xshift=-20pt]y30.south) -- ([xshift=18pt]y32.south) node[midway,sloped,below=2pt] {\footnotesize $r$ many};
	\end{tikzpicture}
	\caption{Right-hand side of the nonterminal $C$, where $r=\mathsf{rank}(B_{n-1})$.}\label{fig:rhsOfC}
\end{figure}
\begin{equation}\label{eq:BiggerThanM}
	2^{2^n}>m\enspace\mbox{.}
\end{equation}
Thus, $B_n$ is the first nonterminal in the sequence $B_1,B_2,\ldots$ with a rank bigger than $m$. Let us consider a run of Re-pair for Trees on a tree $t_j\in M$ with $j\geq n+1$. Then, as above, the nonterminals $B_1,\ldots,B_{n-1}$ will be obtained after $\mathsf{it}(n-1)$ iterations (if we would prune the corresponding grammar by now). At this point, the right-hand side of the start production is a $2^{2^{n-1}}$-ary tree of height $\nicefrac{2^j}{2^{n-1}}\geq4$, where all inner nodes are labeled by the nonterminal $B_{n-1}$. Now, we can carry out $h$ additional \tp replacements leading to the nonterminals $C_1,C_2,\ldots,C_h\in\mathcal{N}$, where
\begin{equation}
h=\max\{l\in\mathbb{N}\mid r+l\cdot(r-1)\leq m\}\label{eq:sizeOfH}
\end{equation}
and $r=\mathsf{rank}(B_{n-1})=2^{2^{n-1}}$. We claim that
\begin{equation}\label{eq:rankBiggerH}
	r=\mathsf{rank}(B_{n-1})>h
\end{equation}
holds. To see this, let us assume that $r=\mathsf{rank}(B_{n-1})\leq h$. We have
\[m\overset{(\ref{eq:sizeOfH})}\geq r+h\cdot(r-1)\geq r+r\cdot(r-1)=r^2=2^{2^n}\enspace\mbox{.}\]
However, this contradicts (\ref{eq:BiggerThanM}).
\begin{figure}[tb]
	\centering
	\newcommand{\ldistance}{7}
	\newcommand{\innersize}{0.6}
	\begin{tikzpicture}
		[level distance=\ldistance mm,
		inner/.style={fill=black,circle,inner sep=\innersize pt},
		A/.style={inner},
		B/.style={inner},
		leaf/.style={fill=white,draw=black,circle,inner sep=0.5pt}
		]
		
		\node[B] {}
			child[sibling distance=20mm] {node[B] {}
				child[sibling distance=9.6mm] foreach \iiii in {1,2,3,4} {node[B] {}
					child[sibling distance=1.2mm] {node[B] {}
						child[sibling distance=0.6mm] foreach \iiiiii in {1,2,3,4} {node[leaf] {}}
					}
					child[sibling distance=0.7mm] {node[B] {}
						child[sibling distance=0.6mm] foreach \iiiiii in {1,2,3,4} {node[leaf] {}}
					}
					child[sibling distance=0.6mm] foreach \iiiiii in {1,2,3,4,5,6,7,8} {node[leaf] {}}
				}
			}
			child[sibling distance=15mm] {node[B] {}
				child[sibling distance=9.6mm] foreach \iiii in {1,2,3,4} {node[B] {}
					child[sibling distance=1.2mm] {node[B] {}
						child[sibling distance=0.6mm] foreach \iiiiii in {1,2,3,4} {node[leaf] {}}
					}
					child[sibling distance=0.7mm] {node[B] {}
						child[sibling distance=0.6mm] foreach \iiiiii in {1,2,3,4} {node[leaf] {}}
					}
					child[sibling distance=0.6mm] foreach \iiiiii in {1,2,3,4,5,6,7,8} {node[leaf] {}}
				}
			}
			child[sibling distance=9.6mm] {node[B] {}
				child[sibling distance=1.2mm] {node[B] {}
					child[sibling distance=0.6mm] foreach \iiiiii in {1,2,3,4} {node[leaf] {}}
				}
				child[sibling distance=0.7mm] {node[B] {}
					child[sibling distance=0.6mm] foreach \iiiiii in {1,2,3,4} {node[leaf] {}}
				}
				child[sibling distance=0.6mm] foreach \iiiiii in {1,2,3,4,5,6,7,8} {node[leaf] {}}
			}
			child[sibling distance=9.6mm] foreach \iiii in {1,2,3,4,5,6,7} {node[B] {}
				child[sibling distance=1.2mm] {node[B] {}
					child[sibling distance=0.6mm] foreach \iiiiii in {1,2,3,4} {node[leaf] {}}
				}
				child[sibling distance=0.7mm] {node[B] {}
					child[sibling distance=0.6mm] foreach \iiiiii in {1,2,3,4} {node[leaf] {}}
				}
				child[sibling distance=0.6mm] foreach \iiiiii in {1,2,3,4,5,6,7,8} {node[leaf] {}}
			}
		;
	\end{tikzpicture}
	\caption{Right-hand side of the current start production after replacing the \tp $(A_3,3,A_2)$.}\label{fig:unlimitedRankLimitedExample}
\end{figure}

In case $h>0$, we can argue as follows: After the pruning step, the nonterminals $C_1,C_2,\ldots,C_h$ form one nonterminal $C\in\mathcal{N}$ with $\mathsf{rank}(C)=h\cdot r+r-h=r + h\cdot(r-1)$ (see Fig.~\ref{fig:rhsOfC}). It occurs at least $2^{2^n}+1=r^2+1$ many times according to (\ref{eq:refBi}) (the nonterminal $C$ occurs as often as $B_n$ does after $\mathsf{it}(n)$ iterations on $t_j$ in the unlimited case). Each occurrence of $C$ reduces the size of the corresponding grammar by $h$ edges and the right-hand side of $C$ consists of $r+h\cdot r$ edges (see Fig.~\ref{fig:rhsOfC}). Now, let us consider the $\mathsf{sav}$-value of $C$ (assuming that $\mathcal{G}$ is the current grammar after $\mathsf{it}(n)+h$ iterations):
\begin{align*}
	\mathsf{sav}_{\mathcal{G}}(C)&\overset{(\ref{def:savValue})}
             =|\mathsf{ref}_{\mathcal{G}}(C)|\cdot h-(r+h\cdot r)\\
	&\geq(r^2+1)\cdot h-h\cdot r-r\\
	&=(r^2-r+1)\cdot h-r\\
	&\geq r^2-2r+1\\
	&=(r-1)^2
\end{align*}
Thus, we have $\mathsf{sav}_{\mathcal{G}}(C)>0$, \ie, the nonterminal $C$ is not eliminated during the pruning step. 
\begin{example}\label{ex:unlimitedRank2}
	Let us assume that the maximal rank for a nonterminal is restricted to $10$ in Example \ref{ex:unlimitedRank1}. In this case we are able to undertake exactly one additional \tp replacement in the tree from Fig.~\subref{fig:unlimitedRank4} resulting in a new nonterminal $A_4\in\mathcal{N}_{10}$. If we replace the \tp $(A_3,3,A_2)$, we obtain the tree shown in Fig.~\ref{fig:unlimitedRankLimitedExample}. We have $n=2$, $h=2$ and $C=A_4$. After the pruning step, the right-hand side of $A_4$ is of the form 
	\[A_2(y_1,y_2,A_2(y_3,y_4,y_5,y_6),A_2(y_7,y_8,y_9,y_{10}))\enspace\mbox{.}\]
\end{example}
\bigskip
We can further state that the nonterminal $B_{n-1}$ is not eliminated since it occurs $h+1$ times in the right-hand side of $C$ (see Fig.~\ref{fig:rhsOfC}) and $(r-h)\cdot|\mathsf{ref}_{\mathcal{G}}(C)|\geq(r-h)\cdot(r^2+1)$ times in the right-hand side of the current start production (below each occurrence of $C$ there are $r-h$ occurrences of $B_{n-1}$ and $C$ occurs at least $r^2+1$ times). Therefore, we have
\begin{align*}
|\mathsf{ref}_{\mathcal{G}}(B_{n-1})|&\geq h+1+(r-h)\cdot(r^2+1)\\
&=h+1+r^3-hr^2+r-h\\
&=r^3-hr^2+r+1\enspace\mbox{.}
\end{align*}
Because of (\ref{eq:rankBiggerH}), the inequality $|\mathsf{ref}_{\mathcal{G}}(B_{n-1})|>r+1$ holds. As shown before for the unlimited rank, in this case $B_{n-1}$ has a $\mathsf{sav}$-value bigger than $0$ and therefore the nonterminals $B_1,B_2,\ldots,B_{n-1}$ are not eliminated.

Let ${\mathcal{H}^m}$ be the grammar which is obtained after $\mathsf{it}(n-1)+h$ iterations on the tree $t_j$ when restricting the maximal rank to $m$ and let ${\mathcal{H}^\infty}$ be the current grammar after $\mathsf{it}(n)$ iterations on $t_j$ when an unlimited rank is allowed. We can conclude that $|{\mathcal{H}^m}|>|{\mathcal{H}^\infty}|$ holds --- no matter whether we have $h>0$ or $h=0$ --- because of the following two facts:
\begin{enumerate}[(1)]
	\item Each occurrence of $B_n$ saves $\mathsf{rank}(B_{n-1})$ edges (see Fig.~\ref{fig:rhsOfB}) and therefore according to (\ref{eq:rankBiggerH}) more than an occurrence of $C$ does. The nonterminals $B_n$ and $C$ occur equally often. However, $C$ is only existent if $h>0$.
	\item The nonterminals $B_1,B_2,\ldots,B_{n-1}$ (which are existent in both grammars, ${\mathcal{H}^m}$ and ${\mathcal{H}^\infty}$) and the nonterminals $B_n$ and $C$ are not eliminated during the pruning step.
\end{enumerate}
Let $\mathcal{G}^m$ ($\mathcal{G}^\infty$) be the final grammar which is generated by a run of Re-pair for Trees on the tree $t_j$ when restricting the maximal rank of a nonterminal to $m$ (not restricting the maximal rank). We have $\mathcal{G}^m={\mathcal{H}^m}$ and $|\mathcal{G}^\infty|\leq|{\mathcal{H}^\infty}|$. The latter holds because with every additional \tp replacement at least one edge is absorbed and because during the pruning step only nonterminals with a $\mathsf{sav}$-value smaller than or equal to $0$ are eliminated. Therefore $|\mathcal{G}^m|>|\mathcal{G}^\infty|$ holds. Thus, we have shown that, in general, the trees from $M$ can be compressed best if there are no restrictions on the maximal rank allowed for a nonterminal.
\begin{example}
	Table \ref{tbl:comparisonRuns} shows a comparison of the grammars generated by different runs of our algorithm on the trees $t_2$, $t_3$ and $t_4$ from $M$. By $\mathcal{G}^4$ ($\mathcal{G}^\infty$) we denote the final grammar which is generated when restricting the maximal rank to $4$ (not restricting the maximal rank).
\end{example}

\subsection{Limiting the Maximal Rank}\label{sec:limitingTheMaximalRank}

\begin{table}[tb]
	\centering
		\begin{tabular}{crrrr}
			\toprule
			Tree	$t_i$	&$\mathsf{depth}(t_i)$	&$|t_i|$	&$|\mathcal{G}^4|$&$|\mathcal{G}^\infty|$\\
			\midrule
			$t_2$		&$4$					&$30$	&$26$			&$26$\\
			$t_3$		&$8$					&$510$	&$346$			&$298$\\
			$t_4$		&$16$				&$131070$&$87386$			&$66090$\\
			\bottomrule
		\end{tabular}
	\caption{Comparison of the sizes of the final grammars.}\label{tbl:comparisonRuns}
\end{table}

In the preceding section we investigated a set of trees whose compressibility was best if we did not restrict the maximal rank of a nonterminal. Now, we want to construct a set of trees which behaves contrarily, \ie, we construct trees which can be compressed best if we limit the maximal rank of a nonterminal to $1$. In order to make it easier to quickly understand the following definition we want to refer the reader to Fig.~\subref{fig:unlimitedCaseRhsG0} which shows one of the trees we define in the sequel.

First of all, let us define a labeling function $l:\mathbb{N}\to\mathcal{F}_0$, where
\[l(i)=
	\begin{cases}
		a\qquad&\text{if }i\equiv 0\mod 5\\
		b&\text{if }i\equiv 1\mod 5\\
		c&\text{if }i\equiv 2\mod 5\\
		d&\text{if }i\equiv 3\mod 5\\
		e&\text{if }i\equiv 4\mod 5
	\end{cases}
\]
and $i\in\mathbb{N}$. Now, we define for all $n\in\mathbb{N}$ the tree $s_n=(\mathsf{dom}_{s_n},\lambda_{s_n})\in T(\mathcal{F})$, where
\[\mathsf{dom}_{s_n}=\left(\bigcup_{i=0}^{2^n}[2]^i\right)\cup\left(\bigcup_{i=0}^{2^n-1}[2]^i[1]\right)\]
and
\[
	\lambda_{s_n}(v)=\begin{cases}
		f\in\mathcal{F}_2&\text{if }v=[2]^i,\;0\leq i<2^n\\
		l(i)\in\mathcal{F}_0&\text{if }v=[2]^i[1],\;0\leq i<2^n\\
		l(2^n)\in\mathcal{F}_0\quad&\text{if }v=[2]^{2^n}\enspace\mbox{.}
	\end{cases}
\]
Let us define $U=\{s_n\mid n\in\mathbb{N},n\geq3\}$. In the following we will show that for every run of Re-pair for Trees on a tree $s\in U$ we have $|\mathcal{G}^1|<|\mathcal{G}^\infty|$, where $\mathcal{G}^1$ is the grammar generated when allowing a maximal rank of $1$ for a nonterminal and $\mathcal{G}^\infty$ is the resulting grammar when there is no restriction on the maximal rank.

Let us consider a run $\mathcal{G}^\infty_0,\mathcal{G}^\infty_1,\ldots,\mathcal{G}^\infty_{n-1}$ of the Re-pair for Trees algorithm on the tree $s_n$ with no restrictions on the maximal rank of a nonterminal, where $\mathcal{G}^\infty_i=(N_i,P_i,S_i)$, $(S_i\to t_i)$ is the start production of $\mathcal{G}^\infty_i$ and $i\in\{0,1,\ldots,n-1\}$. In the first iteration of our computation the \tp $(f,2,f)$ is the most frequent \tp, \ie, $\mathsf{max}(t_0)=(f,2,f)$. This is because of $|\mathsf{occ}_{s_n}\big((f,2,f)\big)|=2^{n-1}$ whereas for every $x\in\{a,b,c,d,e\}$ the inequality
\[|\mathsf{occ}_{s_n}\big((f,1,x)\big)|\leq\left\lceil\nicefrac{2^n}{5}\right\rceil\]
holds. Therefore, we replace the \tp $(f,2,f)$ by a new nonterminal $A_1$ and obtain $\mathcal{G}^\infty_1$. In every subsequent iteration $i$ we replace $\mathsf{max}(t_{i-1})=(A_{i-1},2^{i-1}+1,A_{i-1})$ by a new nonterminal $A_i$, where $i\in\{2,3,\ldots,n-1\}$. For every $1\leq i\leq n-1$ the right-hand side of the start production of the grammar $\mathcal{G}^\infty_i$ is given by the tree $t_i=(\mathsf{dom}_{t_i},\lambda_{t_i})$, where
\begin{align*}
	\mathsf{dom}_{t_i}&=\left(\bigcup_{j=0}^{2^{n-i}}
{\left[2^i+1\right]}^j\right)\cup\left(\bigcup_{k=1}^{2^i}\bigcup_{j=0}^{2^{n-i}-1}{\left[\left(2^i+1\right)\right]}^j[k]\right)\\
	\intertext{and}
	\lambda_{t_i}(v)&=\begin{cases}
		A_i\in\mathcal{N}_{2^i+1}&\text{if }v=[2^i+1]^j\text{ with }0\leq j\leq 2^{n-i}-1\enspace\mbox{,}\\
		l(j\cdot 2^i + k-1)&\text{if }v=[2^i+1]^j[k]\text{ with }0\leq j\leq 2^{n-i}-1\text{ and }1\leq k\leq 2^i\enspace\mbox{,}\\
		l(2^n)&\text{if }v=[2^i+1]^{2^{n-i}}\enspace\mbox{.}
	\end{cases}
\end{align*}

\begin{figure}[p]
	\small
	\centering
	%\hspace{-1.3cm}
	\parbox{14.5cm}{
		\subfigure[The tree $s_4\in U$ which is the right-hand side of $\mathcal{G}_0$'s start production.]{
			\begin{tikzpicture}[->,>=stealth',semithick,level distance=0.8cm, sibling distance=1.1cm]
				\begin{scope}[xshift=-2.5cm]
				\draw[draw=white] (-1,0) -- (-1,-1);
				\node (f1) {$f$} [grow=-110]
					child {node {$a$}}
					child {node (f2) {$f$}
						child {node {$b$}}
						child {node (f3) {$f$}
							child {node {$c$}}
							child {node (f4) {$f$}
								child {node {$d$}}
								child {node (f5) {$f$}
									child {node {$e$}}
									child {node (f6) {$f$}
										child {node {$a$}}
										child {node (f7) {$f$}
											child {node {$b$}}
											child {node (f8) {$f$}
												child {node {$c$}}
												child[missing] {node {$d$}}
											}	
										}	
									}			
								}
							}
						}
					}
				;

				\end{scope}
				\begin{scope}[xshift=0.25cm]
				\node (f9) {$f$} [grow=-110]
					child {node {$d$}}
					child {node (f10) {$f$}
						child {node {$e$}}
						child {node (f11) {$f$}
							child {node {$a$}}
							child {node (f12) {$f$}
								child {node {$b$}}
								child {node (f13) {$f$}
									child {node {$c$}}
									child {node (f14) {$f$}
										child {node {$d$}}
										child {node (f15) {$f$}
											child {node {$e$}}
											child {node (f16) {$f$}
												child {node {$a$}}
												child {node {$b$}}
											}	
										}	
									}			
								}
							}
						}
					}
				;
				\end{scope}
				
				\clip (2,0.7) rectangle (-3,-7.5); % fixes bounding box
				\draw (f8) .. controls +(3,-3) and +(-3,3) .. (f9);
			\end{tikzpicture}
			\label{fig:unlimitedCaseRhsG0}
		}
		\hfill
		\subfigure[The right-hand side of $\mathcal{G}_1$'s start production.]{
			\begin{tikzpicture}[->,>=stealth',semithick,level distance=1cm, sibling distance=1cm,scale=0.9]
				\tikzstyle{fan}=[anchor=north,isosceles triangle, shape border uses incircle,inner sep=0.5pt,shape border rotate=90,draw]
			
				\node (f1) {$A_1$} [grow=-105]
					child {node {$a$}}
					child {node {$b$}}
					child {node (f2) {$A_1$}
						child {node {$c$}}
						child {node {$d$}}
						child {node (f3) {$A_1$}
							child {node {$e$}}
							child {node {$a$}}
							child {node (f4) {$A_1$}
								child {node {$b$}}
								child {node {$c$}}
								child {node (f5) {$A_1$}
									child {node {$d$}}
									child {node {$e$}}
									child {node (f6) {$A_1$}
										child {node {$a$}}
										child {node {$b$}}
										child {node (f7) {$A_1$}
											child {node {$c$}}
											child {node {$d$}}
											child {node (f8) {$A_1$}
												child {node {$e$}}
												child {node {$a$}}
												child {node {$b$}}
											}
										}
									}
								}
							}
						}
					}
				;
			\end{tikzpicture}
			\label{fig:unlimitedCaseRhsG1}
		}
		\\
		\vspace{1cm}
		\\
		\subfigure[The right-hand side of $\mathcal{G}_2$'s start production.]{
			\begin{tikzpicture}[->,>=stealth',semithick,level distance=1cm, sibling distance=1cm,scale=0.95]
				\tikzstyle{fan}=[anchor=north,isosceles triangle, shape border uses incircle,inner sep=0.5pt,shape border rotate=90,draw]
			
				\node (f1) {$A_2$} [grow=-125]
					child {node {$a$}}
					child {node {$b$}}
					child {node {$c$}}
					child {node {$d$}}
					child {node (f2) {$A_2$}
						child {node {$e$}}
						child {node {$a$}}
						child {node {$b$}}
						child {node {$c$}}
						child {node (f3) {$A_2$}
							child {node {$d$}}
							child {node {$e$}}
							child {node {$a$}}
							child {node {$b$}}
							child {node (f4) {$A_2$}
								child {node {$c$}}
								child {node {$d$}}
								child {node {$e$}}
								child {node {$a$}}
								child {node {$b$}}
							}
						}
					}
				;
			\end{tikzpicture}
			\label{fig:unlimitedCaseRhsG2}
		}
		\hfill
		\subfigure[The right-hand side of $\mathcal{G}_3$'s start production.]{
			\begin{tikzpicture}[->,>=stealth',semithick,level distance=1cm, sibling distance=0.9cm,scale=0.85]
				\tikzstyle{fan}=[anchor=north,isosceles triangle, shape border uses incircle,inner sep=0.5pt,shape border rotate=90,draw]
			
				\node {$A_3$} [grow=-133]
					child {node {$a$}}
					child {node {$b$}}
					child {node {$c$}}
					child {node {$d$}}
					child {node {$e$}}
					child {node {$a$}}
					child {node {$b$}}
					child {node {$c$}}
					child {node {$A_3$}
						child {node {$d$}}
						child {node {$e$}}
						child {node {$a$}}
						child {node {$b$}}
						child {node {$c$}}
						child {node {$d$}}
						child {node {$e$}}
						child {node {$a$}}
						child {node {$b$}}
					}
				;
			\end{tikzpicture}
			\label{fig:unlimitedCaseRhsG3}
		}
	}
	\stepcounter{figure}
\end{figure}
\begin{example}
The Figs.~\subref{fig:unlimitedCaseRhsG0}, \subref{fig:unlimitedCaseRhsG1}, \subref{fig:unlimitedCaseRhsG2} and \subref{fig:unlimitedCaseRhsG3} show the right-hand sides of the start productions of the grammars $\mathcal{G}_0$, $\mathcal{G}_1$, $\mathcal{G}_2$ and $\mathcal{G}_3$ generated by a run of our algorithm on the tree $s_4$.
\end{example}
In order to argue that we have $\mathsf{max}(t_{i})=(A_i,2^i+1,A_i)=:\alpha_i$ for every $0<i<n$, we investigate the number of occurrences of all \tps occurring in the right-hand side of $\mathcal{G}^\infty_i$'s start production. Firstly, it is easy to verify that $|\mathsf{occ}_{t_i}(\alpha_i)|=2^{n-i-1}$. In contrast, for every $1\leq k\leq 2^i$ and $x\in\{a,b,c,d,e\}$ the inequality $\mathsf|occ_{t_i}\big((A_{i-1},k,x)\big)|\leq\left\lfloor\nicefrac{2^{n-i}}{5}\right\rfloor$ holds. This is because every power of $2$ is not divisible by $5$, \ie, for every $1\leq k\leq 2^i$ and every $0\leq j\leq 2^{n-i}-5$ we have 
\begin{multline*}
	\lambda_{t_i}([2^i+1]^{j}[k])\;\neq\;\lambda_{t_i}([2^i+1]^{j+1}[k])\;\neq\;\lambda_{t_i}([2^i+1]^{j+2}[k])\;\\
	\neq\;\lambda_{t_i}([2^i+1]^{j+3}[k])\;\neq\;\lambda_{t_i}([2^i+1]^{j+4}[k])\enspace\mbox{.}
\end{multline*}
Due to the fact that we do not replace \tps with child symbols $a$, $b$, $c$, $d$ or $e$, the right-hand side of $\mathcal{G}^\infty_{n-1}$'s start production has to contain at least $2^n$ nodes labeled by these symbols, \ie, we can conclude that $|\mathcal{G}^\infty_{n-1}|\geq 2^n$. Therefore the compression ratio cannot be better than 50\%.

In contrast, a run $\mathcal{G}^1_0,\mathcal{G}^1_1,\ldots,\mathcal{G}^1_k$ of our
algorithm on the tree $s_n$ leads to a significantly better compression ratio 
when restricting the maximal rank of a nonterminal to $1$, where 
\mbox{$k\in\mathbb{N}_{>0}$}, \mbox{$\mathcal{G}^1_i=(N_i,P_i,S_i)$},
$(S_i\to t_i)$ is the start production of $\mathcal{G}^1_i$ and
$i\in\{0,1,\ldots,k\}$. In the first iteration we have
$\mathsf{max}_1(t_0)\neq(f,2,f)$, since a replacement of $(f,2,f)$
would result in a nonterminal with a rank greater than $1$. Therefore
only the \tps $(f,1,a)$, $(f,1,b)$, $(f,1,c)$, $(f,1,d)$, $(f,1,e)$
and subsequent \tps can be replaced. It turns out that after the first
nine iterations the pattern $f(a,f(b,f(c,f(d,f(e,\ldots))))$ is
represented by a new nonterminal $A_9$ with
$\mathsf{rank}(A_9)=1$. The actual order of the replacements within
the first nine iterations depends on the method used to choose a most
frequent \tp when there are multiple most frequent \tps. Refer to
Example \ref{ex:limitedRankBetterLimitedRank} for one possible proceeding.

\begin{table}[tb]
	\centering
		\begin{tabular}{cccc}
			\toprule
			Iteration	&Replaced \tp		&New nonterminal	&\cf Figure\\
			\midrule
			$1$			&$(f,1,a)$		&$A_1$			&\subref{fig:limitedCaseRhsG1}\\
			$2$			&$(f,1,b)$		&$A_2$			&\subref{fig:limitedCaseRhsG2}\\
			$3$			&$(A_1,1,A_2)$	&$A_3$			&\subref{fig:limitedCaseRhsG3}\\
			$4$			&$(f,1,c)$		&$A_4$			&\subref{fig:limitedCaseRhsG4}\\
			$5$			&$(f,1,d)$		&$A_5$			&\subref{fig:limitedCaseRhsG5}\\
			$6$			&$(A_3,1,A_4)$	&$A_6$			&\subref{fig:limitedCaseRhsG6}\\
			$7$			&$(A_6,1,A_5)$	&$A_7$			&\subref{fig:limitedCaseRhsG7}\\
			$8$			&$(f,1,e)$		&$A_8$			&\subref{fig:limitedCaseRhsG8}\\
			$9$			&$(A_7,1,A_8)$	&$A_9$			&\subref{fig:limitedCaseRhsG9}\\
			\bottomrule
		\end{tabular}
	\caption{A run of Re-pair for Trees on the tree $s_4\in U$ with a maximal nonterminal rank of $1$.}\label{tbl:limitedRankBetterLimitedCaseExample}
\end{table}
The right-hand side of $\mathcal{G}_9$'s start production is a degenerated tree mainly consisting of consecutive nonterminals $A_9$. The corresponding nodes --- there are roughly $\nicefrac{2^n}{5}$ of them --- are then boiled down using approximately $\log_2(\nicefrac{2^n}{5})$ \tp replacements. Therefore the number of total edges of the resulting grammar is in $\mathcal{O}(n)$, \ie, it is of logarithmic size (the size of the input tree $s_n$ is $2^{n+1}+1$). Thus, we were able to construct a set of trees which exhibit a better compressibility when restricting the maximal rank of a nonterminal to $1$.
\begin{example}\label{ex:limitedRankBetterLimitedRank}
	Let us consider a run of Re-pair for Trees on the tree $s_4\in U$ when restricting the maximal rank of a nonterminal to $1$ (see Fig.~\subref{fig:limitedCaseRhsG0} for a depiction of $s_4$). Table \ref{tbl:limitedRankBetterLimitedCaseExample} shows one of several possible orders of \tp replacements and the Fig.~\ref{fig:limitedCaseRhsG0-G9} shows how the right-hand sides of the start productions evolve.
\end{example}
			
\begin{figure}[p]
	%\hspace{-1.3cm}
	\centering
	\parbox{14.5cm}{
		\subfigure%[Right-hand side of $S_0$.]
		  {
				\begin{tikzpicture}[scale=0.8,->,>=stealth',semithick,level distance=0.8cm, sibling distance=1.1cm]
					\begin{scope}[xshift=-2.5cm]
					%\draw[draw=white] (-1,0) -- (-1,-1);
					\node (f1) {$f$} [grow=-110]
						child {node {$a$}}
						child {node (f2) {$f$}
							child {node {$b$}}
							child {node (f3) {$f$}
								child {node {$c$}}
								child {node (f4) {$f$}
									child {node {$d$}}
									child {node (f5) {$f$}
										child {node {$e$}}
										child {node (f6) {$f$}
											child {node {$a$}}
											child {node (f7) {$f$}
												child {node {$b$}}
												child {node (f8) {$f$}
													child {node {$c$}}
													child[missing] {node {$d$}}
												}	
											}	
										}			
									}
								}
							}
						}
					;
	
					\end{scope}
					\begin{scope}[xshift=0.25cm]
					\node (f9) {$f$} [grow=-110]
						child {node {$d$}}
						child {node (f10) {$f$}
							child {node {$e$}}
							child {node (f11) {$f$}
								child {node {$a$}}
								child {node (f12) {$f$}
									child {node {$b$}}
									child {node (f13) {$f$}
										child {node {$c$}}
										child {node (f14) {$f$}
											child {node {$d$}}
											child {node (f15) {$f$}
												child {node {$e$}}
												child {node (f16) {$f$}
													child {node {$a$}}
													child {node {$b$}}
												}	
											}	
										}			
									}
								}
							}
						}
					;
					\end{scope}
					
					\clip (2,0.7) rectangle (-3,-7.5); % fixes bounding box
					\draw (f8) .. controls +(3,-3) and +(-3,3) .. (f9);
				\end{tikzpicture}
				\label{fig:limitedCaseRhsG0}
		}
		\hfill
		\subfigure%[Right-hand side of $S_1$.]
		   {
				\begin{tikzpicture}[scale=0.8,->,>=stealth',semithick,level distance=0.8cm, sibling distance=1.1cm]
					\begin{scope}[xshift=-2.5cm]
					\node {$A_1$} [grow=-110]
						child[missing] {node {$a$}}
						child {node {$f$}
							child {node {$b$}}
							child {node {$f$}
								child {node {$c$}}
								child {node {$f$}
									child {node {$d$}}
									child {node {$f$}
										child {node {$e$}}
										child {node {$A_1$}
											child[missing] {node {$a$}}
											child {node {$f$}
												child {node {$b$}}
												child {node (f8) {$f$}
													child {node {$c$}}
													child[missing] {node {$d$}}
												}	
											}	
										}			
									}
								}
							}
						}
					;
					\end{scope}
					\begin{scope}[xshift=0.25cm]
					\node (f9) {$f$} [grow=-110]
						child {node {$d$}}
						child {node {$f$}
							child {node {$e$}}
							child {node {$A_1$}
								child[missing] {node {$a$}}
								child {node {$f$}
									child {node {$b$}}
									child {node {$f$}
										child {node {$c$}}
										child {node {$f$}
											child {node {$d$}}
											child {node {$f$}
												child {node {$e$}}
												child {node {$A_1$}
													child[missing] {node {$a$}}
													child {node {$b$}}
												}	
											}	
										}			
									}
								}
							}
						}
					;
					\end{scope}
					
					\clip (2,0.7) rectangle (-3,-7.5); % fixes bounding box
					\draw (f8) .. controls +(3,-3) and +(-3,3) .. (f9);
				\end{tikzpicture}
				\label{fig:limitedCaseRhsG1}
		}
		\hfill
		\subfigure%[Right-hand side of $S_2$.]
		    {
				\begin{tikzpicture}[scale=0.8,->,>=stealth',semithick,level distance=0.8cm, sibling distance=1.1cm]
					\begin{scope}[xshift=-2.5cm]
					\node {$A_1$} [grow=-110]
						child[missing] {node {$a$}}
						child {node {$A_2$}
							child[missing] {node {$b$}}
							child {node {$f$}
								child {node {$c$}}
								child {node {$f$}
									child {node {$d$}}
									child {node {$f$}
										child {node {$e$}}
										child {node {$A_1$}
											child[missing] {node {$a$}}
											child {node {$A_2$}
												child[missing] {node {$b$}}
												child {node (f8) {$f$}
													child {node {$c$}}
													child[missing] {node {$d$}}
												}	
											}	
										}			
									}
								}
							}
						}
					;
					\end{scope}
					\begin{scope}[xshift=0.25cm]
					\node (f9) {$f$} [grow=-110]
						child {node {$d$}}
						child {node {$f$}
							child {node {$e$}}
							child {node {$A_1$}
								child[missing] {node {$a$}}
								child {node {$A_2$}
									child[missing] {node {$b$}}
									child {node {$f$}
										child {node {$c$}}
										child {node {$f$}
											child {node {$d$}}
											child {node {$f$}
												child {node {$e$}}
												child {node {$A_1$}
													child[missing] {node {$a$}}
													child {node {$b$}}
												}	
											}	
										}			
									}
								}
							}
						}
					;
					\end{scope}
					
					\clip (2,0.7) rectangle (-3,-7.5); % fixes bounding box
					\draw (f8) .. controls +(3,-3) and +(-3,3) .. (f9);
				\end{tikzpicture}
				\label{fig:limitedCaseRhsG2}
		}
		\\
		\vspace{0.5cm}
		\subfigure%[Right-hand side of $S_3$.]
		    {
				\begin{tikzpicture}[scale=0.8,->,>=stealth',semithick,level distance=0.8cm, sibling distance=1.1cm]
					\begin{scope}[xshift=-2.5cm]
					\node {$A_3$} [grow=-110]
							child[missing] {node {$b$}}
							child {node {$f$}
								child {node {$c$}}
								child {node {$f$}
									child {node {$d$}}
									child {node {$f$}
										child {node {$e$}}
										child {node {$A_3$}
												child[missing] {node {$b$}}
												child {node (f8) {$f$}
													child {node {$c$}}
													child[missing] {node {$d$}}
												}	
										}			
									}
								}
							}
					;
					\end{scope}
					\begin{scope}[xshift=0.25cm]
					\node (f9) {$f$} [grow=-110]
						child {node {$d$}}
						child {node {$f$}
							child {node {$e$}}
							child {node {$A_3$}
									child[missing] {node {$b$}}
									child {node {$f$}
										child {node {$c$}}
										child {node {$f$}
											child {node {$d$}}
											child {node {$f$}
												child {node {$e$}}
												child {node {$A_1$}
													child[missing] {node {$a$}}
													child {node {$b$}}
												}	
											}	
										}			
									}
							}
						}
					;
					\end{scope}
					
					\clip (2,0.9) rectangle (-3,-6.5); % fixes bounding box
					\draw (f8) .. controls +(3,-3) and +(-3,3) .. (f9);
				\end{tikzpicture}
				\label{fig:limitedCaseRhsG3}
		}
		\hfill
		\subfigure%[Right-hand side of $S_4$.]
		   {
				\begin{tikzpicture}[scale=0.8,->,>=stealth',semithick,level distance=0.8cm, sibling distance=1.1cm]
					\begin{scope}[xshift=-2.5cm]
					\node {$A_3$} [grow=-110]
							child[missing] {node {$b$}}
							child {node {$A_4$}
								child[missing] {node {$c$}}
								child {node {$f$}
									child {node {$d$}}
									child {node {$f$}
										child {node {$e$}}
										child {node {$A_3$}
												child[missing] {node {$b$}}
												child {node (f8) {$A_4$}
													child[missing] {node {$c$}}
													child[missing] {node {$d$}}
												}	
										}			
									}
								}
							}
					;
					\end{scope}
					\begin{scope}[xshift=0.25cm]
					\node (f9) {$f$} [grow=-110]
						child {node {$d$}}
						child {node {$f$}
							child {node {$e$}}
							child {node {$A_3$}
									child[missing] {node {$b$}}
									child {node {$A_4$}
										child[missing] {node {$c$}}
										child {node {$f$}
											child {node {$d$}}
											child {node {$f$}
												child {node {$e$}}
												child {node {$A_1$}
													child[missing] {node {$a$}}
													child {node {$b$}}
												}	
											}	
										}			
									}
							}
						}
					;
					\end{scope}
					
					\clip (2,0.9) rectangle (-3,-6.5); % fixes bounding box
					\draw (f8) .. controls +(3,-3) and +(-3,3) .. (f9);
				\end{tikzpicture}
				\label{fig:limitedCaseRhsG4}
		}
		\hfill
		\subfigure%[Right-hand side of $S_5$.]
		    {
				\begin{tikzpicture}[scale=0.8,->,>=stealth',semithick,level distance=0.8cm, sibling distance=1.1cm]
					\begin{scope}[xshift=-2.5cm]
					\node {$A_3$} [grow=-110]
							child[missing] {node {$b$}}
							child {node {$A_4$}
								child[missing] {node {$c$}}
								child {node {$A_5$}
									child[missing] {node {$d$}}
									child {node {$f$}
										child {node {$e$}}
										child {node {$A_3$}
												child[missing] {node {$b$}}
												child {node (f8) {$A_4$}
													child[missing] {node {$c$}}
													child[missing] {node {$d$}}
												}	
										}			
									}
								}
							}
					;
					\end{scope}
					\begin{scope}[xshift=0.25cm]
					\node (f9) {$A_5$} [grow=-110]
						child[missing] {node {$d$}}
						child {node {$f$}
							child {node {$e$}}
							child {node {$A_3$}
									child[missing] {node {$b$}}
									child {node {$A_4$}
										child[missing] {node {$c$}}
										child {node {$A_5$}
											child[missing] {node {$d$}}
											child {node {$f$}
												child {node {$e$}}
												child {node {$A_1$}
													child[missing] {node {$a$}}
													child {node {$b$}}
												}	
											}	
										}			
									}
							}
						}
					;
					\end{scope}
					
					\clip (2,0.9) rectangle (-3,-6.5); % fixes bounding box
					\draw (f8) .. controls +(3,-3) and +(-3,3) .. (f9);
				\end{tikzpicture}
				\label{fig:limitedCaseRhsG5}
		}
		\\
		\vspace{0.5cm}
		\subfigure%[Right-hand side of $S_6$.]
		   {
				\begin{tikzpicture}[scale=0.8,->,>=stealth',semithick,level distance=0.8cm, sibling distance=1.1cm]
					\begin{scope}[xshift=-2.5cm]
					\node {$A_6$} [grow=-110]
								child[missing] {node {$c$}}
								child {node {$A_5$}
									child[missing] {node {$d$}}
									child {node {$f$}
										child {node {$e$}}
										child {node (f8) {$A_6$}
													child[missing] {node {$c$}}
													child[missing] {node {$d$}}
										}			
									}
								}
					;
					\end{scope}
					\begin{scope}[xshift=0.25cm]
					\node (f9) {$A_5$} [grow=-110]
						child[missing] {node {$d$}}
						child {node {$f$}
							child {node {$e$}}
							child {node {$A_6$}
										child[missing] {node {$c$}}
										child {node {$A_5$}
											child[missing] {node {$d$}}
											child {node {$f$}
												child {node {$e$}}
												child {node {$A_1$}
													child[missing] {node {$a$}}
													child {node {$b$}}
												}	
											}	
										}			
							}
						}
					;
					\end{scope}
					
					\clip (2,1) rectangle (-3,-5.5); % fixes bounding box
					\draw (f8) .. controls +(3,-3) and +(-3,3) .. (f9);
				\end{tikzpicture}
				\label{fig:limitedCaseRhsG6}
		}
		\hfill
		\subfigure%[Right-hand side of $S_7$.]
		    {
				\begin{tikzpicture}[scale=0.8,->,>=stealth',semithick,level distance=0.8cm, sibling distance=1.1cm]
					\node {$A_7$} [grow=-110]
						child[missing] {node {$d$}}
						child {node {$f$}
							child {node {$e$}}
							child {node (f8) {$A_7$}
								child[missing] {node {$c$}}
								child {node {$f$}
									child {node {$e$}}
									child {node {$A_7$}
										child[missing] {node {$d$}}
										child {node {$f$}
											child {node {$e$}}
											child {node {$A_1$}
												child[missing] {node {$a$}}
												child {node {$b$}}
											}	
										}			
									}
								}
							}			
						}
					;
				\end{tikzpicture}
				\label{fig:limitedCaseRhsG7}
		}
		\hfill
		\subfigure%[Right-hand side of $S_8$.]
		    {
				\begin{tikzpicture}[scale=0.8,->,>=stealth',semithick,level distance=0.8cm, sibling distance=1.1cm]
					\node {$A_7$} [grow=-110]
						child[missing] {node {$d$}}
						child {node {$A_8$}
							child[missing] {node {$e$}}
							child {node (f8) {$A_7$}
								child[missing] {node {$c$}}
								child {node {$A_8$}
									child[missing] {node {$e$}}
									child {node {$A_7$}
										child[missing] {node {$d$}}
										child {node {$A_8$}
											child[missing] {node {$e$}}
											child {node {$A_1$}
												child[missing] {node {$a$}}
												child {node {$b$}}
											}	
										}		
									}
								}
							}			
						}
					;
				\end{tikzpicture}
				\label{fig:limitedCaseRhsG8}
		}
		\hfill
		\subfigure%[Right-hand side of $S_9$.]
		   {
				\begin{tikzpicture}[scale=0.8,->,>=stealth',semithick,level distance=0.8cm, sibling distance=1.1cm]
					\node {$A_9$} [grow=-110]
							child[missing] {node {$e$}}
							child {node (f8) {$A_9$}
									child[missing] {node {$e$}}
									child {node {$A_9$}
											child[missing] {node {$e$}}
											child {node {$A_1$}
												child[missing] {node {$a$}}
												child {node {$b$}}
											}	
									}
							}			
					;
				\end{tikzpicture}
				\label{fig:limitedCaseRhsG9}
		}
	}
	\caption{\label{fig:limitedCaseRhsG0-G9}The right-hand sides for the nonterminals $S_0, \ldots, S_9$}
	\stepcounter{figure}
\end{figure}

\section{Implementation Details}\label{ch:implementationDetails}

We implemented a prototype of the Re-pair for Trees algorithm, named \trp, running on XML documents. In the sequel, we demonstrate that it produces for any XML document tree in $\mathcal{O}(|t|)$ time a linear $k$-bounded SLCF tree grammar ${\cal G}$, where $k\in\mathbb{N}$ is a constant, $\mathsf{val}(\mathcal{G})=t$ and $t\in T(\mathcal{F})$ is the binary representation of the input tree.

There are several reasons to restrict the maximal rank to a constant $k$. One of them is that only this way we are able to obtain a linear-time implementation. Another reason is that for every $k$-bounded linear SLCF tree grammar $\mathcal{G}$ generated by \trp it can be checked in polynomial time if a given tree automaton accepts $\mathsf{val}(\mathcal{G})$ (using a result from \cite{Lohrey2006complexity}). Last but not least, Sect.~\ref{sec:limitingTheMaximalRank} on page \pageref{sec:limitingTheMaximalRank} showed us that for flat XML documents leading to a right-leaning binary tree it is quite promising to restrict the maximal rank. The latter reason is also supported by our experiments with different maximal ranks on our test set of XML documents. 

On average, a maximal rank of $4$ leads to the best compression performance (\cf Sect.~\ref{sec:resultsWithDifferentMaximalRanks} on page \pageref{sec:resultsWithDifferentMaximalRanks}). Due to this fact \trp generates $4$-bounded linear SLCF tree grammars by default. This can be adjusted by using the \verb|-max_rank| switch.

\subsection{Reading the Input Tree}

The XML document tree of the input file can be directly transformed into a binary \mbox{$\mathcal{F}$-labeled} tree $t=(\mathsf{dom}_t,\lambda_t)\in T(\mathcal{F})$.\footnote{Refer to Sect.~\ref{sec:binaryTreeModel} on page \pageref{sec:binaryTreeModel} for an explanation of the binary tree model.} The XML document is parsed by a SAX-like parser calling the functions \verb|start-element| and \verb|end-element| (see Figs.~\ref{lst:functionStartElement} and \ref{lst:functionEndElement}) of an object taking care of the tree construction. The latter is called \emph{tree constructor} in the sequel. 

The tree constructor uses three stacks to properly encode the SAX events. Firstly, the stack \verb|index_stack| keeps track of the index\footnote{Analogously to our definition for ranked trees: If an element is the $n$-th child of its parent element, then the index of this element is $n$.} of the current element read. The stack \verb|name-stack| stores the element types of the elements in order to be able to update the labeling function $\lambda_t$ within the \verb|end-element| function. Together with the stack \verb|hierarchy_stack|, which is used to maintain the current sequence of parents within $t$, enough information stands by to encode the SAX events.

\begin{figure}[tb]
\lstset{emph={FUNCTION,ENDFUNC,return,traverse,endtraverse,for,if,else,then,do,endif,endfor}, emphstyle=\bfseries, emph={[2]index_stack,hierarchy_stack,name_stack,name}, emphstyle={[2]\slshape}, morecomment=[l]{//}}
\begin{lstlisting}{}
FUNCTION start-element(name)
	if (hierarchy_stack is not empty) then
		(*@$i:=\;$@*)index_stack.top()(*@$\;+\;1$@*);
		index_stack.pop();
		index_stack.push((*@$i$@*));

		(*@$v:=\;$@*)hierarchy_stack.top();
		
		if ((*@$i=1$@*)) then (*@$u:=v1$@*)
		else	(*@$u:=v2$@*)
		endif
		
		name_stack.push(name);
	else
		(*@$u:=\varepsilon$@*);
		(*@$\lambda_t(\varepsilon):=\;$@*)name(*@${}^{10}$@*);
	endif
	
	(*@$\mathsf{dom}_t:=\mathsf{dom}_t\cup\{u\}$@*);

	index_stack.push((*@$0$@*));
	hierarchy_stack.push((*@$u$@*));
ENDFUNC
\end{lstlisting}
\caption{The start-element function which is called for every start-tag.}\label{lst:functionStartElement}
\end{figure}

To be more precise, if the parser encounters a start-tag, it extracts the element type of the element and passes it to the tree constructor by calling the function \verb|start-element|. If it is the first call of \verb|start-element|, we must be dealing with the root of the document. Thus, the stack \verb|hierarchy_stack| is empty and the \verb|else|-part beginning in line 15 is processed. First of all, the variable $u$ is identified with $\varepsilon$ (and later added to the set $\mathsf{dom}_t$). Afterwards, the labeling function $\lambda_t$ is updated accordingly. Since, in the binary tree model, the root has no sibling nodes and since it is assumed that the input tree consists of at least two nodes, it is clear that the terminal symbol labeling the root node will have a left child but no right child (therefore the superscript $10$ in line 16).

If we consider a subsequent call of \verb|start-element|, the hierarchy stack is not empty and therefore the \verb|if|-part is processed. Firstly, the index stack is updated in the lines 3--5 and after that the node $v\in\mathsf{dom}_t$ is retrieved from the hierarchy stack (line 7). The tree node $v$ will be the parent of the node which is added in the following. We introduce a new node $u$ which is later (but still in the same call of this function) added to $\mathsf{dom}_t$ (line 19). The node $u$ becomes the left child of $v$ if it represents the first child element of the element which is represented by $v$. In contrast, $u$ becomes a right child if the current index $i$ is greater than one, \ie, if the element being processed is a sibling element of the element represented by $v$. Regarding the node $u$, we are unable to update the labeling function $\lambda_t$ at this time since we do not know if the XML element being processed has children or sibling elements. 

\begin{figure}[tb]
\lstset{emph={FUNCTION,ENDFUNC,return,traverse,endtraverse,for,if,else,then,do,endif,endfor,repeat,endrepeat,times}, emphstyle=\bfseries, emph={[2]index_stack,hierarchy_stack,name_stack,name,lchild,rchild}, emphstyle={[2]\slshape}, morecomment=[l]{//}}
\begin{lstlisting}{}
FUNCTION end-element
	(*@$i:=\;$@*)index_stack.top();
	repeat (*@$i$@*) times
		(*@$v:=\;$@*)hierarchy_stack.top();
		name(*@$\;:=\;$@*)name_stack.top();

		(*@$l:=0$@*), (*@$r:=0$@*);
		if ((*@$v1\in\mathsf{dom}_t$@*)) then
			(*@$l:=1$@*);
		endif
		if ((*@$v2\in\mathsf{dom}_t$@*)) then
			(*@$r:=1$@*);
		endif

		(*@$\lambda_t(v):=\;$@*)name(*@${}^{lr}$@*);

		hierarchy_stack.pop();
		name_stack.pop();
	endrepeat
	index_stack.pop();
ENDFUNC
\end{lstlisting}
\caption{The end-element function which is called for every end-tag encountered in the input XML document.}\label{lst:functionEndElement}
\end{figure}

If an end-tag is encountered by the input parser, the function \verb|end-element| listed in Fig.~\ref{lst:functionEndElement} is called. Now, the index of the current XML element is consulted in order to bubble up the sequence of parents stored by the hierarchy stack the correct number of times. Lastly, after processing the \verb|repeat| loop, the node representing the first child element of the current XML element (the end-tag of its last child element was just read) is on top of the hierarchy stack. For every node $v\in\mathsf{dom}_t$ which is removed from the hierarchy stack within the \verb|repeat| loop the labeling function $\lambda_t$ is updated.

\begin{example}
Fig.~\ref{fig:constructionOfBinaryTree} shows the evolution of the
data structures after the first calls to the functions \texttt{start-element()}
and \texttt{end-element()}, respectively, when parsing the input tree
from Fig.~\ref{fig:XmlDocumentTree}. It shows the content of the three
stacks after the body of the corresponding function has been executed,
where \textsf{is} denotes the index stack, \textsf{hs} denotes the
hierarchy stack and \textsf{ns} denotes the name stack. Regarding
Fig.~\ref{fig:constructionOfBinaryTree}, the element on top of the
stack is always the upper element in the depiction of the
corresponding stack. If there has not been assigned a label to a node,
\ie, the labeling function $\lambda$ has not been updated accordingly
yet, the node is depicted in brackets.
\end{example}
\begin{figure}[p]
%	\thisfloatpagestyle{empty}
	\small
	%\hspace{-1.55cm}
	\centering
	\parbox{16cm}{
	\begin{multicols}{2}
	\begin{enumerate}[(1)]
		\item Function call \texttt{start-element(books)}
			
			\begin{raggedleft}
				\parbox{3cm}{
					\begin{center}
						\begin{tabular}{ccc}
							$0$&$\varepsilon$&\\
							\midrule
							\textsf{is}&\textsf{hs}&\textsf{ns}\\\bottomrule
						\end{tabular}
					\end{center}
				}
				\hspace{1cm}
				\parbox{1.5cm}{
					\begin{raggedleft}
						\begin{tikzpicture}
							[bend angle=25, node distance=0.25cm,text height=1.5ex,
							place/.style={},
							pre/.style={<-,>=stealth'},
							post/.style={->,>=stealth'}]
							\node[place] (books) {$\mathsf{books}^{10}$};
						\end{tikzpicture}
					\end{raggedleft}
				}
			\end{raggedleft}
		\item Function call \texttt{start-element(book)}
		
			\begin{raggedleft}
				\parbox{3cm}{
					\begin{center}
						\begin{tabular}{ccc}
							$0$&$1$&\\
							$1$&$\varepsilon$&$\mathsf{book}$\\
							\midrule
							\textsf{is}&\textsf{hs}&\textsf{ns}\\\bottomrule
						\end{tabular}
					\end{center}
				}
				\hspace{1cm}
				\parbox{1.5cm}{
					\begin{raggedleft}
						\begin{tikzpicture}
							[bend angle=25, node distance=0.25cm,text height=1.5ex,
							place/.style={},
							pre/.style={<-,>=stealth'},
							post/.style={->,>=stealth'}]
							\node[place] (books) {$\mathsf{books}^{10}$};
							
							\node[place] (book1) [below=of books] {$(1)$}
								edge [pre] (books);
						\end{tikzpicture}
					\end{raggedleft}
				}
			\end{raggedleft}
		\item Function call \texttt{start-element(author)}
		
			\begin{raggedleft}
				\parbox{3cm}{
					\begin{center}
						\begin{tabular}{ccc}
							$0$&$11$&\\
							$1$&$1$&$\mathsf{author}$\\
							$1$&$\varepsilon$&$\mathsf{book}$\\
							\midrule
							\textsf{is}&\textsf{hs}&\textsf{ns}\\\bottomrule
						\end{tabular}
					\end{center}
				}
				\hspace{1cm}
				\parbox{1.5cm}{
					\begin{raggedleft}
						\begin{tikzpicture}
							[bend angle=25, node distance=0.25cm,text height=1.5ex,
							place/.style={},
							pre/.style={<-,>=stealth'},
							post/.style={->,>=stealth'}]
							\node[place] (books) {$\mathsf{books}^{10}$};
							
							\node[place] (book1) [below=of books] {$(1)$}
								edge [pre] (books);
							\node[place] (author1) [below=of book1] {$(11)$}
								edge [pre] (book1);
						\end{tikzpicture}
					\end{raggedleft}
				}
			\end{raggedleft}
		\item Function call \texttt{end-element()}
		
			\begin{raggedleft}
				\parbox{3cm}{
					\begin{center}
						\begin{tabular}{ccc}
							&$11$&\\
							$1$&$1$&$\mathsf{author}$\\
							$1$&$\varepsilon$&$\mathsf{book}$\\
							\midrule
							\textsf{is}&\textsf{hs}&\textsf{ns}\\\bottomrule
						\end{tabular}
					\end{center}
				}
				\hspace{1cm}
				\parbox{1.5cm}{
					\begin{raggedleft}
						\begin{tikzpicture}
							[bend angle=25, node distance=0.25cm,text height=1.5ex,
							place/.style={},
							pre/.style={<-,>=stealth'},
							post/.style={->,>=stealth'}]
							\node[place] (books) {$\mathsf{books}^{10}$};
							
							\node[place] (book1) [below=of books] {$(1)$}
								edge [pre] (books);
							\node[place] (author1) [below=of book1] {$(11)$}
								edge [pre] (book1);
						\end{tikzpicture}
					\end{raggedleft}
				}
			\end{raggedleft}
		\item Function call \texttt{start-element(title)}
		
			\begin{raggedleft}
				\parbox{3cm}{
					\begin{center}
						\begin{tabular}{ccc}
							&$112$&\\
							$0$&$11$&$\mathsf{title}$\\
							$2$&$1$&$\mathsf{author}$\\
							$1$&$\varepsilon$&$\mathsf{book}$\\
							\midrule
							\textsf{is}&\textsf{hs}&\textsf{ns}\\\bottomrule
						\end{tabular}
					\end{center}
				}
				\hspace{1cm}
				\parbox{1.5cm}{
					\begin{raggedleft}
						\begin{tikzpicture}
							[bend angle=25, node distance=0.25cm,text height=1.5ex,
							place/.style={},
							pre/.style={<-,>=stealth'},
							post/.style={->,>=stealth'}]
							\node[place] (books) {$\mathsf{books}^{10}$};
							
							\node[place] (book1) [below=of books] {$(1)$}
								edge [pre] (books);
							\node[place] (author1) [below=of book1] {$(11)$}
								edge [pre] (book1);
							\node[place] (title1) [below=of author1] {$(112)$}
								edge [pre] (author1);
						\end{tikzpicture}
					\end{raggedleft}
				}
			\end{raggedleft}
		\item Function call \texttt{end-element()}
		
			\begin{raggedleft}
				\parbox{3cm}{
					\begin{center}
						\begin{tabular}{ccc}
							&$112$&\\
							&$11$&$\mathsf{title}$\\
							$2$&$1$&$\mathsf{author}$\\
							$1$&$\varepsilon$&$\mathsf{book}$\\
							\midrule
							\textsf{is}&\textsf{hs}&\textsf{ns}\\\bottomrule
						\end{tabular}
					\end{center}
				}
				\hspace{1cm}
				\parbox{1.5cm}{
					\begin{raggedleft}
						\begin{tikzpicture}
							[bend angle=25, node distance=0.25cm,text height=1.5ex,
							place/.style={},
							pre/.style={<-,>=stealth'},
							post/.style={->,>=stealth'}]
							\node[place] (books) {$\mathsf{books}^{10}$};
							
							\node[place] (book1) [below=of books] {$(1)$}
								edge [pre] (books);
							\node[place] (author1) [below=of book1] {$(11)$}
								edge [pre] (book1);
							\node[place] (title1) [below=of author1] {$(112)$}
								edge [pre] (author1);
						\end{tikzpicture}
					\end{raggedleft}
				}
			\end{raggedleft}
			\needspace{4\baselineskip}
		\item Function call \texttt{start-element(isbn)}
		
			\begin{raggedleft}
				\parbox{3cm}{
					\begin{center}
						\begin{tabular}{ccc}
								&$1122$			&\\
								&$112$			&$\mathsf{isbn}$\\
							$0$	&$11$			&$\mathsf{title}$\\
							$3$	&$1$				&$\mathsf{author}$\\
							$1$	&$\varepsilon$	&$\mathsf{book}$\\
							\midrule
							\textsf{is}&\textsf{hs}&\textsf{ns}\\\bottomrule
						\end{tabular}
					\end{center}
				}
				\hspace{1cm}
				\parbox{1.5cm}{
					\begin{raggedleft}
						\begin{tikzpicture}
							[bend angle=25, node distance=0.25cm,text height=1.5ex,
							place/.style={},
							pre/.style={<-,>=stealth'},
							post/.style={->,>=stealth'}]
							\node[place] (books) {$\mathsf{books}^{10}$};
							
							\node[place] (book1) [below=of books] {$(1)$}
								edge [pre] (books);
							\node[place] (author1) [below=of book1] {$(11)$}
								edge [pre] (book1);
							\node[place] (title1) [below=of author1] {$(112)$}
								edge [pre] (author1);
							\node[place] (isbn1) [below=of title1] {$(1122)$}
								edge [pre] (title1);
						\end{tikzpicture}
					\end{raggedleft}
				}
			\end{raggedleft}
		\item Function call \texttt{end-element()}
		
			\begin{raggedleft}
				\parbox{3cm}{
					\begin{center}
						\begin{tabular}{ccc}
								&$1122$			&\\
								&$112$			&$\mathsf{isbn}$\\
								&$11$			&$\mathsf{title}$\\

							$3$	&$1$				&$\mathsf{author}$\\
							$1$	&$\varepsilon$	&$\mathsf{book}$\\
							\midrule
							\textsf{is}&\textsf{hs}&\textsf{ns}\\\bottomrule
						\end{tabular}
					\end{center}
				}
				\hspace{1cm}
				\parbox{1.5cm}{
					\begin{raggedleft}
						\begin{tikzpicture}
							[bend angle=25, node distance=0.25cm,text height=1.5ex,
							place/.style={},
							pre/.style={<-,>=stealth'},
							post/.style={->,>=stealth'}]
							\node[place] (books) {$\mathsf{books}^{10}$};
							
							\node[place] (book1) [below=of books] {$(1)$}
								edge [pre] (books);
							\node[place] (author1) [below=of book1] {$(11)$}
								edge [pre] (book1);
							\node[place] (title1) [below=of author1] {$(112)$}
								edge [pre] (author1);
							\node[place] (isbn1) [below=of title1] {$(1122)$}
								edge [pre] (title1);
						\end{tikzpicture}
					\end{raggedleft}
				}
			\end{raggedleft}
		\item Function call \texttt{end-element()}
		
			\begin{raggedleft}
				\parbox{3cm}{
					\begin{center}
						\begin{tabular}{ccc}
								&$1$				&\\
							$1$	&$\varepsilon$	&$\mathsf{book}$\\
							\midrule
							\textsf{is}&\textsf{hs}&\textsf{ns}\\\bottomrule
						\end{tabular}
					\end{center}
				}
				\hspace{1cm}
				\parbox{1.5cm}{
					\begin{raggedleft}
						\begin{tikzpicture}
							[bend angle=25, node distance=0.25cm,text height=1.5ex,
							place/.style={},
							pre/.style={<-,>=stealth'},
							post/.style={->,>=stealth'}]
							\node[place] (books) {$\mathsf{books}^{10}$};
							
							\node[place] (book1) [below=of books] {$(1)$}
								edge [pre] (books);
							\node[place] (author1) [below=of book1] {$\mathsf{author}^{01}$}
								edge [pre] (book1);
							\node[place] (title1) [below=of author1] {$\mathsf{title}^{01}$}
								edge [pre] (author1);
							\node[place] (isbn1) [below=of title1] {$\mathsf{isbn}^{00}$}
								edge [pre] (title1);
						\end{tikzpicture}
					\end{raggedleft}
				}
			\end{raggedleft}
		\item Function call \texttt{start-element(book)}
		
			\begin{raggedleft}
				\parbox{3cm}{
					\begin{center}
						\begin{tabular}{ccc}
								&$12$			&\\
							$0$	&$1$				&$\mathsf{book}$\\
							$2$	&$\varepsilon$	&$\mathsf{book}$\\
							\midrule
							\textsf{is}&\textsf{hs}&\textsf{ns}\\\bottomrule
						\end{tabular}
					\end{center}
				}
				\hspace{1cm}
				\parbox{1.5cm}{
					\begin{raggedleft}
						\begin{tikzpicture}
							[bend angle=25, node distance=0.25cm,text height=1.5ex,
							place/.style={},
							pre/.style={<-,>=stealth'},
							post/.style={->,>=stealth'}]
							\node[place] (books) {$\mathsf{books}^{10}$};
							
							\node[place] (book1) [below=of books] {$(1)$}
								edge [pre] (books);
							\node[place] (author1) [below=of book1] {$\mathsf{author}^{01}$}
								edge [pre] (book1);
							\node[place] (title1) [below=of author1] {$\mathsf{title}^{01}$}
								edge [pre] (author1);
							\node[place] (isbn1) [below=of title1] {$\mathsf{isbn}^{00}$}
								edge [pre] (title1);
								
							\node[place] (book2) [right=of book1] {$(12)$}
								edge [pre] (book1);
						\end{tikzpicture}
					\end{raggedleft}
				}
			\end{raggedleft}
		\item Function call \texttt{start-element(book)}
		
			\begin{raggedleft}
				\parbox{3cm}{
					\begin{center}
						\begin{tabular}{ccc}
								&$121$			&\\
							$0$	&$12$			&$\mathsf{author}$\\
							$1$	&$1$				&$\mathsf{book}$\\
							$2$	&$\varepsilon$	&$\mathsf{book}$\\
							\midrule
							\textsf{is}&\textsf{hs}&\textsf{ns}\\\bottomrule
						\end{tabular}
					\end{center}
				}
				\hspace{1cm}
				\parbox{1.5cm}{
					\begin{raggedleft}
						\begin{tikzpicture}
							[bend angle=25, node distance=0.25cm,text height=1.5ex,
							place/.style={},
							pre/.style={<-,>=stealth'},
							post/.style={->,>=stealth'}]
							\node[place] (books) {$\mathsf{books}^{10}$};
							
							\node[place] (book1) [below=of books] {$(1)$}
								edge [pre] (books);
							\node[place] (author1) [below=of book1] {$\mathsf{author}^{01}$}
								edge [pre] (book1);
							\node[place] (title1) [below=of author1] {$\mathsf{title}^{01}$}
								edge [pre] (author1);
							\node[place] (isbn1) [below=of title1] {$\mathsf{isbn}^{00}$}
								edge [pre] (title1);
								
							\node[place] (book2) [right=of book1] {$(12)$}
								edge [pre] (book1);
							\node[place] (author2) [below=of book2] {$(121)$}
								edge [pre] (book2);
						\end{tikzpicture}
					\end{raggedleft}
				}
			\end{raggedleft}
	\end{enumerate}
	\end{multicols}
	}
	\caption{Content of the stacks after each call of the \texttt{start-element()} and \texttt{end-element()}, respectively, functions when parsing the tree from Fig.~\ref{fig:XmlDocumentTree}. In addition at each step their is a depiction of the binary tree which is constructed so far.}\label{fig:constructionOfBinaryTree}
\end{figure}
The binary representation of the input tree can be obtained in linear runtime since the function \verb|start-element| and the function \verb|end-element|, respectively, are each called only once for every node of the input tree. Furthermore, the body of the \verb|repeat| loop of the latter function is executed once for every input node (except for the root node).

\paragraph{Re-pair for Trees on Multiary Trees}\label{par:RemarksMultiaryModel}

Another way of modeling an XML document tree in a ranked way is the multiary tree model. In contrast to the binary tree model (which we described in Sect.~\ref{sec:binaryTreeModel} on page \pageref{sec:binaryTreeModel}), this model does not encode the input tree by a binary tree but it turns the input tree into a ranked tree by introducing a terminal symbol for each element type/number of children combination which occurs in the input tree. Let us assume that an element type occurs three times and that there are three different numbers of children attach to the corresponding elements. In the multiary tree model, there are introduced three different terminal symbols.

During our investigations we also evaluated a \trp version based on the multiary tree model. However, this modified version of our algorithm was outperformed by the original version in terms of compression ratio. This is due to the nature of typical XML documents. XML elements encountered in real-world XML documents often exhibit a long list of children elements. Therefore, compared to the binary tree model, a multiary tree model representation of an XML document leads to a higher number of different \tps occurring less often. This, in turn, reduces \trp's ability to compress the XML document tree by the same degree as it is possible for the binary case.
\begin{example}
Consider for example the XML document tree from Fig.~\ref{fig:XmlDocumentTree}. The element of type \verb|books| has five children elements of type $\mathsf{book}$, \ie, each of the five \tps
\[(\mathsf{books},1,\mathsf{book}),(\mathsf{books},2,\mathsf{book}),\ldots,(\mathsf{books},5,\mathsf{book})\]
occurs only once. None of these \tps is replaced by TreeRePair since a replacement is only reasonable if the corresponding \tp occurs at least twice. In contrast, the binary tree model  leads to two occurrences of the \tp $(\mathsf{book},1,\mathsf{book})$ which can be replaced by a new nonterminal symbol in a run of \trp (\cf Fig.~\ref{fig:binaryRepresentation}).
\end{example}

\subsection{Representing the Input Tree in Memory}\label{sec:representingTreeInMemory}

In this section we show that the ranked input tree of our algorithm can be efficiently stored as a DAG in memory. This DAG representation can be made nearly transparent to the rest of the algorithm (\cf Sect.~\ref{sec:impactDag} on page \pageref{sec:impactDag}).\footnote{Note that the DAG representation can also be circumvented by using the \texttt{-no\_dag} switch. In this case the whole binary tree with all its possible redundancy is constructed in main memory.} Thus, by default, the tree constructor of our prototype does not only directly transform the XML document tree into a ranked representation but also infers the corresponding minimal $0$-bounded SLCF tree grammar $\mathcal{G}=(N,P,S)$, \ie, the minimal DAG, of the latter on the fly.

In \cite{Buneman03path} it has been demonstrated that the representation of XML document trees based on the concept of sharing subtrees is highly efficient. Their experiments have shown that in several cases the size of the DAG was less than 10\% of the uncompressed XML document tree. Therefore, the sharing of common subtrees enables us to load large XML documents trees which would have otherwise exceeded the computation resources. In addition to that it avoids time consuming swapping and the repetitive re-computation of the same results concerning subtrees that are shared.

Now, let us elaborate on how one can infer the DAG of the ranked representation $t=(\mathsf{dom}_t,\lambda_t)\in T(\mathcal{F})$ of the XML document tree. The tree constructor must check for every node which is removed from the hierarchy stack in the \verb|end-element| function if the subtree rooted at this node can be shared. This can be accomplished by calling the function \verb|share-subtree| listed in 
Fig.~\ref{lst:methodShareTree}. To better understand this function, let us assume that we want to check if the subtree $t'\in T(\mathcal{F})$ rooted at a node $v\in\mathsf{dom}_t$ can be shared. If we already encountered an exact copy of $t'$ while reading the input tree, all subtrees of $t'$ must have been shared before. Thus, the tree $t'$ must be of depth 1 and all children nodes must be labeled by nonterminals of the DAG grammar $\mathcal{G}$. Therefore, it is only necessary to compare the labels of the root of $t'$ and its direct children with those of all subtrees encountered until now. This can be done in constant time with the help of a hash table.

Now, let us assume that we have processed an exact copy of $t'$ earlier, \ie, $t'$ can be shared. Thus, the condition in line 3 is evaluated to \verb|true| and the \verb|subtrees_ht| hash table contains $t'$. Hence, the \verb|else|-part beginning in line 6 is processed. If there already exists a nonterminal $B\in N$ with right-hand side $t'$ then we set $A:=B$. We can check this in $\mathcal{O}(1)$ time because with each entry of the hash table \verb|subtrees_ht| we can store a pointer to the corresponding production. Otherwise, \ie, if there exists no $(B\to t'')\in P$ with $t'=t''$, we introduce a new nonterminal $A\in\mathcal{N}_0\setminus N$ with right-hand side $t'$ and replace the first occurrence $u$ of the subtree $t'$ by $A$. There can be only one earlier occurrence of the subtree $t'$ since otherwise we would already have inserted a corresponding production. Furthermore, we can guarantee constant time access to $u$ because with each entry in the hash table \verb|subtrees_ht| we can store a pointer to the corresponding first occurrence. Finally, we add the subtree rooted at the node $\mathsf{parent}(u)$ to the hash table if all of its subtrees are shared. We do not need to insert the subtree rooted at the node $\mathsf{parent}(v)$ since we will process $\mathsf{parent}(v)$ in a later step (since we are traversing the input tree in postorder). In contrast, if $t'$ was not encountered until now, we add it to the hash table \verb|subtrees_ht| (line 5) in order to be able to share possible later occurrences of it.

Initially, \ie, after reading the input tree, all shared subtrees are of depth 1. In order to reduce the number of nonterminals of the DAG grammar (without increasing the number of total edges) all productions referenced only once are eliminated. All in all, the inferring of the DAG grammar needs linear time and can be conveniently combined with the step of transforming the input tree into a ranked tree.
\begin{figure}[tb]
\lstset{emph={FUNCTION,ENDFUNC,return,traverse,endtraverse,for,if,else,then,do,endif,endfor,repeat,endrepeat,times}, emphstyle=\bfseries, emph={[2]subtrees_ht}, emphstyle={[2]\slshape}, morecomment=[l]{//}}
\begin{lstlisting}{}
FUNCTION share-subtree((*@$v$@*))
	let (*@$t'$@*) be the subtree rooted at (*@$v$@*);
	if ((*@$\forall 1\leq i\leq\mathsf{rank}(\lambda_t(v)):\lambda_t(vi)\in\mathcal{N}_0$@*)) then
		if (subtrees_ht does not contain (*@$t'$@*)) then
			insert (*@$t'$@*) into subtrees_ht;
		else
			if ((*@$\exists B\in\mathcal{N}_0:(B\to t')\in P$@*)) then
				(*@$A:=B$@*);
			else
				choose nonterminal (*@$A\in\mathcal{N}_0\setminus N$@*);
				(*@$N:=N\cup\{A\}$@*); (*@$P:=P\cup\{(A\to t')\}$@*);
				let (*@$u$@*) be the node at which the first 
							occurrence of (*@$t'$@*) is rooted;
				replace subtree rooted at (*@$u$@*) by (*@$A$@*);
				
				(*@$w:=\mathsf{parent}(u)$@*);
				if ((*@$\forall 1\leq i\leq\mathsf{rank}(\lambda_t(w)):\lambda_t(wi)\in\mathcal{N}_0$@*)) then
					let (*@$t''$@*) be the subtree rooted at (*@$w$@*);
					insert (*@$t''$@*) into subtrees_ht;
				endif
			endif
			
			replace subtree rooted at (*@$v$@*) by (*@$A$@*);
		endif
	endif
ENDFUNC
\end{lstlisting}
\caption{The function \texttt{share-subtree} which checks for the subtree rooted at the node $v\in\mathsf{dom}_t$ if it can be shared. If this is the case then the sharing is performed.}\label{lst:methodShareTree}
\end{figure}

\subsection{Utilized Data Structures}

The data structures we use in our implementation are similar to those used in \cite{larsson2000off}. In order to be able to focus on the essentials, we do not pay attention to the fact that, internally, the input tree is represented by a DAG.

Let us assume that the binary input tree $t=(\mathsf{dom}_t,\lambda_t)\in T(\mathcal{F})$ has been generated by our implementation after reading a corresponding XML document tree. Hence, the tree $t$ is the ranked representation of the latter. In main memory, every node $v\in\mathsf{dom}_t$ is represented by an object exhibiting several pointers. These allow constant time access to the parent and all children of the node $v$ and to the possible next and previous occurrences of the \tp $\alpha=\big(\lambda_t(v),i,\lambda_t(vi)\big)$, where $i\in\{1,2,\ldots,\mathsf{rank}(\lambda_t(v))\}$. The pointers to the next and previous occurrences of $\alpha$ form a doubly linked list of all the occurrences in $\mathsf{occ}_t(\alpha)$. We call this type of list an \emph{occurrences list (of $\alpha$)} in the sequel.\footnote{During our investigations we also implemented a \trp version avoiding these doubly linked lists of occurrences. Instead, for every \tp, we used a hashed set storing pointers to all occurrences. However, this version had no benefits compared to the doubly linked list approach but lead to slightly longer runtimes. Considering the memory usage, in some cases it achieved better results while in others a substantial increase was noticed.} The specific order of the occurrences in an occurrences list is not relevant.

Every \tp is represented by a special object. It exhibits two pointers which reference the first and the last element of the corresponding occurrences list. Let us consider a \tp $\alpha\in\Pi$ with $|\mathsf{occ}_t(\alpha)|=m$, where $m<\lfloor\sqrt{n}\rfloor$ and $n=|t|$. Then the corresponding object exhibits two more pointers which point to the next and previous, respectively, \tp $\beta\in\Pi$ with $|\mathsf{occ}_t(\beta)|=m$. These pointers form a doubly linked list of all \tps occurring $m$ times. We denote this type of list the \emph{$m$-th \tp list}. In contrast, all \tps $\gamma\in\Pi$ with $|\mathsf{occ}_t(\gamma)|\geq\lfloor\sqrt{n}\rfloor$ are organized in one doubly linked list which is called the \emph{top \tp list}.

These doubly linked lists of \tps are again referenced by a \emph{\tp priority queue}. This queue consists of $\lfloor\sqrt{n}\rfloor$ entries. The $i$-th entry stores a pointer to the head of the $i$-th \tp list, where $1\leq i<\lfloor\sqrt{n}\rfloor$. The $\lfloor\sqrt{n}\rfloor$-th entry references the head of the top \tp list. Refer to Sect.~\ref{sec:complexityReplacementStep} on page \pageref{sec:complexityReplacementStep} for an explanation on why we designed the \tp lists and priority queue as described above. Lastly, there is a \emph{\tp hash table} storing pointers to all occurring \tps. It allows constant time access to all \tps and therefore constant time access to the first occurrence of each \tp.

Let us consider the following example to see how the utilized data structures work.
\begin{figure}[t]
	\centering
	\begin{tikzpicture}[->,>=stealth',semithick]
		\tikzstyle{level 1}=[level distance=0.8cm, sibling distance=1.5cm]
		\tikzstyle{level 2}=[level distance=0.8cm, sibling distance=0.75cm]
		\tikzstyle{level 3}=[level distance=0.8cm, sibling distance=0.75cm]
		\tikzstyle{fan}=[anchor=north,isosceles triangle, shape border uses incircle,inner sep=0.5pt,shape border rotate=90,draw]

		\node {$f$}
			child {node {$f$}
				child {node {$a$}}
				child {node {$f$} [child anchor=north]
					child {node {$a$}}
					child {node {$a$}}
				}
			}
			child {node {$f$}
				child {node {$a$}}
				child {node {$f$} [child anchor=north]
					child {node {$a$}}
					child {node {$a$}}
				}
			}
		;
	\end{tikzpicture}
	\caption{The tree $t\in T(\mathcal{F})$  modeled by the node objects from Fig.~\ref{fig:dataStructures}.}\label{fig:dataStructuresTree}
\end{figure}
\begin{example}
	Let us assume that the tree $t=(\mathsf{dom}_t,\lambda_t)\in T(\mathcal{F})$ shown in Fig.~\ref{fig:dataStructuresTree} has been generated by our implementation after reading a corresponding XML document tree. Then Fig.~\ref{fig:dataStructures} shows a simplified depiction of the data structures used to efficiently replace the \tps in the replacement step. All non-null pointers are represented by arrows starting in a filled circle and ending in an empty circle. A filled circle without an outgoing arrow denotes a null pointer.
	
	With respect to Fig.~\ref{fig:dataStructures}, there is a total of $11$ node objects representing tree nodes labeled by the two symbols $f\in\mathcal{F}_2$ and $a\in\mathcal{F}_0$. An instance of a tree node $v\in\mathsf{dom}_t$ is represented by a tabular box as it is shown in Fig.~\ref{fig:representationOfNode}. Unlike depicted, in our implementation a symbol is not directly stored within the node structure but for every unique symbol there is an object which is referenced by the corresponding nodes. The upper left empty circle of the box represents the memory address of the tree node instance. Thus, every arrow representing a pointer to the latter will end in this empty circle. 
\begin{figure}[t]
	\renewcommand{\subfigcapmargin}{-0.5cm}
	\hfill
	\subfigure[{A graphical representation of an object representing a tree node labeled by $f\in\mathcal{F}$.}]{
		\begin{tikzpicture}[semithick,scale=0.095,text height=1.5ex,text depth=.25ex]
			\begin{scope}
			\draw (0,0) -- +(35,0) -- +(35,16) -- +(0,16) -- +(0,0);
			\draw (10,0) -- (10,16);
			\draw[thin] (27,0) -- (27,16);
			\draw[thin] (31,0) -- (31,12);
			\foreach \y/\name in {12/parent,8/children,4/next,0/previous} {
				\draw (10,\y) -- (35,\y);
				\draw (10,\y-0.5) node[anchor=south west] {\footnotesize\textsf{\name}};
			}
			\draw (5,8) node[anchor=center] {\Large $f$};
		
			\begin{scope}[darkgray]
				\fill (31,14) circle (0.75);
				\fill (29,10) circle (0.75);
				\fill (33,10) circle (0.75);
				\fill (29,6) circle (0.75);
				\fill (33,6) circle (0.75);
				\fill (29,2) circle (0.75);
				\fill (33,2) circle (0.75);
			\end{scope}
		
			\filldraw[draw=black,fill=white] (0,16) circle (1);
			\end{scope}
		\end{tikzpicture}
		\label{fig:representationOfNode}
	}
	\hfill
	\subfigure[{A graphical representation of a \tp $(f,1,a)\in\Pi$.}]{
		\begin{tikzpicture}[semithick,scale=0.095,text height=1.5ex,text depth=.25ex]
			\draw (0,0) -- (26,0) -- (26,14) -- (0,14) -- (0,0);
			\draw (0,4) -- (26,4);
			\draw (0,8) -- (26,8);
			
			\filldraw[fill=white,draw=black] (0,14) circle (1);
	
			\draw (13,11) node {$(f,1,a)$};

			\draw (13,0) -- (13,8);
			\draw[thin] (9,0) -- (9,8);
			\draw[thin] (22,0) -- (22,8);
	
			\draw (0-0.5,4-0.5) node[anchor=south west] {\footnotesize\textsf{prev}};
			\draw (13-0.5,4-0.5) node[anchor=south west] {\footnotesize\textsf{next}};
			\draw (0-0.5,0-0.5) node[anchor=south west] {\footnotesize\textsf{first}};
			\draw (13-0.5,0-0.5) node[anchor=south west] {\footnotesize\textsf{last}};
			
			\begin{scope}[darkgray]
				\fill (11,6) circle (0.75);	
				\fill (24,6) circle (0.75);	
				\fill (11,2) circle (0.75);	
				\fill (24,2) circle (0.75);
			\end{scope}
		\end{tikzpicture}
		\label{fig:representationOfPair}
	}
	\hfill
\end{figure}
The filled circle in the first row of the tabular box represents the pointer to the possible parent node $\mathsf{parent}(v)$. The pointer to the $i$-th child $vi$ of the node $v$ is depicted by an arrow starting at the filled circle in the $i$-th column of the \verb|children| row, where $i\in\{1,2,\ldots,\mathsf{rank}(\lambda_t(v))\}$. Analogously, a pointer to a possible next (previous) occurrence of the \tp $\alpha=\big(\lambda_t(v),i,\lambda_t(vi)\big)$ is represented by a filled circle in the $i$-th column of the row labeled by \verb|next| (\verb|previous|, respectively), where $i\in\{1,2,\ldots,\mathsf{rank}(\lambda_t(v))\}$.
	
	Each \tp $(f,1,f)$, $(f,2,a)$, $(f,2,f)$ and $(f,1,a)$ is represented by a tabular box (see Fig.~\ref{fig:representationOfPair}). Again, unlike depicted, in our implementation a symbol is not directly stored within the \tp structure but the latter contains two pointers to the objects representing $a$ and $b$.
	The first and the last element of the occurrences list of the \tp $\alpha$ are referenced by the \verb|first| and \verb|last| pointers of the object representing the \tp $\alpha$. The pointers \verb|prev| (previous) and \verb|next| are part of the $|\mathsf{occ}_t(\alpha)|$-th \tp list if $|\mathsf{occ}_t(\alpha)|<\lfloor\sqrt{n}\rfloor$ and $n=|t|$. Otherwise they belong to the top \tp list.
	
	The \tp $(f,1,f)$ forms a trivial doubly linked list, namely, the 1st \tp list. The latter is referenced by the entry $1$ of the priority queue. The \tp $(f,1,a)$ forms the (trivial) top \tp list which is referenced by the entry $3$ of the priority queue. In contrast, the \tps $(f,2,a)$ and $(f,2,f)$ each occur twice and therefore point to each other with their \texttt{next} and \texttt{previous} pointers, respectively. The first element of the resulting 2nd \tp list is referenced by the entry $2$ of the priority queue. The \tp hash table stores the pointers to all four occurring \tps.
\end{example}

\begin{figure}[p]
	%\thisfloatpagestyle{empty}
	\vspace{-0.5cm}
	\hspace{-4.2cm}
%	\centering
%	\parbox{21cm}{
%	\beginpgfgraphicnamed{figDataStructures}
	\begin{tikzpicture}[semithick,scale=0.095,text height=1.5ex,text depth=.25ex]
	
		%%%%%%%%%%%%%%%%%%%
		% digram hash table
		%%%%%%%%%%%%%%%%%%%
		\filldraw[fill=black!15,draw=black!20] (60,35) -- (110,35) -- (110,130) -- (60,130) -- (60,35);
		\draw (85,125) node[anchor = north] { \ltp Hash Table };
		
		\begin{scope}[xshift=74cm,yshift=40cm]
			\draw (0,0) -- (22,0) -- (22,75) -- (0,75) -- (0,0);
			\draw[thin] (6,0) -- (6,75);
			
			\foreach \y/\pair in { 14/{},20/{(f,2,f)}, 30/{}, 36/{(f,1,a)}, 42/{(f,1,f)}, 56/{}, 62/{(f,2,a)} } {
				\draw (0,\y) -- (22,\y);
				\draw (14,\y - 3) node[anchor=center] {$\pair$};
			}
			
			\draw[dotted] (14,2) -- (14,12);
			\draw[dotted] (14,22) -- (14,28);
			\draw[dotted] (14,44) -- (14,54);
			\draw[dotted] (14,64) -- (14,73);

%			loop causes problem with shape names
%			\foreach \number/\y in { 1/42, 2/62, 4/20, 4/36 } {			
%				\draw (3,\y - 3) node[inner sep=0pt] (ht\number pointer) {};
%				\fill[darkgray] (3,\y - 3) circle (0.75);	
%			}
			
			\draw (3,42 - 3) node[inner sep=0pt] (ht1pointer) {};
			\fill[darkgray] (3,42 - 3) circle (0.75);
			
			\draw (3,62 - 3) node[inner sep=0pt] (ht2pointer) {};
			\fill[darkgray] (3,62 - 3) circle (0.75);
			
			\draw (3,20 - 3) node[inner sep=0pt] (ht3pointer) {};
			\fill[darkgray] (3,20 - 3) circle (0.75);
			
			\draw (3,36 - 3) node[inner sep=0pt] (ht4pointer) {};
			\fill[darkgray] (3,36 - 3) circle (0.75);
		\end{scope}
	
		%%%%%%%%%%%%%%%%%%%
		% Priority queue
		%%%%%%%%%%%%%%%%%%%
		\filldraw[fill=black!15,draw=black!20] (-70,100) -- (55,100) -- (55,130) -- (-70,130) -- (-70,100);
		\draw (-65,125) node[anchor = north west] { \ltp Priority Queue };
		
		\begin{scope}[xshift=-62.5cm,yshift=106cm]
			\draw (0,0) -- (108,0) -- (108,10) -- (0,10) -- (0,0);
			\draw (0,4)[thin] -- (108,4);
			\foreach \x/\number/\name in {36/1/1,72/2/2,108/3/\geq3} {
				\draw (\x,0) -- (\x,10);
				\draw (\x-18,7) node[anchor=center] {\large $\name$};
				\draw (\x-18,2) node[inner sep=0pt] (queue\number pointer) {};
				\fill (\x-18,2) circle (0.75);
			}
		\end{scope}
	
		%%%%%%%%%%%%%%%%%%%
		% digram structures
		%%%%%%%%%%%%%%%%%%%
		\filldraw[fill=black!15,draw=black!20] (-70,35) -- (55,35) -- (55,95) -- (-70,95) -- (-70,35);
		\draw (-65,40) node[anchor = south west] { Doubly Linked \ltps };
				
		\fill[fill=black!3] (-60,48) -- (-30,48) -- (-30,88) -- (-60,88) -- (-60,48);
		\fill[fill=black!3] (-23,48) -- (7,48) -- (7,88) -- (-23,88) -- (-23,48);
		\fill[fill=black!3] (14,48) -- (44,48) -- (44,88) -- (14,88) -- (14,48);
		
		\foreach \xshift/\yshift/\name/\pair in
		 {-58/72/pair1/{(f,1,f)},-21/72/pair2/{(f,2,a)},
		 -21/50/pair3/{(f,2,f)},16/72/pair4/{(f,1,a)}} {
			\begin{scope}[xshift=\xshift cm,yshift=\yshift cm]
				\draw (0,0) -- (26,0) -- (26,14) -- (0,14) -- (0,0);
				\draw (0,4) -- (26,4);
				\draw (0,8) -- (26,8);
				
				\draw (0,14) node[inner sep=2] (\name address) {};
				\filldraw[fill=white,draw=black] (0,14) circle (1);
		
				\draw (13,11) node {$\pair$};

				\draw (13,0) -- (13,8);
				\draw[thin] (9,0) -- (9,8);
				\draw[thin] (22,0) -- (22,8);
	
				\draw (0-0.5,4-0.5) node[anchor=south west] {\footnotesize\textsf{prev}};
				\draw (13-0.5,4-0.5) node[anchor=south west] {\footnotesize\textsf{next}};
				\draw (0-0.5,0-0.5) node[anchor=south west] {\footnotesize\textsf{first}};
				\draw (13-0.5,0-0.5) node[anchor=south west] {\footnotesize\textsf{last}};
			
				\begin{scope}[darkgray]
				\draw (11,6) node[inner sep=0pt] (\name prev) {};
				\fill (11,6) circle (0.75);	
				\draw (24,6) node[inner sep=0pt] (\name next) {};
				\fill (24,6) circle (0.75);	
				\draw (11,2) node[inner sep=0pt] (\name first) {};
				\fill (11,2) circle (0.75);	
				\draw (24,2) node[inner sep=0pt] (\name last) {};
				\fill (24,2) circle (0.75);
				\end{scope}
			\end{scope}
		}
		
		%%%%%%%%%%%%%%%%%%%
		% pointers from hashtable to digrams
		%%%%%%%%%%%%%%%%%%%
		\begin{scope}[->,>=stealth',darkgray,shorten >=-4pt]
			\draw (ht1pointer) .. controls +(-20,30) and +(5,20) .. (pair1address);
			\draw (ht2pointer) .. controls +(-20,10) and +(5,10) .. (pair2address);
			\draw (ht3pointer) .. controls +(-20,10) and +(5,5) .. (pair3address);
			\draw (ht4pointer) .. controls +(-20,20) and +(5,10) .. (pair4address);
		\end{scope}
		
		%%%%%%%%%%%%%%%%%%%
		% pointers from queue to pairs
		%%%%%%%%%%%%%%%%%%%
		\begin{scope}[->,>=stealth',darkgray,dashed]
			\draw (queue1pointer) .. controls +(10,-10) and +(-10,10) .. (pair1address);
			\draw (queue2pointer) .. controls +(10,-10) and +(-10,10) .. (pair2address);
			\draw (queue3pointer) .. controls +(10,-10) and +(-10,10) .. (pair4address);
		\end{scope}
		
		%%%%%%%%%%%%%%%%%%%
		% pointers between pairs
		%%%%%%%%%%%%%%%%%%%
		\begin{scope}[->,>=stealth',darkgray,densely dotted]
			\draw (pair3prev) .. controls +(-22,-12) and +(-8,5) .. (pair2address);
			\draw (pair2next) .. controls +(20,-15) and +(-15,10) .. (pair3address);
		\end{scope}

		%%%%%%%%%%%%%%%%%%%
		% node structures
		%%%%%%%%%%%%%%%%%%%
		\filldraw[fill=black!15,draw=black!20] (-70,30) -- (110,30) -- (110,-100) -- (-70,-100) -- (-70,30);
		
		% triangles
		\fill[fill=black!3] (-14,-23) -- (47,-23) -- (17,8) -- (-14,-23);
		\fill[fill=black!3] (-43,-54) -- (-3,-54) -- (-14,-23) -- (-43,-54);
		\fill[fill=black!3] (36,-54) -- (77,-54) -- (47,-23) -- (36,-54);
		\fill[fill=black!3] (-39,-85) -- (1,-85) -- (-3,-54) -- (-39,-85);
		\fill[fill=black!3] (46,-85) -- (86,-85) -- (77,-54) -- (46,-85);
		
		% caption
		\draw (-65,25) node[anchor=north west] { Tree Nodes };
	
		% nodes with symbol f
		\foreach \xshift/\yshift/\name/\symbol in {0/0/node1/f,-30/-31/node2/f,30/-31/node3/f,-20/-62/node5/f,60/-62/node7/f} {
			\begin{scope}[xshift=\xshift cm,yshift=\yshift cm]
				\draw (0,0) -- +(35,0) -- +(35,16) -- +(0,16) -- +(0,0);
				\draw (10,0) -- (10,16);
				\draw[thin] (27,0) -- (27,16);
				\draw[thin] (31,0) -- (31,12);
				\foreach \y/\name in {12/parent,8/children,4/next,0/previous} {
					\draw (10,\y) -- (35,\y);
					\draw (10,\y-0.5) node[anchor=south west] {\footnotesize\textsf{\name}};
				}
				\draw (5,8) node[anchor=center] {\Large $\symbol$};
			
				\begin{scope}[darkgray]
				\draw (31,14) node[inner sep=0pt] (\name parent) {};
				\fill (31,14) circle (0.75);
				\draw (29,10) node[inner sep=0pt] (\name child1) {};
				\fill (29,10) circle (0.75);
				\draw (33,10) node[inner sep=0pt] (\name child2) {};
				\fill (33,10) circle (0.75);
				\draw (29,6) node[inner sep=0pt] (\name next1) {};
				\fill (29,6) circle (0.75);
				\draw (33,6) node[inner sep=0pt] (\name next2) {};
				\fill (33,6) circle (0.75);
				\draw (29,2) node[inner sep=0pt] (\name prev1) {};
				\fill (29,2) circle (0.75);
				\draw (33,2) node[inner sep=0pt] (\name prev2) {};
				\fill (33,2) circle (0.75);
				\end{scope}
			
				\draw (0,16) node[inner sep=2] (\name address) {};
				\filldraw[fill=white,draw=black] (0,16) circle (1);
				
				\draw (0,0) node[inner sep=0pt] (\name southwest) {};
				\draw (35,0) node[inner sep=0pt] (\name southeast) {};
				\draw (35,16) node[inner sep=0pt] (\name northeast) {};
				\draw (0,16) node[inner sep=0pt] (\name northwest) {};
			\end{scope}
		}
		
		% nodes with symbol a
		\foreach \xshift/\yshift/\name/\symbol in {-60/-62/node4/a,20/-62/node6/a,-55/-93/node8/a,-15/-93/node9/a,30/-93/node10/a,70/-93/node11/a} {
			\begin{scope}[xshift=\xshift cm,yshift=\yshift cm]
				\draw (0,0) -- +(35,0) -- +(35,16) -- +(0,16) -- +(0,0);
				\draw (10,0) -- (10,16);
				\draw[thin] (27,12) -- (27,16);
				% \draw[thin] (31,16) -- (31,16);
				\foreach \y/\name in {12/parent} {
					\draw (10,\y) -- (35,\y);
					\draw (10,\y-0.5) node[anchor=south west] {\footnotesize\textsf{\name}};
				}
				\draw (5,8) node[anchor=center] {\Large $\symbol$};
			
				\draw (31,14) node[inner sep=0pt] (\name parent) {};
				\fill (31,14) circle (0.75);
			
				\draw (0,16) node[inner sep=2] (\name address) {};
				\filldraw[fill=white,draw=black] (0,16) circle (1);
			\end{scope}
		}
		
		% parent pointers
		\begin{scope}[->,>=stealth',darkgray]
			\draw (node2parent) .. controls +(-5,15) and +(-30,-30) .. (node1address);
			\draw (node3parent) .. controls +(-10,20) and +(-40,-40) .. (node1address);
			\draw (node4parent) .. controls +(-5,15) and +(-30,-30) .. (node2address);
			\draw (node5parent) .. controls +(-10,20) and +(-40,-40) .. (node2address);
			\draw (node6parent) .. controls +(-5,15) and +(-30,-30) .. (node3address);
			\draw (node7parent) .. controls +(-10,20) and +(-40,-40) .. (node3address);
			
			\begin{scope}[shorten >=-4pt]
				\draw (node8parent) .. controls +(7.5,10) and +(-7.5,-25) .. (node5address);
				\draw (node9parent) .. controls +(-5,15) and +(-11,-40) .. (node5address);
				\draw (node10parent) .. controls +(7.5,10) and +(-7.5,-25) .. (node7address);
				\draw (node11parent) .. controls +(-5,15) and +(-11,-40) .. (node7address);
			\end{scope}
		\end{scope}
		
		% child pointers
		\begin{scope}[->,>=stealth',gray,densely dotted]
			\draw (node1child1) .. controls +(50,-30) and +(-35,20) .. (node2address);
			\draw (node1child2) .. controls +(38,-25) and +(-25,15) .. (node3address);
			\draw (node2child1) .. controls +(50,-30) and +(-35,20) .. (node4address);
			\draw (node2child2) .. controls +(38,-25) and +(-25,15) .. (node5address);
			\draw (node3child1) .. controls +(50,-30) and +(-35,20) .. (node6address);
			\draw (node3child2) .. controls +(38,-25) and +(-25,15) .. (node7address);
			
			\draw (node5child1) .. controls +(50,-30) and +(-35,20) .. (node8address);
			\draw (node5child2) .. controls +(38,-25) and +(-25,15) .. (node9address);
			\draw (node7child1) .. controls +(50,-30) and +(-35,20) .. (node10address);
			\draw (node7child2) .. controls +(38,-25) and +(-25,15) .. (node11address);
		\end{scope}
		
		% next pointers
		\begin{scope}[->,>=stealth',darkgray,densely dashed]
			% (f,2,a)
			\draw (node5next2) .. controls +(10,-10) and +(-10,15) .. (node7address);
			\draw (node7prev2) .. controls +(10,-20) and +(-10,15) .. (node5address);
			
			% (f,2,f)
			\draw (node2next2) .. controls +(15,-15) and +(-15,15) .. (node3address);
			\draw (node3prev2) .. controls +(-50,-30) and +(50,30) .. (node2address);
			
			% (f,1,a)
			\draw (node2next1) .. controls +(38,-25) and +(-15,15) .. (node5address);
			\draw (node5prev1) .. controls +(-60,-25) and +(-15,15) .. (node2address);
			
			\draw (node5next1) .. controls +(38,-25) and +(-15,15) .. (node3address);
			\draw (node3prev1) .. controls +(-10,-15) and +(-35,20) .. (node5address);
			
			\draw (node3next1) .. controls +(38,-25) and +(-15,15) .. (node7address);
			\draw (node7prev1) .. controls +(40,15) and +(20,25) .. (node3address);
		\end{scope}
		
		%%%%%%%%%%%%%%%%%%%
		% pointers between pairs and nodes
		%%%%%%%%%%%%%%%%%%%
		\begin{scope}[->,>=stealth',darkgray]
			% (f,1,f)
			\draw (pair1first) .. controls +(25,-20) and +(-35,5) .. (node1address);
			\draw (pair1last) .. controls +(15,-20) and +(-35,5) .. (node1address);
			
			% (f,2,a)
			%\draw (pair2first) .. controls +(5,-10) and +(20,25) .. +(17,-27) .. controls +(-30,-70) and +(-20,20) .. +(-54,-110) .. controls +(-38,-155) and +(-10,10) .. (node5address);
			\draw (pair2first) .. controls +(90,-24) and +(-90,40) .. (node5address);
			\draw (pair2last) .. controls +(15,-5) and +(60,15) .. (node7address);
			
			% (f,2,f)
			\draw (pair3first) .. controls +(-15,-20) and +(20,5) .. (node2address);
			\draw (pair3last) .. controls +(-15,-20) and +(60,40) .. (node3address);
			
			% (f,1,a)
			\draw (pair4first) .. controls +(25,-20) and +(-35,5) .. (node2address);
			\draw (pair4last) .. controls +(-5,-25) and +(60,15) .. (node7address);
		\end{scope}
	\end{tikzpicture}
%	\endpgfgraphicnamed
%	}
	\caption{A simplified depiction of a part of the data structures used by our implementation.}\label{fig:dataStructures}
\end{figure}

\subsection{Complexity of the \trp Algorithm}\label{sec:complexityReplacementStep}

\begin{theorem}
For any given input tree with $n$ edges, \trp produces in time
$\mathcal{O}(|t|)$ a $k$-bounded linear SLCF tree grammar
${\cal G}$, where $k\in\mathbb{N}$ is a constant, $t\in T(\mathcal{F})$ 
is the binary representation of the input tree, and
$\mathsf{val}({\cal G})=t$. 
\end{theorem}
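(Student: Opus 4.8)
The plan is to verify the four asserted properties of $\mathcal{G}$ separately, treating correctness ($\mathsf{val}(\mathcal{G})=t$), linearity and $k$-boundedness as simple invariants and putting essentially all the work into the $\mathcal{O}(|t|)$ time bound. For the structural part: producing the binary/DAG representation $t$ from the XML document runs in $\mathcal{O}(|t|)$ by the analysis of the reading and representation routines; during the replacement step each $\mathcal{G}_i$ has $\mathsf{val}(\mathcal{G}_i)=t$, each new right-hand side $\mathsf{pat}(\alpha_i)$ is linear, and each new nonterminal $A_{i+1}$ has rank $\mathsf{par}(\alpha_i)\le m=k$; and the pruning elimination step only substitutes, never introducing nonterminals, so an easy induction shows it preserves $\mathsf{val}$, linearity and all ranks. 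Hence the output is always a $k$-bounded linear SLCF tree grammar generating $t$, and it only remains to bound the running time.

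For the replacement step the key quantitative fact is that replacing a single occurrence of a \tp $\alpha$ removes exactly one edge from the current start production (the pattern $\mathsf{pat}(\alpha)$ carries $\mathsf{par}(\alpha)+1$ edges while $A(\ldots)$ carries $\mathsf{par}(\alpha)$), so $|t_{i+1}|=|t_i|-|\mathsf{occ}_{t_i}(\alpha_i)|$, whence $\sum_{i}|\mathsf{occ}_{t_i}(\alpha_i)|\le|t_0|=n$ and $h\le n/2$. Each occurrence that is actually rewritten is touched $\mathcal{O}(k)=\mathcal{O}(1)$ times: rewiring parent/children pointers and updating the occurrence lists and frequency counters of the $\mathcal{O}(k)$ incident \tps; a \tp changes bucket in the priority queue only when its counter changes, so the total number of bucket moves is $\mathcal{O}(n)$, each $\mathcal{O}(1)$. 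Finding $\mathsf{max}_m(t_i)$ uses the $\lfloor\sqrt{n}\rfloor$-bucket queue in Larsson--Moffat style: since the total number of occurrences is at most $n$, at most $\sqrt{n}$ \tps can have at least $\sqrt{n}$ occurrences, so the top list has $\mathcal{O}(\sqrt{n})$ entries; each scan of it costs $\mathcal{O}(\sqrt{n})$ but is immediately followed by replacing a \tp of frequency $\ge\sqrt{n}$, i.e. by deleting $\ge\sqrt{n}$ edges from the start production, which happens at most $n/\sqrt{n}=\sqrt{n}$ times; the walk-down through the low buckets is amortized against the $\mathcal{O}(n)$ total counter changes (the current maximal frequency increases at most $\mathcal{O}(n)$ times, hence decreases at most $\mathcal{O}(n)$ times). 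Building all occurrence lists initially is one postorder traversal, $\mathcal{O}(n)$. Altogether the replacement step is $\mathcal{O}(n)$.

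For the pruning step one first bounds $|\mathcal{G}_h|\le|t_h|+\sum_i|\mathsf{pat}(\alpha_i)|\le n+h(k+1)=\mathcal{O}(n)$. In the first phase every eliminated production has exactly one reference, so its elimination is a single substitution costing $\mathcal{O}(|s|)$, and summing gives $\mathcal{O}(|\mathcal{G}_h|)=\mathcal{O}(n)$. In the second phase note that every right-hand side produced by the algorithm has at least two non-parameter nodes, so $|s|\ge\mathsf{rank}(A)+1$ for each remaining $(A\to s)$; together with the elimination criterion $\mathsf{sav}_{\mathcal{G}}(A)=|\mathsf{ref}_{\mathcal{G}}(A)|\,(|s|-\mathsf{rank}(A))-|s|\le 0$ and $|\mathsf{ref}_{\mathcal{G}}(A)|\ge 2$ (the first phase removed all others), this forces $|s|=\mathcal{O}(k)$ and $|\mathsf{ref}_{\mathcal{G}}(A)|=\mathcal{O}(k)$, so each second-phase elimination --- the $\mathcal{O}(k)$ substitutions plus the $\mathcal{O}(1)$ updates of reference counters and $\mathsf{sav}$-values of the $\mathcal{O}(1)$ affected nonterminals --- costs $\mathcal{O}(1)$; since we cycle once through the $\mathcal{O}(h)=\mathcal{O}(n)$ productions in reverse hierarchical order (which guarantees that an eliminated $A$ is spliced only into right-hand sides already processed, so the bookkeeping stays local), the phase is $\mathcal{O}(n)$. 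Using $k$ constant everywhere turns each $\mathcal{O}(k)$ into $\mathcal{O}(1)$, and the three stages together give the claimed $\mathcal{O}(|t|)$ bound.

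I expect the main obstacle to be the bookkeeping correctness and amortization inside the replacement step rather than the structural properties: one must argue that the doubly linked occurrence list attached to each \tp stays equal to a \emph{maximum} non-overlapping subset of $\mathsf{OCC}_{t_i}(\alpha)$ after each parallel replacement --- which is genuinely delicate for a self-overlapping \tp $\alpha=(a,i,a)$, where (as the examples with the tree $f(\ldots)$ show) the list cannot be all of $\mathsf{OCC}_{t_i}(\alpha)$ --- and that restoring this invariant by purely local repairs at the $\mathcal{O}(k)$ neighbours of each rewritten occurrence costs only $\mathcal{O}(1)$ amortized per touched node. Making the priority-queue amortization (walk-downs and top-list scans) fully airtight so the total is truly $\mathcal{O}(n)$, and observing that $|\mathcal{G}_i|$ may temporarily grow during the replacement step (so the size cannot be bounded monotonically and one must instead track the number of rewritten occurrences), are the other points requiring care.
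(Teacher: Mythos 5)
Your proposal is correct and takes essentially the same route as the paper: the replacement step is charged against the at most $n$ absorbed edges (constant local pointer and occurrence-list work per replaced occurrence), and the most-frequent-\tp retrieval uses the same $\lfloor\sqrt{n}\rfloor$-bucket Larsson--Moffat queue with the top-list scan amortized against the $\geq\lfloor\sqrt{n}\rfloor$ edges its replacement removes, while your structural induction and your pruning-step accounting merely flesh out what the paper asserts as straightforward. One remark: the obstacle you anticipate---proving that the maintained occurrence lists stay \emph{maximum} non-overlapping subsets of $\mathsf{OCC}_{t_i}(\alpha)$ after each parallel replacement---is neither needed nor true in the paper's implementation, which explicitly works with possibly non-maximal (even undercounting) sets $\mathsf{occ}'_{t}(\alpha)$ after the local updates; maximality (Lemma~\ref{lemma:occtIsMaximal}) is only used for the initial construction, and neither the $\mathcal{O}(|t|)$ bound nor $\mathsf{val}(\mathcal{G})=t$ nor $k$-boundedness depends on it.
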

It is straightforward to come up with a linear time implementation of the pruning step of the Re-pair for Trees algorithm (\cf Sect.~\ref{sec:pruningStep} on page \pageref{sec:pruningStep}). Therefore, we just want to investigate the complexity of the replacement step which was described in Sect.~\ref{sec:replacementStep} on page \pageref{sec:replacementStep}.

With every replacement of a \tp occurrence one edge of the input tree is absorbed. Therefore, a run of \trp can consist of at most $n-1$ iterations, where $n$ is the size of the input tree. Each replacement of an occurrence can be accomplished in $\mathcal{O}(1)$ time since at most $k$ children need to be reassigned --- in our implementation, the reassignment of a child node is just a matter of updating two pointers.\footnote{As already mentioned at the beginning of this section on page \pageref{ch:implementationDetails}: The maximal rank of a nonterminal of a grammar generated by \trp is $k\in\mathbb{N}$. The constant $k$ can be specified by a command line switch.} For every production which is introduced during a run of our algorithm it holds that the right-hand side $t$ is of size $|t|<2+k$, \ie, it can be constructed in constant time. 

However, to show that the replacement step can be performed in linear time two more aspects need to be considered. Imagine that we are in the $i$-th iteration of our algorithm (and $\mathcal{G}_{i-1}$ is the current grammar). Let $t\in T(\mathcal{F}\cup\mathcal{N})$ be the right-hand side of $\mathcal{G}_{i-1}$'s start production.
\begin{enumerate}[(1)]
	\item \emph{Updating the sets of non-overlapping occurrences}
	
		In every iteration of our algorithm we need to know the number of occurrences of each \tp. Only in that case we are able to determine the most frequent \tp. In addition, for replacing the \tp $\mathsf{max}_k(t)$, we need to know $\mathsf{occ}_t(\mathsf{max}_k(t))$. How can we compute the set $\mathsf{occ}_t(\alpha)$ for every \tp $\alpha\in\Pi$ without traversing the whole right-hand side of the current start production in each iteration?
	\item \emph{Retrieving the most frequent \tp} 
	
		Let us assume that there is an up to date set $\mathsf{occ}_t(\alpha)$ available for every $\alpha\in\Pi$ occurring in $t$ (in the form of occurrences lists). How do we determine the most frequent \tp in constant time?
\end{enumerate}
In the following we consider each of the above aspects in detail.

\subsubsection{Updating the Sets of Non-overlapping Occurrences}

Let the binary tree $t=(\mathsf{dom}_t,\lambda_t)\in T(\mathcal{F})$ be our input tree. At the beginning of the replacement step the set $\mathsf{occ}_t(\alpha)$ for every \tp $\alpha\in\Pi$ occurring in $t$ is initially constructed. This is done by parsing the tree $t$ in a similar way as it is done in the function \texttt{retrieve-occurrences} which is listed in Fig.~\ref{lst:functionRetrieveOccurrences}. However, during the traversal not only one \tp is considered but for every encountered \tp $\alpha\in\Pi$ the set $\mathsf{occ}_t(\alpha)$ is constructed. Fig.~\ref{lst:functionRetrieveAllOccs} shows a possible function which accomplishes this task.
\begin{figure}[tb]
	\lstset{emph={FUNCTION,ENDFUNC,return,for,if,else,then,do,endif,endfor,while,endwhile}, emphstyle=\bfseries,morecomment=[l]{//},commentstyle=\color{gray}}
	\begin{lstlisting}
FUNCTION retrieve-all-occs((*@$t$@*))
	(*@$v:=\varepsilon$@*);
	while (true) do
		(*@$v:=\;$@*)next_in_postorder((*@$t$@*), (*@$v$@*));
		if ((*@$v\neq\varepsilon$@*)) then
			(*@$\alpha:=(\lambda_t(\mathsf{parent}(v)),\mathsf{index}(v),\lambda_t(v))$@*);
			if ((*@$v\notin\mathsf{occ}_t(\alpha)$@*)) then
				(*@$\mathsf{occ}_t(\alpha):=\mathsf{occ}_t(\alpha)\cup\{\mathsf{parent}(v)\}$@*)
			endif
		else
			return;
		endif
	endwhile
ENDFUNC
	\end{lstlisting}
	\caption{The function \texttt{retrieve-all-occs} which is used to construct the set $\mathsf{occ}_t(\alpha)$ for every \tp $\alpha\in\Pi$ occurring in the tree $t\in T(\mathcal{F}\cup\mathcal{N})$. It uses the function \texttt{next-in-postorder} listed in Fig.~\ref{lst:traversalAlgorithm}.}\label{lst:functionRetrieveAllOccs}
\end{figure}

Therefore, in the first iteration of our computation we have up to date sets of non-overlapping occurrences at hand. However, we cannot afford to redo this traversal in every subsequent iteration. In this case we would not be able to achieve a linear runtime of our algorithm.

Fortunately, there is another way of keeping track of the sets of non-overlapping occurrences. It relies on the fact that every replacement of an \tp occurrence $v$ only involves those occurrences in the neighborhood of $v$ which overlap with $v$.
\begin{example}
	Let us consider the tree $t'=(\mathsf{dom}_{t'},\lambda_{t'})\in T(\mathcal{F})$ which is depicted in Fig.~\ref{fig:absorbedOccurrences}. The occurrences which would be absorbed by the replacement of the occurrence $2\in\mathsf{dom}_{t'}$ of the \tp $(f,1,g)$ are highlighted.
\end{example}
For every \tp $\alpha\in\Pi$ we set $\mathsf{occ}_t'(\alpha):=\mathsf{occ}_t(\alpha)$ and base all upcoming computations on the set $\mathsf{occ}_t'(\alpha)$. In particular we use them to determine the most frequent \tp in each iteration. 

Let us consider the $i$-th iteration of a run $\mathcal{G}_0,\mathcal{G}_1,\ldots,\mathcal{G}_h$ of Re-pair for Trees on the input tree $t\in T(\mathcal{F})$, where $h\in\mathbb{N}$ and $i\in\{1,2,\ldots,h\}$. Then $\mathcal{G}_{i-1}=(N_{i-1},P_{i-1},S_{i-1})$ is the current grammar. Let $t_{i-1}\in T(\mathcal{F})$ be the right-hand side of $S_{i-1}$. Let us assume that an up to date set $\mathsf{occ}_{t_{i-1}}'(\beta)$ for every $\beta\in\Pi$ which is occurring in $t_{i-1}$ is at hand. Further, let us assume that $\mathsf{max}(t_{i-1})=(a,j,b)=:\alpha$ and let $v\in\mathsf{occ}_{t_{i-1}}'(\alpha)$.

\begin{figure}[tb]
	\centering\small
	\pgfdeclarelayer{background layer}
	\pgfsetlayers{background layer,main}
	\begin{tikzpicture}[semithick,level distance=1.3cm,sibling distance=0.75cm,text height=1.5ex,scale=0.9]
		\tikzstyle{fan}=[anchor=north,isosceles triangle, shape border uses incircle,inner sep=0.5pt,shape border rotate=90,draw]
		
		\begin{scope}[->,>=stealth']
		\node (epsilon) {$g$}
			child[sibling distance=4cm] {node (1) {$g$}
				child[sibling distance=1cm] {node (11) {$a$}}
				child[sibling distance=1cm] {node (12) {$b$}}
				child[sibling distance=1cm] {node (12) {$c$}}
				child[sibling distance=1cm] {node (12) {$d$}}
			}
			child {node (2) {$f$} % [grow=-110]
				child {node[outer sep=3pt] (21) {$g$}
					child[sibling distance=1.5cm] {node[outer sep=2pt] (211) {$h$}
						child[sibling distance=1cm] {node (2111) {$a$}}
						child[sibling distance=1cm] {node (2112) {$b$} }
						child[sibling distance=1cm] {node (2113) {$c$} }
					}
					child[sibling distance=1.5cm] {node (212) {$a$}}
					child[sibling distance=1cm] {node[outer sep=2pt] (213) {$h$}
						child {node (2131) {$a$}}
						child {node (2132) {$b$} }
						child {node (2133) {$c$} }
					}
					child[missing] {node {$v4$}}
					child {node (214) {$b$} }
				}
				child[sibling distance=1cm] {node (22) {$a$}}
			}
			child[sibling distance=3cm] {node (3) {$f$}
				child[sibling distance=1cm] {node (31) {$a$}}
				child[sibling distance=1cm] {node[outer sep=2pt] (32) {$h$}
					child {node (321) {$a$}}
					child {node (322) {$b$} }
					child {node (323) {$c$} }
				}
			}
		;
		\end{scope}
		
		\begin{pgfonlayer}{background layer}
			\begin{scope}[rounded corners,fill=black!10,draw=gray,dashed]
			% u -> v
			\filldraw[fill=black!10!white] ([xshift=6pt,yshift=7pt] 2.center) -- ([xshift=-3pt,yshift=7pt] 2.center) -- ([xshift=-5pt,yshift=5pt] 21.center) -- ([xshift=-10pt,yshift=2pt] 211.center) -- ([xshift=-10pt,yshift=-5pt] 211.center) -- ([xshift=-9pt,yshift=-5pt] 213.center) -- ([xshift=-7pt,yshift=0pt] 2131.center) -- ([xshift=-7pt,yshift=-5pt] 2131.center) -- ([xshift=7pt,yshift=-5pt] 2133.center) -- ([xshift=7pt,yshift=0pt] 2133.center) -- ([xshift=9pt,yshift=-5pt] 213.center) -- ([xshift=10pt,yshift=-5pt] 214.center) -- ([xshift=10pt,yshift=2pt] 214.center) -- ([xshift=6pt,yshift=5pt] 21.center) -- cycle;
			\end{scope}
		\end{pgfonlayer}		
	\end{tikzpicture}
	\caption{The tree $t'\in T(\mathcal{F})$. All occurrences which would be absorbed by the replacement are highlighted.}\label{fig:absorbedOccurrences}
\end{figure}
\begin{figure}[tb]
	\lstset{emph={FUNCTION,ENDFUNC,return,traverse,endtraverse,for,if,else,then,do,endif,endfor,repeat,endrepeat,times}, emphstyle=\bfseries, emph={[2]subtrees_ht}, emphstyle={[2]\slshape}, morecomment=[l]{//}}
	\begin{lstlisting}
FUNCTION remove-absorbed-occs((*@$t,v,j$@*))
	if ((*@$v\neq\varepsilon$@*)) then
		(*@$\alpha:=\big(\lambda_{t}(\mathsf{parent}(v)),\mathsf{index}(v),\lambda_{t}(v)\big)$@*);
		(*@$\mathsf{occ}_{t}'(\alpha):=\mathsf{occ}_{t}'(\alpha)\setminus\{\mathsf{parent}(v)\}$@*);
	endif
	
	for ((*@$l\in\{1,2,\ldots,\mathsf{rank}(\lambda_{t}(v))\}$@*)) do
		(*@$\alpha:=\big(\lambda_{t}(v),l,\lambda_{t}(vl)\big)$@*);
		(*@$\mathsf{occ}_{t}'(\alpha):=\mathsf{occ}_{t}'(\alpha)\setminus\{v\}$@*);
	endfor
	
	for ((*@$l\in\{1,2,\ldots,\mathsf{rank}(\lambda_{t}(vj))\}$@*)) do
		(*@$\alpha:=\big(\lambda_{t}(vj),l,\lambda_{t}(vjl)\big)$@*);
		(*@$\mathsf{occ}_{t}'(\alpha):=\mathsf{occ}_{t}'(\alpha)\setminus\{vj\}$@*);
	endfor
ENDFUNC	
	\end{lstlisting}
	\caption{Listing of the function \texttt{remove-absorbed-occs} which removes all absorbed occurrences from the $\mathsf{occ}_{t}'$ sets.}\label{lst:remove-absorbed-occs}
\end{figure}
\begin{figure}[tb]
\lstset{emph={FUNCTION,ENDFUNC,return,traverse,endtraverse,for,if,else,then,do,endif,endfor,repeat,endrepeat,times}, emphstyle=\bfseries, emph={[2]subtrees_ht}, emphstyle={[2]\slshape}, morecomment=[l]{//}}
	\begin{lstlisting}
FUNCTION add-new-occs((*@$t,u$@*))
	if ((*@$u\neq\varepsilon$@*)) then
		(*@$\alpha:=\big(\lambda_{t}(\mathsf{parent}(u)),\mathsf{index}(u),\lambda_{t}(u)\big)$@*);
		(*@$\mathsf{occ}_{t}(\alpha)':=\mathsf{occ}_{t}'(\alpha)\cup\{\mathsf{parent}(u)\}$@*);
	endif
	
	for ((*@$l\in\{1,2,\ldots,\mathsf{rank}(\lambda_t(u))\}$@*)) do
		(*@$\alpha:=\big(\lambda_t(u),l,\lambda_{t}(ul)\big)$@*);
		(*@$\mathsf{occ}_{t}'(\alpha):=\mathsf{occ}_{t}'(\alpha)\cup\{u\}$@*);
	endfor
ENDFUNC	
	\end{lstlisting}
	\caption{Listing of the function \texttt{add-new-occs} which adds all newly created occurrences to the $\mathsf{occ}_{t}'$ sets.}\label{lst:add-new-occs}
\end{figure}
Before the actual replacement of the occurrence $v$ we make use of the function listed in Fig.~\ref{lst:remove-absorbed-occs}. The function call \texttt{remove-absorbed-occs(}$t_{i-1},v,j$\texttt{)} removes all occurrences which will be absorbed by the upcoming replacement from the sets $\mathsf{occ}_{t_{i-1}}'$. After the replacement of $v$ by a new node $u$ with $\lambda_{t_i}(u)=A_i\in\mathcal{N}$ we call the function \verb|add-new-occs| (which is listed Fig.~\ref{lst:add-new-occs}) and pass the tree $t_{i-1}$ and the node $u$. The function \verb|add-new-occs| adds all new occurrences which arose by the introduction of $u$ to the sets of non-overlapping occurrences. Finally, after all occurrences from $\mathsf{occ}_{t_{i-1}}'(\alpha)$ have been replaced, we set $\mathsf{occ}_{t_i}'(\beta):=\mathsf{occ}_{t_{i-1}}'(\beta)$ for all $\beta\in\Pi$ occurring in $t_i$.

Let $\alpha\in\Pi$ be a \tp occurring in $t_i$. The above computed set $\mathsf{occ}_{t_i}'(\alpha)$ may not be equal to the actual set $\mathsf{occ}_{t_i}(\alpha)$ as it would be constructed by a complete postorder traversal of $t_i$ using the function \texttt{retrieve-occurrences} from Fig.~\ref{lst:functionRetrieveOccurrences}. 
\begin{example}
Consider, for instance, the tree $t''\in T(\mathcal{F})$ depicted in Fig.~\ref{fig:treeComputationOccSets}. Let $\alpha=(f,2,f)$. In the first iteration of our algorithm, we would obtain $\mathsf{occ}_{t''}'(\alpha):=\mathsf{occ}_{t''}(\alpha)=\{2\}$. Now, let us assume that we replace the \tp $(f,1,c)$ (we could easily enlarge $t''$ such that $(f,1,c)$ is the most frequent \tp and still show the same). After performing this replacement and especially after calling the functions \verb|remove-absorbed-occs| and \texttt{add-new-occs} we would have $\mathsf{occ}_{t''}'(\alpha)=\emptyset$. However, a postorder traversal of the updated tree $t''$ would result in $\mathsf{occ}_{t''}(\alpha)=\{\varepsilon\}$. 
\end{example}
\begin{figure}[t]
	\centering
	\begin{tikzpicture}[->,>=stealth',semithick,level distance=0.8cm, sibling distance=1.5cm]
		\tikzstyle{fan}=[anchor=north,isosceles triangle, shape border uses incircle,inner sep=0.5pt,shape border rotate=90,draw]
			
		\node {$f$} % [grow=-110]
			child {node {$a$}}
			child {node {$f$}
				child {node {$b$}}
				child {node {$f$}
					child {node {$c$}}
					child {node {$d$}}
				}
			}
		;
	\end{tikzpicture}
	\caption{Tree $t''\in T(\mathcal{F})$ consisting of nodes labeled by the terminal symbols $a,b,c,d,f\in\mathcal{F}$. We have to deal with three overlapping occurrences of the \tp $(f,2,f)$.}\label{fig:treeComputationOccSets}
\end{figure}
Updating the sets of non-overlapping occurrences
takes constant time per occurrence replacement. At most $2k+1$
occurrences need to be removed by the function 
\texttt{remove-absorbed-occs} and at most $k+1$
occurrences need to be added by the function \texttt{add-new-occs}. An
occurrence $v$ of a \tp $\alpha$ can be removed from the occurrences
list of $\alpha$ in constant time by setting the \verb|next| and
\verb|previous| pointers of the corresponding node object to null. In
addition, if $v$ is the first (last) occurrence in the occurrence list
of $\alpha$ the \verb|first| (\verb|last|) pointer of the object
representing the \tp $\alpha$ needs to be updated. This can also be
accomplished in constant time by using the \tp hash table. 
Analogously, an occurrence can be added to an occurrences list in $\mathcal{O}(1)$ time.

\subsubsection{Retrieving the Most Frequent \ltp}

We now investigate the time needed to obtain the most frequent \tp in an iteration of our algorithm. First of all, let us state the following fact: Let $m\in\mathbb{N}\cup\{\infty\}$ and let $\mathcal{G}_0,\mathcal{G}_1,\ldots,\mathcal{G}_n$ be a run of Re-pair for Trees, where $n\in\mathbb{N}_{>0}$, $\mathcal{G}_i=(N_i,P_i,S_i)$ and $(S_i\to t_i)\in P_i$ for every $i\in\{0,1,\ldots,n\}$. Then
\[|\mathsf{occ}_{t_i}(\mathsf{max}_m(t_i))|\geq|\mathsf{occ}_{t_{i+1}}(\mathsf{max}_m(t_{i+1}))|\]
holds for every $i\in\{0,1,\ldots,n-1\}$.\footnote{Intuitively, we define $|\mathsf{occ}_{t_n}(\mathsf{max}_m(t_n))|=0$ if $\mathsf{max}_m(t_n)=\mathsf{undefined}$.} For every \tp $\alpha\in\Pi$ occurring in $t_i$ it holds that $|\mathsf{occ}_{t_i}(\alpha)|\geq|\mathsf{occ}_{t_{i+1}}(\alpha)|$ and for every \tp $\beta\in\Pi$ which was introduced in $\mathcal{G}_{i+1}$ it holds that $|\mathsf{occ}_{t_{i+1}}(\beta)|\leq|\mathsf{occ}_{t_i}(\mathsf{max}_m(t_{i}))|$, where $i\in\{0,1,\ldots,n-1\}$.

It is easy to see that, if the top \tp list is empty, we can obtain the most frequent \tp in constant time. We just need to walk down the remaining $\lfloor\sqrt{n}\rfloor-1$ \tp lists and choose the first element of the first non-empty list. In every iteration, after we have determined the most frequent \tp, we remember the first non-empty \tp list in order to save ourself the needless and time-consuming rechecking of the empty \tp lists.

Now, let us assume that the top \tp list, \ie, the doubly linked list of all \tps occurring at least $\lfloor\sqrt{n}\rfloor$ times, is not empty. We need to scan all elements in it since the \tps contained are not ordered by their frequency. There can be roughly at most $\sqrt{n}$ \tps in the top \tp list. Therefore, we need roughly $\mathcal{O}(\sqrt{n})$ time to retrieve the most frequent \tp. However, by the replacement of this \tp at least $\lfloor\sqrt{n}\rfloor$ edges are absorbed. It is easy to see that, all in all, obtaining the most frequent \tp needs constant time on average.

In a run of \trp we can replace at most $n-1$ \tp occurrences and, as shown before, the replacement of each occurrence, the update of the sets of non-overlapping occurrences and the determination of the most frequent pair can be accomplished in constant time per occurrence replacement. Thus, the whole replacement step can be completed in linear time.

\subsection{Impact of the DAG Representation}\label{sec:impactDag}

In the preceding section, dealing with the complexity of our implementation of the Re-pair for Trees algorithm, we did not pay attention to the underlying DAG representation of the input tree. This enabled us to concentrate on the essentials. Nevertheless, we have to clarify the impact of this representation, particularly concerning the compression performance and the runtime of our implementation, since \trp uses it by default. Only by starting \trp with the \texttt{-no\_dag} switch it forgos the DAG representation and loads the whole input tree into main memory.

Let $\mathcal{G}=(N,P,S)$ be a $0$-bounded SLCF tree grammar. We assume without loss of generality that for every $B\in N$ it holds that $B\leadsto_{\mathcal{G}}^\ast S$. Let $(A\to t)\in P$, $t=(\mathsf{dom}_t,\lambda_t)\in T(\mathcal{F})$ and $v\in\mathsf{dom}_t$. We define the function \texttt{unfold} using the algorithm listed in Fig.~\ref{lst:functionUnfold}.
It holds that $\texttt{unfold}(\mathcal{G},t,v)\subseteq\mathsf{dom}_{\mathsf{val}(\mathcal{G})}$ and it also holds that
\[\bigcup_{\substack{(A\to t)\in P,\\v\in\mathsf{dom}_t}}\texttt{unfold}(\mathcal{G},t,v)=\mathsf{dom}_{\mathsf{val}(\mathcal{G})}\enspace\mbox{.}\]
Let us consider a run $\mathcal{G}_0,\mathcal{G}_1,\ldots,\mathcal{G}_h$ of \trp, where $\mathcal{G}_i=(N_i,P_i,S_i)$, $(S_i\to t_i)\in P_i$, $h\in\mathbb{N}$ and $i\in\{0,1,\ldots,h\}$. Then, in our implementation, $t_i$ is represented by a \mbox{$0$-bounded} (linear) SLCF tree grammar $\overline{\mathcal{G}}_i=(\overline{N}_i,\overline{P}_i,\overline{S}_i)$, \ie, we have $\mathsf{val}(\overline{\mathcal{G}}_i)=t_i$, by default.

\begin{figure}[tb]	\lstset{emph={FUNCTION,ENDFUNC,return,traverse,endtraverse,for,if,else,then,do,endif,endfor,repeat,endrepeat,times}, emphstyle=\bfseries, emph={[2]subtrees_ht}, emphstyle={[2]\slshape}, morecomment=[l]{//}}
	\begin{lstlisting}
FUNCTION unfold((*@$\mathcal{G},t,v$@*))
	let (*@$\mathcal{G}=(N,P,S)$@*) and (*@$A\to t\in P$@*);
	if (*@$\mathsf{ref}_{\mathcal{G}}(A)\neq\emptyset$@*) then
		(*@$M:=\emptyset$@*);
		for each (*@$(t',v')\in\mathsf{ref}_{\mathcal{G}}(A)$@*) do
			(*@$M:=M\cup\{uv\mid u\in\texttt{unfold}(\mathcal{G},t',v')\}$@*);
		endfor
	else
		(*@$M:=\{v\}$@*);
	endif
	return (*@$M$@*);
ENDFUNC
	\end{lstlisting}
	\caption{The algorithm which computes $\texttt{unfold}(\mathcal{G},t,v)$, where we have $t\in T(\mathcal{F}\cup\mathcal{N})$ and $v\in\mathsf{dom}_t$.}\label{lst:functionUnfold}
\end{figure}
\begin{figure}[tb]
	\lstset{emph={FUNCTION,ENDFUNC,return,for,if,else,then,do,endif,endfor,while,endwhile}, emphstyle=\bfseries,morecomment=[l]{//},commentstyle=\color{gray}}
	\begin{lstlisting}
FUNCTION retrieve-all-occs-dag((*@$\overline{t}$@*))
	(*@$v:=\varepsilon$@*);
	while (true) do
		(*@$v:=\;$@*)next_in_postorder((*@$\overline{t}$@*), (*@$v$@*));
		if ((*@$v\neq\varepsilon$@*)) then
			if ((*@$\lambda_{\overline{t}}(v)\notin\mathcal{N}$@*)) then
				(*@$\alpha:=(\lambda_{\overline{t}}(\mathsf{parent}(v)),\mathsf{index}(v),\lambda_{\overline{t}}(v))$@*);
				if ((*@$v\notin\mathsf{occ}_{\overline{t}}'(\alpha)$@*)) then
					(*@$\mathsf{occ}_{\overline{t}}'(\alpha):=\mathsf{occ}_{\overline{t}}'(\alpha)\cup\{\mathsf{parent}(v)\}$@*);
				endif
			else
				let (*@$\overline{t}'$@*) be the right-hand side of (*@$\lambda_{\overline{t}}(v)$@*);
				if ((*@$\lambda_{\overline{t}'}(\varepsilon)\neq \lambda_{\overline{t}}(\mathsf{parent}(v))$@*) then
					(*@$\alpha:=(\lambda_{\overline{t}}(\mathsf{parent}(v)),\mathsf{index}(v),\lambda_{\overline{t}'}(\varepsilon))$@*);
					(*@$\mathsf{occ}_{\overline{t}}'(\alpha):=\mathsf{occ}_{\overline{t}}'(\alpha)\cup\{\mathsf{parent}(v)\}$@*);
				endif
			endif
		else
			return;
		endif
	endwhile
ENDFUNC
	\end{lstlisting}
	\caption{The function \texttt{retrieve-all-occs} listed in Fig.~\ref{lst:functionRetrieveAllOccs} adapted for the DAG case. For every $\alpha\in\Pi$ the set $\mathsf{occ}_{\overline{t}}(\alpha)$ is initially set to $\emptyset$.}\label{lst:functionRetrieveAllOccsDAG}
\end{figure}

\subsubsection{Constructing the Sets of Non-overlapping Occurrences} 

In the first iteration of Tree\-Re\-Pair we need to construct the set $\mathsf{occ}_{t_0}(\alpha)$ for every \tp $\alpha\in\Pi$ occurring in $t_0$. Our first try to accomplish this could be a postorder traversal of all the right-hand sides of $\overline{P}_0$'s productions using the function \texttt{retrieve-all-occs} listed in Fig.~\ref{lst:functionRetrieveAllOccs} on page \pageref{lst:functionRetrieveAllOccs}. However, when traversing the right-hand sides of the DAG grammar $\overline{\mathcal{G}}_0$ individually, we do not consider occurrences spanning two productions of the DAG.
\begin{figure}[t]
	\centering
	\begin{tikzpicture}[->,>=stealth',semithick]
		\tikzstyle{level 1}=[level distance=0.8cm, sibling distance=1.5cm]
		\tikzstyle{level 2}=[level distance=0.8cm, sibling distance=0.5cm]
		\node {$f$}
			child {node {$g$}
				child {node {$a$}}
				child {node {$b$}}
				child {node {$c$}}
			}
			child {node {$g$}
				child {node {$a$}}
				child {node {$b$}}
				child {node {$c$}}
			}
		;
	\end{tikzpicture}
	\caption{The tree $t\in T(\mathcal{F})$ which can be represented by a DAG grammar with productions $(S\to f(A,A))$ and $(A\to g(a,b,c))$.}\label{fig:missedOccurrences}
\end{figure}
\begin{example}
Consider the DAG grammar $\mathcal{G}=(N,P,S)$, where $N=\{S,A\}$ and $P$ contains the two productions $(S\to f(A,A))$ and $(A\to g(a,b,c))$. It is a compressed representation of the tree $t\in T(\mathcal{F})$ depicted in Fig.~\ref{fig:missedOccurrences}. If we would use the function \texttt{retrieve-all-occs} to determine all \tp occurrences in the right-hand sides of $P$'s productions, we would not capture the node $\varepsilon\in\mathsf{dom}_t$ which is an occurrence for both the \tp $(f,1,g)$ and the \tp $(f,2,g)$.
\end{example}
As we have seen, it is necessary to modify the \texttt{retrieve-all-occs} function slightly to also take occurrences spanning two productions into account. We use the algorithm listed in Fig.~\ref{lst:functionRetrieveAllOccsDAG} to obtain the set $\mathsf{occ}_{\overline{t}}'(\alpha)$ for every right-hand side $\overline{t}$ of $\overline{\mathcal{G}}_0$'s productions and every \tp $\alpha\in\Pi$ occurring in $t_0$. After that, we set
\[\mathsf{occ}_{t_0}'(\alpha):=\bigcup_{\substack{(\overline{A}\to\overline{t})\in\overline{P}_0,\\ v\in\mathsf{occ}_{\overline{t}}'(\alpha)}}\texttt{unfold}(\overline{\mathcal{G}},\overline{t},v)\enspace\mbox{.}\]
We test in line 13 of the \texttt{retrieve-all-occs} function if $\alpha$ has equal parent and child symbols. If this proves to be true, we do not add the corresponding occurrence to $\mathsf{occ}_{\overline{t}}'(\alpha)$, \ie, we do not consider occurrences of a \tp with equal parent and child symbols spanning two productions of the DAG. If we would do so, we would possibly register overlapping occurrences and run into problems during a later replacement of $\alpha$. Consider the following example:
\begin{figure}[t]
	\centering
	\begin{tikzpicture}[->,>=stealth',semithick]
		\tikzstyle{level 1}=[level distance=0.8cm, sibling distance=1.75cm]
		\tikzstyle{level 2}=[level distance=0.8cm, sibling distance=1cm]
		\tikzstyle{level 3}=[level distance=0.8cm, sibling distance=0.75cm]	
		\node {$f$}
			child {node {$f$}
				child {node {$a$}}
				child {node {$f$}
					child {node {$a$}}
					child {node {$a$}}
				}
			}
			child {node {$f$}
				child {node {$a$}}
				child {node {$f$}
					child {node {$a$}}
					child {node {$a$}}
				}
			}
		;
	\end{tikzpicture}
	\caption{The tree $t'\in T(\mathcal{F})$ which can be represented by a DAG grammar with productions $A_1\to f(A_2,A_2)$ and $A_2\to f\big(a,f(a,a)\big)$.}\label{fig:impactDagExample}
\end{figure}
\begin{example}
Consider the DAG grammar $\mathcal{G}=(N,P,A_1)$ given by the productions $(A_i\to t_i)\in P$, where $i\in\{1,2\}$, $t_1=f(A_2,A_2)$ and $t_2=f(a,f(a,a))$. It is a compressed representation of the tree $t'\in T(\mathcal{F})$ depicted in Fig.~\ref{fig:impactDagExample}. We use the algorithm from Fig.~\ref{lst:functionRetrieveAllOccsDAG} to obtain the sets $\mathsf{occ}_{t_i}'(\alpha)$ for $i\in\{1,2\}$ and every \tp $\alpha\in\Pi$ occurring $t'$. Let us assume that we omit the check in line 13, \ie, we also consider occurrences of \tps with equal parent and child symbols spanning two productions. The union
	\[\bigcup_{\substack{i\in\{1,2\},\\v\in\mathsf{occ}'_{t_i}((f,2,f))}}\texttt{unfold}(\mathcal{G},t_i,v)=\{\varepsilon,1,2\}\]
	contains the overlapping occurrences $\varepsilon$ and $2$ of the \tp $(f,2,f)$.
\end{example}
The precaution from line 13 leads sometimes to situations in which we replace fewer occurrences of a \tp with equal parent and child symbols as we would replace when not using the DAG representation.
\begin{figure}[t]
	\centering
	\begin{tikzpicture}[->,>=stealth',semithick]
		\tikzstyle{level 1}=[level distance=0.8cm, sibling distance=1.75cm]
		\tikzstyle{level 2}=[level distance=0.8cm, sibling distance=1cm]
		\tikzstyle{level 3}=[level distance=0.8cm, sibling distance=0.75cm]	
		\node {$f$}
			child {node {$f$}
				child {node {$b$}}
				child {node {$f$}
					child {node {$a$}}
					child {node {$f$}
						child {node {$a$}}
						child {node {$f$}
							child {node {$a$}}
							child {node {$a$}}
						}
					}
				}
			}
			child {node {$f$}
				child {node {$c$}}
				child {node {$f$}
					child {node {$a$}}
					child {node {$f$}
						child {node {$a$}}
						child {node {$f$}
							child {node {$a$}}
							child {node {$a$}}
						}
					}
				}
			}
		;
	\end{tikzpicture}
	\caption{Tree $t''\in T(\mathcal{F})$ with seven overlapping occurrences of the \tp $(f,2,f)$.}\label{fig:impactDagExample2}
\end{figure}
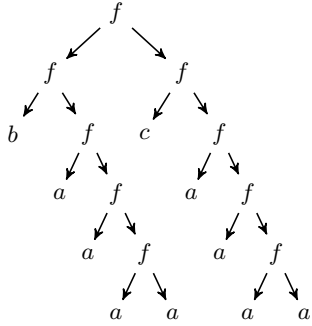
\begin{example}
Consider the tree $t''\in T(\mathcal{F})$ from
Fig.~\ref{fig:impactDagExample2} which can be represented by the DAG
grammar consisting of the two productions $(S\to t_1)$ and $(A\to
t_2)$ with $t_1=f(f(b,A),f(c,A))$ and $t_2=f(a,f(a,f(a,a)))$. After
careful counting one can tell that $t''$ exhibits at most four
non-overlapping occurrences of the \tp $\alpha=(f,2,f)$. However, if
we use the above function \verb|retrieve-all-occs-dag| we only capture
three of them. We obtain 
$\mathsf{occ}'_{t_1}(\alpha)=\{\varepsilon\}$, $\mathsf{occ}'_{t_2}(\alpha)=\{2\}$ and therefore
\[\mathsf{occ}_{t''}'(\alpha)=\bigcup_{\substack{i\in\{1,2\},\\v\in\mathsf{occ}'_{t_i}(\alpha)}}\texttt{unfold}(\mathcal{G},t_i,v)=\{\varepsilon,122,222\}\enspace\mbox{.}\]
\end{example}
Even though this approach does not capture all the occurrences which could be captured when not using the DAG representation, it still achieves a competitive compression performance on our set of test files (\cf Sect.~\ref{sec:resultsWithoutDag} on page \pageref{sec:resultsWithoutDag}). It seems that a more involved method of dealing with \tps with equal parent and child symbols spanning two productions would necessitate a partial unfolding of the DAG. The latter, however, would certainly result in a longer runtime.

\begin{figure}[tb]
	\lstset{emph={FUNCTION,ENDFUNC,return,traverse,endtraverse,for,each,if,else,then,do,endif,endfor,repeat,endrepeat,times}, emphstyle=\bfseries, emph={[2]subtrees_ht}, emphstyle={[2]\slshape}, morecomment=[l]{//}}
	\begin{lstlisting}
FUNCTION remove-absorbed-occs-dag((*@$\overline{t},v,j$@*))
	if ((*@$v\neq\varepsilon$@*)) then
		remove-occ-dag((*@$\overline{t},\mathsf{parent}(v),\mathsf{index}(v)$@*));
	else
		let (*@$\overline{A}$@*) be the right-hand side of (*@$\overline{t}$@*);
		for each (*@$(\overline{t}',u)\in\mathsf{ref}_{\overline{\mathcal{G}}_i}(\overline{A})$@*) do
			remove-occ-dag((*@$\overline{t}',\mathsf{parent}(u),\mathsf{index}(u)$@*));
		endfor
	endif
	
	for ((*@$l\in\{1,2,\ldots,\mathsf{rank}(\lambda_{\overline{t}}(v))\}$@*)) do
		remove-occ-dag((*@$\overline{t},v,l$@*));
	endfor
	
	for ((*@$l\in\{1,2,\ldots,\mathsf{rank}(\lambda_{\overline{t}}(vj))\}$@*)) do
		remove-occ-dag((*@$\overline{t},vj,l$@*));
	endfor
ENDFUNC	

FUNCTION remove-occ-dag((*@$\overline{t},v,j$@*))
	if ((*@$\lambda_{\overline{t}}(vj)\notin\mathcal{N}$@*)) then
		(*@$\alpha:=\big(\lambda_{\overline{t}}(v),j,\lambda_{\overline{t}}(vj)\big)$@*);
	else
		let (*@$\overline{t}'$@*) be the right-hand side of (*@$\lambda_{\overline{t}}(vj)$@*);
		(*@$\alpha:=\big(\lambda_{\overline{t}}(v),j,\lambda_{\overline{t}'}(\varepsilon)\big)$@*);
	endif
	(*@$\mathsf{occ}_{\overline{t}}'(\alpha):=\mathsf{occ}_{\overline{t}}'(\alpha)\setminus\{v\}$@*);
ENDFUNC	
	\end{lstlisting}
	\caption{Listing of the function \texttt{remove-absorbed-occs-dag} which removes all absorbed occurrences from the $\mathsf{occ}_{\overline{t}}'$ sets when using the DAG mode.}\label{lst:remove-absorbed-occs-dag}
\end{figure}
\begin{figure}[tb]
\lstset{emph={FUNCTION,ENDFUNC,return,traverse,endtraverse,for,if,else,then,do,endif,endfor,repeat,endrepeat,times}, emphstyle=\bfseries, emph={[2]subtrees_ht}, emphstyle={[2]\slshape}, morecomment=[l]{//}}
	\begin{lstlisting}
FUNCTION add-new-occs-dag((*@$\overline{t},v$@*))
	if ((*@$v\neq\varepsilon$@*)) then
		add-occ-dag((*@$\overline{t},\mathsf{parent}(v),\mathsf{index}(v)$@*));
	else
		let (*@$\overline{A}$@*) be the right-hand side of (*@$\overline{t}$@*);
		for each (*@$(\overline{t}',u)\in\mathsf{ref}_{\overline{\mathcal{G}}_i}(\overline{A})$@*) do
			add-occ-dag((*@$\overline{t}',\mathsf{parent}(u),\mathsf{index}(u)$@*));
		endfor
	endif
	
	for ((*@$l\in\{1,2,\ldots,\mathsf{rank}(\lambda_{\overline{t}}(v))\}$@*)) do
		add-occ-dag((*@$\overline{t},v,l$@*));
	endfor
ENDFUNC	

FUNCTION add-occ-dag((*@$\overline{t},v,j$@*))
	if ((*@$\lambda_{\overline{t}}(vj)\notin\mathcal{N}$@*)) then
		(*@$\alpha:=\big(\lambda_{\overline{t}}(v),j,\lambda_{\overline{t}}(vj)\big)$@*);
	else
		let (*@$\overline{t}'$@*) be the right-hand side of (*@$\lambda_{\overline{t}}(vj)$@*);
		(*@$\alpha:=\big(\lambda_{\overline{t}}(v),j,\lambda_{\overline{t}'}(\varepsilon)\big)$@*);
	endif
	(*@$\mathsf{occ}_{\overline{t}}'(\alpha):=\mathsf{occ}_{\overline{t}}'(\alpha)\cup\{v\}$@*);
ENDFUNC	
	\end{lstlisting}
	\caption{Listing of the function \texttt{add-new-occs-dag} which adds all new occurrences to the $\mathsf{occ}_{\overline{t}}'$ sets when using the DAG mode.}\label{lst:add-new-occs-dag}
\end{figure}

\subsubsection{Updating the Sets of Non-overlapping Occurrences}

Considering the graph representation of a DAG, a tree node can exhibit multiple parent nodes. In fact, a node has multiple parent nodes if it is the root of the right-hand side of a production of the corresponding DAG grammar and if this production is referenced multiple times.

To capture all \tp occurrences which are absorbed by the replacement of a \tp we need to take care of the above fact. The \texttt{remove-absorbed-occs} function listed in Fig.~\ref{lst:remove-absorbed-occs} needs to be adapted accordingly. Instead of removing one occurrence formed by the node being replaced and its parent, we need to iterate over possibly multiple parents and remove all corresponding occurrences. In Fig.~\ref{lst:remove-absorbed-occs-dag} the function \texttt{remove-absorbed-occs-dag} is listed which incorporates this necessary modification. Analogously, the function \texttt{add-new-occs} listed in Fig.~\ref{lst:add-new-occs} must be modified to work properly in the DAG mode. Fig.\ref{lst:add-new-occs-dag} shows an adapted version.

It is easy to see that our linear runtime is not negatively affected by this loop over all parents. Far from it --- as mentioned earlier, the DAG representation saves us time by avoiding repetitive re-calculations.

\subsubsection{Replacing the \ltps} 

The third and last scenario in which we have to take special care of the DAG representation is when replacing an occurrence of a \tp $\alpha\in\Pi$ spanning two productions of the DAG grammar. Due to our restriction on \tps with equal parent and child symbols the \tp $\alpha$ has to have different parent and child symbols. In the following we want to use an example to describe what needs to be done when replacing the \tp $\alpha$.
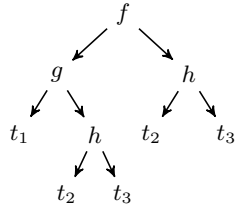
\begin{figure}[t]
	\centering
	\begin{tikzpicture}[->,>=stealth',semithick]
		\tikzstyle{level 1}=[level distance=0.8cm, sibling distance=1.75cm]
		\tikzstyle{level 2}=[level distance=0.8cm, sibling distance=1cm]
		\tikzstyle{level 3}=[level distance=0.8cm, sibling distance=0.75cm]	
		\node {$f$}
			child {node {$g$}
				child {node {$t_1$}}
				child {node {$h$}
					child {node {$t_2$}}
					child {node {$t_3$}}
				}
			}
			child {node {$h$}
				child {node {$t_2$}}
				child {node {$t_3$}}
			}
		;
	\end{tikzpicture}
	\caption{Depiction of the $\mathcal{F}$-labeled tree $t$. We have $t_1,t_2,t_3\in T(\mathcal{F})$.}\label{fig:impactDagExample3}
\end{figure}
\begin{example}
Consider the DAG grammar given by the productions $S\to f\big(g(t_1,A),A\big)$ and $A\to h(t_2,t_3)$ which represents the $\mathcal{F}$-labeled tree $t$ depicted in Fig.~\ref{fig:impactDagExample3}.
Imagine that we want to replace the sole occurrence of the \tp $(f,2,h)$, \ie, an occurrence spanning two productions.\footnote{For the sake of convenience, our example uses a rather small tree and we decide to replace a \tp occurring only once. We could easily enlarge $t$ such that $(f,2,h)$ occurs multiple times and still show the following.} In order to do that we mainly have to complete the following three steps.
\begin{enumerate}[(1)]
	\item We first have to introduce for every child of the node labeled by $h$ a new production. Thus, we obtain two new productions $B\to t_2$ and $C\to t_3$. We can skip this step for every child node which is already labeled by a nonterminal of the DAG grammar. 
	\item We need to update the production with left-hand side $A$ to $A\to h(B,C)$. 
	\item Finally, we introduce a new nonterminal $D$ representing the \tp $(f,2,h)$ and update the production for $S$ to
		\[S\to D(g(t_1,A),B,C)\enspace\mbox{.}\]
\end{enumerate}
The above steps are only necessary if the production with left-hand side $A$ is referenced more than once. Otherwise we could have directly connected the children of $h$ to the newly introduced node labeled by $D$ and removed the production with left-hand side $A$ from the grammar.
\end{example} 
Since at most $k$ new productions need to be introduced, the replacement of a \tp occurrence can still be accomplished in constant time. All in all, it has become clear that even when representing the input tree of our algorithm as a DAG our implementation runs in linear time.

\subsection{Technical Details on the Prototype}

The source code of the \trp prototype and its documentation is available at the Google Code\texttrademark\space open source developer site. It can be accessed by visiting the following web page:
\begin{center}
	\url{http://code.google.com/p/treerepair}
\end{center}
However, the implementation should be considered to be of alpha quality. There is still a lot of testing to be done. 

We also implemented a decompressor called \mbox{TreeDePair} which is contained in the \trp distribution. It is not optimized in terms of time and memory usage.

The software is licensed under the GPLv3 license which is available at
\begin{center}
	\url{http://www.gnu.org/licenses/gpl-3.0.txt}
\end{center}
It is implemented using the C++ programming language and can be compiled at least under the Windows and Linux operating systems. For compile instructions and library requirements, see the \texttt{README.txt} file in the root directory of the \trp distribution.

\section{Succinct Coding}\label{ch:succinctCoding}

In order to achieve a compact representation of the input tree of our \trp algorithm we further compress the generated linear SLCF tree grammar by a binary succinct coding. The technique we use is loosely based on the DEFLATE algorithm described in \cite{Deutsch96deflate}. In fact, we use a combination of a fixed-length coding, multiple Huffman codings and a run-length coding to encode different aspects of the grammar (\cf Fig.~\ref{fig:encodingsHierarchy}).
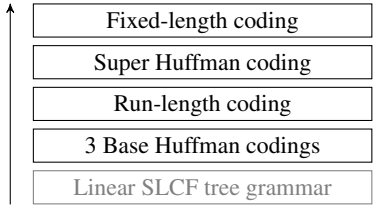
\begin{figure}[t]
	\centering\small
	\begin{tikzpicture}
		[bend angle=25, node distance=0.1cm,text depth=0.1ex,%text height=1.5ex
		box/.style={draw=black,minimum width=4.5cm},%font=\small,
		arrow/.style={->,>=stealth'}]

		\node[box] (fixed) {Fixed-length coding};
		\node[box] (super) [below=of fixed] {Super Huffman coding};
		\node[box] (runbase) [below=of super] {Run-length coding};
		\node[box] (base) [below=of runbase] {3 Base Huffman codings};
		\node[box,draw=gray,font=\color{gray}] (content) [below=of base] {Linear SLCF tree grammar};
		
		\draw[arrow] ([xshift=-0.3cm]content.south west) -- ([xshift=-0.3cm]fixed.north west);
	\end{tikzpicture}
	\caption{Hierarchy of the employed encodings.}\label{fig:encodingsHierarchy}
\end{figure}

In spite of the fact that we obtain an extremely compact binary representation of the generated SLCF tree grammar we are still able to directly execute queries on it with little effort. Basically, we only have to reconstruct the Huffman trees to be able to partially decompress the grammar on demand.

In \cite{Maneth08xml} many different variants of succinct codings specialized in SLCF tree grammars were investigated. Among them there was one encoding scheme which turned out to achieve the best compression performance in general --- at least with respect to the set of sample SLCF tree grammars which was used in this work. However, our experiments show that, regarding the SLCF tree grammars generated by \trp, this encoding is outperformed by the succinct coding which we present in this section.

\subsection{General Remarks}\label{sec:succinctCodingGeneralRemarks}

In this section, we want to elaborate on the following topics: How do we need to modify the pruning step of our algorithm to make our succinct coding as efficient as possible? How does \trp efficiently deal with parameter nodes? How can we serialize a Huffman tree in a compact way?

\subsubsection{Inefficient Productions}

Our experiments showed that, at least for our set of test XML documents, we achieve better compression results in terms of the size of the output file if we slightly modify the pruning step of our algorithm. It turns out that our succinct coding, which we describe in the following sections, is most efficient if we prune all productions with a $\mathsf{sav}$-value smaller than or equal to $2$ (instead of pruning all productions with a $\mathsf{sav}$-value smaller than or equal to $0$ as it is described in Sect.~\ref{sec:pruningStep} on page \pageref{sec:pruningStep}). However, we use this modification only if we make the size of the output file a top priority (by using the switch \texttt{-optimize filesize}). Otherwise, when optimizing the number of edges of the final grammar (\ie, when using the switch \texttt{-optimize edges}), we stick to the original version of the pruning step.

\subsubsection{Handling of Parameter Nodes}\label{sec:handlingOfParameterNodes}

Let $\mathcal{G}=(N,P,S)$ be the linear SLCF tree grammar which was generated by a run of \trp. Then, for every production $(A\to t)\in P$ it holds that $y_i\in\mathcal{Y}$ labels the $i$-th parameter node of $t$ in preorder, where $i\in\{1,2,\ldots,\mathsf{rank}(A)\}$. Due to this fact it is sufficient to represent the parameter symbols $y_1,y_2,\ldots,y_{\mathsf{rank}(A)}\in\mathcal{Y}$ by a single parameter symbol $y\in\mathcal{Y}$. Let $(B\to t')\in P$ be another production and let $v\in\mathsf{dom}_{t'}$ with $\lambda_{t'}(v)=A$. Now, let us assume that we want to eliminate the production $(A\to t)$ and that we use only a single parameter symbol labeling all parameter nodes. It is clear that the $i$-th (in preorder) parameter node of $t$ must be replaced by the subtree which is rooted at the $i$-th child of $v$. 

Our implementation takes advantage of the above simplification, \ie, it uses only one parameter symbol $y$ for every occurring parameter node.

\subsubsection{Serializing Huffman trees}\label{sec:serializingHuffmanTrees}

As stated in \cite{Deutsch96deflate}, it is sufficient to only write out the lengths of the generated codes to be able to reconstruct a Huffman tree at a later date. However, this requires the decompressor to be aware of the following.
\begin{itemize}
	\item What symbols are encoded by the corresponding Huffman tree?
	\item In what order are their code lengths listed?
\end{itemize}
In our case only integers need to be encoded by Huffman codings because we will encode all symbols by integers (see Sect.~\ref{sec:contentsOutputFile} on page \pageref{sec:contentsOutputFile}). Hence, it is obvious to use the natural order of integers to list the lengths of the generated codes. Let us assume that $n\in\mathbb{N}$ is the biggest integer which needs to be encoded and which was assigned a code to, respectively. We just need to loop over all integers $m\leq n$ in their natural order and print out the corresponding code length for each of it. For every $k<n$ for which no code was assigned to we print out a code length of $0$.

In order to solely rely on the code lengths there is still something which needs to be considered. We are required to assign new codes to the integers based on the lengths of their original codes. More precisely, the new code assignment has to fulfill the following two requirements.
\begin{enumerate}[(1)]
	\item\label{CodeLengthsReq1} All codes of the same code length exhibit lexicographically consecutive values when ordering them in the natural order of the integers they represent.
	\item\label{CodeLengthsReq2} Shorter codes lexicographically precede longer codes.
\end{enumerate}
This reorganization of the Huffman codes does not affect the compression performance of the coding since only codes of the same length are swapped. The following example is based on an example from \cite{Deutsch96deflate}.
\begin{example}
	Imagine that we want to use a Huffman coding to encode the letters $a$, $b$, $c$ and $e$ which are each occurring multiple times in a data stream. Let us assume that we obtain the Huffman codes listed in Table \ref{tbl:huffmanCodingExampleBefReordering}. In order to be able to store the corresponding Huffman tree by only writing out the lengths of the Huffman codes we need to assign new codes to the letters. Table \ref{tbl:huffmanCodingExampleAfterReordering} shows the newly assigned codes which fulfill the above two requirements (\ref{CodeLengthsReq1}) and (\ref{CodeLengthsReq2}).
\begin{table}[t]
	\small
	\renewcommand{\subfigcapmargin}{-1cm}
	\hfill
	\subtable[{Huffman coding before the reorganization of the codes. The letters are listed in their natural order, \ie, in alphabetic order.}]{
	\centering
		\begin{tabular*}{2.5cm}{c@{\extracolsep{\fill}}c}
			\toprule
			Symbol&Code\\
			\midrule
			$a$	&$00$\\
			$b$	&$1$\\
			$c$	&$011$\\
			$e$	&$010$\\
			\bottomrule
		\end{tabular*}
	\label{tbl:huffmanCodingExampleBefReordering}
	}
	\hfill
	\subtable[{Huffman coding from Table \ref{tbl:huffmanCodingExampleBefReordering} after the reorganization of the codes.}]{
	\centering
		\begin{tabular*}{2.5cm}{c@{\extracolsep{\fill}}c}
			\toprule
			Symbol&Code\\
			\midrule
			$a$	&$10$\\
			$b$	&$0$\\
			$c$	&$110$\\
			$e$	&$111$\\
			\bottomrule
		\end{tabular*}
	\label{tbl:huffmanCodingExampleAfterReordering}
	}
	\hfill
\end{table}
Now, let us assume that the decompressor expects the code lengths to be the lengths of codes assigned to the letters of the Latin alphabet and that these code lengths are ordered in the natural order of the letters they represent. Then, the corresponding Huffman tree can be unambiguously represented by the following sequence of code lengths: $2, 1, 3, 0, 3$. Note that we need to insert a code length of $0$ at the position of the letter $d$ since there is no code assigned to the letter $d$.
\end{example}

\subsection{Contents of the Output File}\label{sec:contentsOutputFile}

In this section we want to elaborate on the information which needs to be stored in the output file of our algorithm in order to be able to reconstruct the generated linear SLCF tree grammar at a later date. We also want to demonstrate how this data can be efficiently represented. However, at this time we do not pay attention to the fixed-length, run-length or Huffman codings which are employed in a subsequent step of the encoding process. For the sake of simplicity we consider these encodings in separate subsections of this section.

Let $\mathcal{G}=(N,P,S)$ be the linear SLCF tree grammar which was generated by a run of \trp. Before we are able to compile the information which needs to be written out we need to assign to every symbol from $\mathcal{F}\cup (N\setminus\{S\})\cup\{y\}$ a unique integer. In fact, we assign to every symbol from $\mathcal{F}$ a unique ID from the set \mbox{$\{1,2,\ldots,|\mathcal{F}|\}\subset\mathbb{N}$}. We assign the ID $|\mathcal{F}|+1$ to $y$, \ie, to the special symbol labeling all parameter nodes in the right-hand sides of $P$'s productions. Finally, we associate with every symbol from the set of nonterminals $N\setminus\{S\}$ a unique ID from the set $\{|\mathcal{F}|+2,|\mathcal{F}|+3,\ldots,|\mathcal{F}|+|N|\}$. The IDs are assigned to the nonterminals in such a way that the nonterminal $A\in N\setminus\{S\}$ has a higher ID than the nonterminal $B\in N\setminus\{S\}$ if $B\leadsto_{\mathcal{G}}^+A$ holds.

\subsubsection{Writing out the Necessary Informations}

Now, we are able to write out the information needed to reconstruct $\mathcal{G}$ in four steps. Bear in mind that the values mentioned below are not directly written to the output file but that they are additionally encoded by a combination of multiple Huffman codings, a run-length coding and a fixed-length coding later on. 

\paragraph*{First step} In the first step, we write out the number of terminal symbols $|\mathcal{F}|$ and the number of introduced productions $|N|-1$, \ie, we are not counting the start production. By handing over this information to the decompressor we avoid the insertion of separators marking, for instance, the end of the enumeration of elements types (which are written out in the third step).

\paragraph*{Second step} In the second step, we directly append a representation of the children characteristics of the terminal symbols. By children characteristics we mean their rank and, concerning terminal symbols of rank $1$, if we are dealing with a left or a right child.\footnote{Consult Sect.~\ref{sec:binaryTreeModel} on page \pageref{sec:binaryTreeModel} for an explanation on why this information is necessary to reconstruct the input tree.} Due to the fact that all terminal symbols have a rank of at most two, we can encode this information using two bits per symbol. Table \ref{tbl:childrenCharacteristics} lists all the bit strings we use together with a brief description of their meanings.
\begin{table}[t]
	\centering\small
%	\begin{tabular}{cl}
	\begin{tabular*}{4cm}{c@{\extracolsep{\fill}}l}
		\toprule
		Bit string&Description\\
		\midrule
		$00$&rank $0$\\
		$01$&rank $1$, right child\\
		$10$&rank $1$, left child\\
		$11$&rank $2$\\
		\bottomrule
	\end{tabular*}
	\caption{The bit strings encoding the children characteristics together with their meaning.}\label{tbl:childrenCharacteristics}
\end{table}
We write out the children characteristics as follows: Firstly, we print out a bit string from Table \ref{tbl:childrenCharacteristics} representing a certain children characteristic. After that we append the number of corresponding terminal symbols and finally we enumerate their IDs. We do this for the characteristics $00$, $01$ and $10$. We omit the enumeration of all terminal symbols with a rank of $2$ since their IDs can be reconstructed with the information in hand. In fact, we just need to subtract the set of IDs of all terminal symbols with children characteristics $00$, $01$ and $10$ from the set of IDs of all terminal symbols from $\mathcal{F}$ (which is $\{1,2,\ldots,|F|\}$). 

Furthermore, it is not necessary to print out the ranks of the nonterminals from $N$ since these can be easily reconstructed by counting the number of parameter nodes in the corresponding right-hand sides. The latter are written to the output file in the fourth step.

\paragraph*{Third step} In this step, we print the element types of the terminal symbols in the ascending order of their IDs to the output file. We do this by writing out the ASCII code of every single letter. The individual names are terminated by the ASCII character \texttt{ETX} which is assumed not to be used within the element types of the terminal symbols.

\paragraph*{Fourth step} In this last step we serialize the productions of $\mathcal{G}$ in the ascending order of the IDs of their left-hand sides. For every production $(A\to t)\in P$ we just write out the IDs of the labels of $t$'s nodes in preorder. We do not need to use special marker symbols to indicate the nesting structure of the symbols and their IDs, respectively. When parsing the output file this hierarchy can be easily obtained by taking care of the individual ranks of the symbols.

We can also omit the specification of the left-hand side $A$ since both, its ID and its rank, can be reconstructed with the information in hand. Imagine that we are parsing the output file to reconstruct the productions of $\mathcal{G}$. If we are parsing the $i$-th production, the ID of its left-hand side must be $|\mathcal{F}|+1+i$, where $i\in\{1,2,\ldots,|N|\}$. As already mentioned, the rank of the left-hand side can be obtained by counting the parameter nodes in the right-hand side once this has been reconstructed.

Note that it is superfluous to insert separators between the representations of the productions from $P$ since their boundaries can be calculated based on the ranks of the symbols. Again, imagine that we are trying to reconstruct the productions of $P$ by parsing the output file of our algorithm. Let $(A\to t)\in P$ be the first production we encounter. The tree $t$ can only consist of nodes labeled by terminal symbols, \ie, we must have $t\in T(\mathcal{F})$.\footnote{This is due to the fact that we have written out the productions in the ascending order of the IDs of their left-hand sides. These IDs were assigned to the nonterminals in such a way that the nonterminal $A\in N\setminus\{S\}$ has a higher ID than $B\in N\setminus\{S\}$ if $B\leadsto_{\mathcal{G}}^+A$ holds. Therefore, the right-hand side of $(A\to t)$, which is the first production which was written out, does not contain any node labeled by a nonterminal from $N$.} The ranks of all symbols from $\mathcal{F}$ are known since the necessary information was written to the compressed file in the second step. Therefore, we can easily reconstruct $t$ by iteratively parsing the corresponding IDs in the output file. While doing so we are also able to count the number of occurrences of the symbol $y\in\mathcal{Y}$ in $t$. Thus, we are aware of the value of $\mathsf{rank}(A)$. After that, we proceed with decoding the second production $(A'\to t')\in P$ by iteratively parsing the next IDs. We have $t'\in T(\mathcal{F}\cup\{A\})$, \ie, the ranks of all occurring symbols are known. That way all productions from $P$ can be reconstructed.
\begin{example}\label{ex:mainExampleEncoding}
	In order to get a clear picture of the representation described above we apply the previous four steps to the linear SLCF tree grammar $\mathcal{G}=(N,P,S_4)$ over the ranked alphabet $\mathcal{F}$ from Sect.~\ref{exampleGrammar} on page \pageref{exampleGrammar}, \ie, we have $N=\{S_4,A_2,A_3\}$ and $P$ is the following set of productions:
\begin{align*}
	S_4&\to\mathsf{books}^{10}(A_3(A_3(A_3(A_3(\mathsf{book}^{10}(A_2))))))\\
	A_3(y)&\to\mathsf{book}^{11}(A_2, y)\\
	A_2&\to\mathsf{author}^{01}(\mathsf{title}^{01}(\mathsf{isbn}^{00}))
\end{align*}
	First of all, we assign to every symbol from $\mathcal{F}\cup(N\setminus\{S_4\})\cup\{y\}$ a unique ID as it is shown in Fig.~\ref{tbl:succinctCodingAssignedIds}. After that we are able to write out the grammar exactly as described above resulting in the value sequence depicted in Fig.~\ref{fig:valueSequence}. We accomplish this task in four steps: 
\begin{figure}[t]
	\centering
	\begin{tabular*}{3cm}{l@{\extracolsep{\fill}}l}
		\toprule
		Symbol&ID\\
		\midrule
		$\mathsf{books}^{10}$		&$1$\\
		$\mathsf{isbn}^{00}$		&$2$\\
		$\mathsf{title}^{01}$		&$3$\\
		$\mathsf{author}^{01}$		&$4$\\
		$\mathsf{book}^{10}$		&$5$\\
		$\mathsf{book}^{11}$		&$6$\\
		$y$						&$7$\\
		$A_2$					&$8$\\
		$A_3$					&$9$\\
		\bottomrule
	\end{tabular*}
	\caption{All symbols with the ID assigned to them. The symbol $y$ is the symbol used to label the parameter nodes in the right-hand sides of $P$'s productions.}\label{tbl:succinctCodingAssignedIds}
\end{figure}
\begin{enumerate}[(1)]
	\item We begin by writing out the number of terminals ($6$) directly followed by the number of nonterminals minus the start nonterminal (2) --- see the values 0 and 1 in the depiction. 
	\item After that the children characteristics of all terminal symbols are written to the file. We begin by specifying all terminal symbols of rank $0$ (values 2--4). This is done by firstly writing out the bit string $00$ and the number of corresponding symbols ($1$). Finally, the ID $2$ of the terminal symbol $\mathsf{isbn}^{00}$, which is the sole terminal symbol of rank $0$, is listed. 
	
		Analogously, the terminal symbols with children characteristics $01$ and $10$ are enumerated (values 5--12).
	\item Now, the element types of all terminal symbols are exported to the output file (values 13--46). For each of them the decimal value of each ASCII character is written out. The element type \emph{books}, for instance, is encoded by the sequence $98$, $111$, $111$, $107$, $115$. 
	\item Finally, the productions from $P$ are written out in the ascending order of the IDs of their left-hand sides. Thus, the production with left-hand side $A_2$ is serialized as the very first production (values 47--49). It is encoded by the unambiguous sequence of IDs $4, 3, 2$ representing the terminal symbols $\mathsf{author}^{01}$, $\mathsf{title}^{01}$ and $\mathsf{isbn}^{00}$ of the right-hand side of $A_2$ in preorder.
		Afterwards the remaining productions with left-hand sides $A_3$ (values 50--52) and $S$ (values 53--59) are printed to the output file in this order. 
\end{enumerate}
\end{example}
\begin{figure}[tb]
	%\hspace{-1.75cm}
         \centering
		\pgfdeclarelayer{background}
		\pgfsetlayers{background,main}
		\begin{tikzpicture}[semithick]
			\pgfmathsetmacro{\cols}{16}
			\pgfmathsetmacro{\zero}{0}
			\pgfmathsetmacro{\hei}{0.5}
			
			\pgfmathdivide{\hei}{2}
			\pgfmathsetmacro{\halfhei}{\pgfmathresult}
			
			\pgfmathsetmacro{\dist}{1.8}
			\newcounter{curval}
			\setcounter{curval}{0}
		
			\foreach \content in {6,2,00,1,2,01,2,3,4,10,2,1,5,98,111,111,107,115,3,105,115,98,11
0,3,116,105,116,108,101,3,97,117,116,104,111,114,3,98,111,111,
107,3,98,111,111,107,3,4,3,2,6,8,7,1,9,9,9,9,5,8} {
				\pgfmathmod{\value{curval}}{\cols}
				\pgfmathsetmacro\mod\pgfmathresult

%				\ifx \pgfmathlessthan{\number}{3} \pgfmathlessthan{1}{3}
%					\let\div\pgfmathlessthan{4}{3}
%				\else
					\pgfmathdivide{\value{curval}}{\cols}
					\pgfmathfloor{\pgfmathresult}
					\pgfmathsetmacro{\div}{\pgfmathresult}
%				\fi

				\pgfmathmultiply{\div}{\dist}
				\pgfmathsetmacro{\y}{\pgfmathresult}
				
				\begin{scope}[xshift=\mod cm,yshift=-\y cm]
				
					\draw (0.5,-0.5) node[inner sep=0pt, outer sep=0pt] (value\arabic{curval}southeast) {};
					\draw (0.5,0.5) node[inner sep=0pt, outer sep=0pt] (value\arabic{curval}northeast) {};
					\draw (-0.5,-0.5) node[inner sep=0pt, outer sep=0pt] (value\arabic{curval}southwest) {};
					\draw (-0.5,0.5) node[inner sep=0pt, outer sep=0pt] (value\arabic{curval}northwest) {};
					\draw (0,0) node[inner sep=0pt, outer sep=0pt] (value\arabic{curval}center) {};
					\draw (0,0.5) node[inner sep=0pt, outer sep=0pt] (value\arabic{curval}northcenter) {};
				
%					\ifx\mod\zero
						\draw (-0.5,-0.5) -- (0.5,-0.5) -- (0.5,0.5) -- (-0.5,0.5) -- cycle;
%					\fi
					\draw (0,0) node[anchor=center] {$\content$};
					\draw (-0.5,0.5) node[anchor=north west, gray] {\scriptsize\arabic{curval}};
					%\fill (\x-18,2) circle (0.75);
				\end{scope}
				
				\stepcounter{curval}
			}
			
			\begin{scope}[pattern color=gray]
				% children characteristics
				\filldraw[draw=black,fill=lightgray,pattern=north west lines] (value2northwest) -- (value12northeast) -- ([xshift=0pt,yshift=\hei cm]value12northeast.center) -- ([xshift=0pt,yshift=\hei cm]value2northwest.center) -- (value2northwest);
				\draw ([yshift=\halfhei cm]value7northcenter) node[anchor=center, black,fill=white, inner sep=2pt] {\footnotesize children characteristics};
			
				% element types
				\filldraw[draw=black,fill=lightgray,pattern=dots] (value13northwest) -- (value15northeast) -- ([xshift=0pt,yshift=\hei cm]value15northeast.center) -- ([xshift=0pt,yshift=\hei cm]value13northwest.center) -- (value13northwest);
				\filldraw[draw=black,fill=lightgray,pattern=dots] (value16northwest) -- (value31northeast) -- ([xshift=0pt,yshift=\hei cm]value31northeast.center) -- ([xshift=0pt,yshift=\hei cm]value16northwest.center) -- (value16northwest);
				\filldraw[draw=black,fill=lightgray,pattern=dots] (value32northwest) -- (value46northeast) -- ([xshift=0pt,yshift=\hei cm]value46northeast.center) -- ([xshift=0pt,yshift=\hei cm]value32northwest.center) -- (value32northwest);
				\draw ([yshift=\halfhei cm]value24northwest) node[anchor=center,black,fill=white,inner sep=2pt] {\footnotesize element types};
				
				% productions
				\filldraw[draw=black,fill=lightgray,pattern=north east lines] (value46northwest) -- (value47northeast) -- ([xshift=0pt,yshift=\hei cm]value47northeast.center) -- ([xshift=0pt,yshift=\hei cm]value47northwest.center) -- (value47northwest);
				\filldraw[draw=black,fill=lightgray,pattern=north east lines] (value48northwest) -- (value59northeast) -- ([xshift=0pt,yshift=\hei cm]value59northeast.center) -- ([xshift=0pt,yshift=\hei cm]value48northwest.center) -- (value48northwest);
				\draw ([xshift=0cm,yshift=\halfhei cm]value53northeast) node[anchor=center, black,fill=white, inner sep=2pt] {\footnotesize productions};
			\end{scope}
			
			\begin{pgfonlayer}{background}
				\foreach \l/\r/\lc/\rc in {2/4/lightgray/white,5/8/lightgray/white,9/12/lightgray/white,13/15/lightgray/black!05,16/17/black!05/white,19/22/lightgray/white,24/28/lightgray/white,30/31/lightgray/black!10,32/35/black!10/white,37/40/lightgray/white,42/45/lightgray/white,47/47/lightgray/black!10,48/49/black!10/white,50/52/lightgray/white,53/59/lightgray/white} {
					\shade[left color=\lc,right color=\rc] (value\l northwest.center) rectangle (value\r southeast.center);
				}
			\end{pgfonlayer}

			\begin{scope}[anchor=south east, darkgray]
			\draw (value13southeast) node {\scriptsize'b'};
			\draw (value14southeast) node {\scriptsize'o'};
			\draw (value15southeast) node {\scriptsize'o'};
			\draw (value16southeast) node {\scriptsize'k'};
			\draw (value17southeast) node {\scriptsize's'};
			\draw (value18southeast) node {\scriptsize'ETX'};
			\draw (value19southeast) node {\scriptsize'i'};
			\draw (value20southeast) node {\scriptsize's'};
			\draw (value21southeast) node {\scriptsize'b'};
			\draw (value22southeast) node {\scriptsize'n'};
			\draw (value23southeast) node {\scriptsize'ETX'};
			\draw (value24southeast) node {\scriptsize't'};
			\draw (value25southeast) node {\scriptsize'i'};
			\draw (value26southeast) node {\scriptsize't'};
			\draw (value27southeast) node {\scriptsize'l'};
			\draw (value28southeast) node {\scriptsize'e'};
			\draw (value29southeast) node {\scriptsize'ETX'};
			\draw (value30southeast) node {\scriptsize'a'};
			\draw (value31southeast) node {\scriptsize'u'};
			\draw (value32southeast) node {\scriptsize't'};
			\draw (value33southeast) node {\scriptsize'h'};
			\draw (value34southeast) node {\scriptsize'o'};
			\draw (value35southeast) node {\scriptsize'r'};
			\draw (value36southeast) node {\scriptsize'ETX'};
			\draw (value37southeast) node {\scriptsize'b'};
			\draw (value38southeast) node {\scriptsize'o'};
			\draw (value39southeast) node {\scriptsize'o'};
			\draw (value40southeast) node {\scriptsize'k'};
			\draw (value41southeast) node {\scriptsize'ETX'};
			\draw (value42southeast) node {\scriptsize'b'};
			\draw (value43southeast) node {\scriptsize'o'};
			\draw (value44southeast) node {\scriptsize'o'};
			\draw (value45southeast) node {\scriptsize'k'};
			\draw (value46southeast) node {\scriptsize'ETX'};
			\end{scope}
		\end{tikzpicture}
	\caption{Representation of the grammar $\mathcal{G}$ from Example \ref{ex:mainExampleEncoding}.}\label{fig:valueSequence}
\end{figure}

\subsubsection{Possible Optimizations}

Of course, there is still room to further reduce the data which needs to be written to the output file. Consider, for instance, terminal symbols of the same element type but different children characteristics. In the case of our implementation, the element type of these symbols is written to the file two or three times in the second step. However, an optimization with respect to this redundancy does only lead to marginally better compression results. This is due to the fact that typically the major part of the output file is the enumeration of the productions.

Still regarding the second step, we could at first determine the most frequent children characteristic and omit the enumeration of all corresponding terminal symbols. This dynamic approach certainly leads to a small reduction of the size of the output file compared to always skipping the children characteristic $11$.

Another aspect which confesses optimization potential are possible long lists of the parameter symbol $y$ which emerge when writing out the right-hand sides of productions with a higher rank. In this case, run-length coding can lead to a better compression performance. However, we did not further investigate this matter since we focus on generating grammars with nonterminals with a maximal rank of $4$.

\subsection{Employing Multiple Types of Encodings}\label{sec:huffmanCoding}

Even though a Huffman tree has to be serialized for every Huffman coding used within our output file, we decided in favor of using four distinct Huffman codings. We use three of them for encoding 
\begin{itemize}
	\item the start production, 
	\item the remaining productions, the children characteristics of the terminal symbols and the numbers of terminals and nonterminals, and finally
	\item the names of the terminals.
\end{itemize}
In the sequel, we call these three Huffman codings the \emph{base Huffman codings}. The fourth Huffman coding, which we call \emph{super Huffman coding}, is used to encode the Huffman trees of the above codings. Our tests with different numbers of Huffman codings revealed that, in general, the above approach leads to the best compression results. This is at least true for most of the XML test documents we used.

\subsubsection{Base Huffman Codings}

We serialize the three base Huffman codings by writing out the lengths of the generated codes as it is described in Sect.~\ref{sec:serializingHuffmanTrees} on page \pageref{sec:serializingHuffmanTrees}. However, we additionally apply a run-length coding and the super Huffman coding to achieve a compact binary representation. In Sect.~\ref{sec:run-lengthCoding} on page \pageref{sec:run-lengthCoding} we elaborate on how exactly the run-length coding works. We briefly call the length of a code of a base Huffman coding a \emph{base code length} in the sequel. Analogously, we denote the lengths of the codes of the super Huffman coding by the term \emph{super code lengths}.

We output the number of base code lengths in front of every serialized base Huffman coding, \ie, in front of every enumeration of base code lengths. That way the decompressor knows how many bits are part of this binary representation.
Let us point out that this number of code lengths is encoded using $k$ bits instead of using the super Huffman coding, where $k\in\mathbb{N}$ is a constant which is fixed at compile time. We do this due to the following fact. Let $n\in\mathbb{N}$ be the number of code lengths and let us assume that we encode $n$, which is usually many times larger than the maximum over all code lengths, using the super Huffman coding. This would result in a big gap of unused integers between the super code lengths and $n$. This again would lead to a long list of $0$'s when storing the super Huffman tree by enumerating its code lengths. In general, this leads to a reduced compression performance compared to a fixed-length coding of $n$ using $k$ bits.

\subsubsection{Super Huffman Coding}

The super Huffman coding will also be stored by the sequence of its code lengths. However, the relatively small set of integers is encoded by a fixed-length coding using $n\in\mathbb{N}$ bits, where $n$ is the smallest possible number of bits which can be used to encode all super code lengths. More precisely, we serialize the super Huffman coding in three steps:
\begin{enumerate}[(1)]
	\item First of all, we print out the binary representation of the number $n$ using $k$ bits, where $k\in\mathbb{N}$ is a fixed number of bits which is specified at compile time.
	\item Let $m\in\mathbb{N}$ be the biggest base code length. We print out the binary representation of $m$ using $k$ bits. With this information the decompressor knows that the next $n\cdot m$ bits make up the list of super code lengths.
	\item Finally, the binary representations of the $m$ many super code lengths are written to the output file using $n$ bits for each code length. The super code lengths are printed in the natural order of the integers which are represented by the corresponding codes.
\end{enumerate}

\subsubsection{Run-length Coding of the Base Code Lengths}\label{sec:run-lengthCoding}

In this section we explain the run-length coding which is applied to the enumerations of code lengths used to write all base Huffman codings to the output file. This additional encoding marks a major contribution to the compactness of our representation. The bigger a code length is, the more different codes of that length are possible. At the same time a sequence of several occurrences of the same code length within the enumeration of all code lengths becomes more likely. In addition, our experience shows that it frequently happens that there is a longer run of $0$'s in the list of all code lengths due to symbols which no codes were assigned to. 
\begin{example}
Consider, for instance, the example from Sect.~\ref{sec:mainExampleHuffman} and in particular the base Huffman coding $C_3$ which is listed in Table \ref{tbl:mainExampleHuffmanCodingNames} on page \pageref{tbl:mainExampleHuffmanCodingNames}. This Huffman coding does not assign codes to the symbols $4$--$96$. This results in a sequence of $94$ zeros within the enumeration of the code lengths of $C_3$.
\end{example}
\begin{definition}
	Let $m,k\in\mathbb{N}$, where $k\geq\lfloor\log_2(m)\rfloor+1$. In the following we denote by $\mathsf{bin}_k(m)$ the ($0$-padded) binary representation $b_kb_{k-1}\ldots b_0$ of $m$, \ie, the following holds:
\[m=\sum_{i=0}^kb_i\cdot2^i\]
\end{definition}
We encode an enumeration of code lengths using a run-length coding as follows: Let us assume that $n\in\mathbb{N}$ is the maximum code length. Then we use the three additional integers $n+1$, $n+2$ and $n+3$ to indicate certain types of runs --- we call them \emph{run indicators} in the sequel. Principally, all runs with a length less than or equal to $3$ are straightly written to the output file. In contrast, a run of a code length $m\in\mathbb{N}$ exceeding this bound is encoded as follows:
\begin{itemize}
	\item If we have $m>0$, we use the run indicator $n+1$ and a bit string with a length of $2$ to indicate $4$--$7$ repetitions of the code length $m$. If $k>3$ is the length of the run of $m$ and $l=k\mod 7$ (\ie, $l\in\{0,1,\ldots,6\}$), then this run is encoded as follows:
		\begin{itemize}
			\item if $l>3$:
				\[m\;\underbrace{(n+1)\mathsf{bin}_2(3)}_{\lfloor\nicefrac{k}{7}\rfloor\text{ times}}\;(n+1)\mathsf{bin}_2(l-4)\]
			\item if $l\leq 3$:
				\[m\;\underbrace{(n+1)\mathsf{bin}_2(3)}_{\lfloor\nicefrac{k}{7}\rfloor\text{ times}}\;{[m]}^l\]
		\end{itemize}
		Note that ${[m]}^l$ denotes $l$ many consecutive $m$'s.
	\item If we have $m=0$, we use the run indicator $n+2$ with an appended bit string of length $3$ to  denote $4$--$11$ repetitions of $m$. In contrast, we use the run indicator $n+3$ together with a bit string of length $7$ to encode $12$--$139$ repeated $0$'s.
	
		If $k>3$ is the length of the run of $0$'s and $l=k\mod 139$ (\ie, $l\in\{0,1,\ldots,138\}$), then this run is encoded as follows:
		\begin{itemize}
			\item if $l>11$:
				\[\underbrace{(n+3)\mathsf{bin}_7(127)}_{\lfloor\nicefrac{k}{139}\rfloor\text{ times}}\;(n+3)\mathsf{bin}_7(l-12)\]
			\item if $3<l\leq11$:
				\[\underbrace{(n+3)\mathsf{bin}_7(127)}_{\lfloor\nicefrac{k}{139}\rfloor\text{ times}}\;(n+2)\mathsf{bin}_3(l-4)\]
			\item if $l\leq3$:
				\[\underbrace{(n+3)\mathsf{bin}_7(127)}_{\lfloor\nicefrac{k}{139}\rfloor\text{ times}}\;{[m]}^l\]
		\end{itemize}
\end{itemize}
\begin{table}
	\footnotesize
	\hfill
	\subtable[{Huffman coding $C_1$ used to encode the start production.}]{
	\centering
	\begin{tabular*}{4.7cm}[b]{r@{\extracolsep{\fill}}r@{\extracolsep{\fill}}r}
		\toprule
		Symbol&Old code&New code\\
		\midrule
		$0$&$10$		&$10$\\
		$1$&$1110$	&$1110$\\
		$5$&$1111$	&$1111$\\
		$8$&$110$	&$110$\\
		$9$&$0$		&$0$\\
		\bottomrule
	\end{tabular*}
	\label{tbl:mainExampleHuffmanCodingStartProduction}
	}
	\hfill
	\subtable[{Huffman coding $C_2$ used to encode the productions from $P\setminus\{S\}$, the children characteristics, and numbers of terminals and nonterminals.}]{
	\centering
	\begin{tabular*}{4.7cm}[b]{r@{\extracolsep{\fill}}r@{\extracolsep{\fill}}r}
		\toprule
		Symbol&Old code&New code\\
		\midrule
		$1$&$1010$	&$1100$\\
		$2$&$01$		&$00$\\
		$3$&$00$		&$01$\\
		$4$&$111$	&$100$\\
		$5$&$1011$	&$1101$\\
		$6$&$110$	&$101$\\
		$7$&$1001$	&$1110$\\
		$8$&$1000$	&$1111$\\
		\bottomrule
	\end{tabular*}
	\label{tbl:mainExampleHuffmanCodingProductions}
	}
	\hfill
\end{table}
\begin{table}
	\footnotesize
	\hfill
	\subtable[{Huffman coding $C_3$ used to encode the names of the terminal symbols.}]{
	\centering
	\begin{tabular*}{4.7cm}[b]{r@{\extracolsep{\fill}}r@{\extracolsep{\fill}}r}
		\toprule
		Symbol	&Old code	&New code\\
		\midrule
		$3$		&$111$		&$010$\\
		$97$		&$00101$		&$11010$\\
		$98$		&$101$		&$011$\\
		$101$	&$00100$		&$11011$\\
		$104$	&$00111$		&$11100$\\
		$105$	&$1001$		&$1010$\\
		$107$	&$1101$		&$1011$\\
		$108$	&$110011$	&$111110$\\
		$110$	&$110010$	&$111111$\\
		$111$	&$01$		&$00$\\
		$114$	&$11000$		&$11101$\\
		$115$	&$1000$		&$1100$\\
		$116$	&$000$		&$100$\\
		$117$	&$00110$		&$11110$\\
		\bottomrule
	\end{tabular*}
	\label{tbl:mainExampleHuffmanCodingNames}
	}
	\hfill
	\subtable[{Super Huffman coding used to encode the code lengths of the base Huffman codings.}]{
	\centering
	\begin{tabular*}{4.7cm}[b]{r@{\extracolsep{\fill}}r@{\extracolsep{\fill}}r}
		\toprule
		Symbol	&Old code	&New code\\
		\midrule
		$0$		&$0$			&$0$\\
		$1$		&$110000$	&$111110$\\
		$2$		&$1101$		&$1110$\\
		$3$		&$101$		&$100$\\
		$4$		&$111$		&$101$\\
		$5$		&$100$		&$110$\\
		$6$		&$11001$		&$11110$\\
		$9$		&$110001$	&$111111$\\
		\bottomrule
	\end{tabular*}
	\label{tbl:mainExampleSuperHuffmanCoding}
	}
	\hfill
\end{table}
\begin{example}
	Consider the following sequence of integers:
	\begin{center}
		$122333\,444444\,555\,000000000$
	\end{center}
	Now, let us assume that we want to encode the above sequence using our run-length coding. Obviously, we have $n=5$. The above run of $4$'s with a length of 6 is represented by the sequence $4610$ since we have $n+1=6$ and $\mathsf{bin}_2(6-4)=10$. In contrast, the run of $0$'s with a length of $9$ leads to the sequence $7101$ because it holds that $n+2=7$ and that $\mathsf{bin}_3(9-4)=101$. All in all, we obtain the sequence $122333\,4610\,555\,7101$.
\end{example}
Surprisingly, our investigations evinced that an approach which dynamically adjusts the length of the bit strings used in the above encoding depending on the size of the input grammar does not lead to significantly better compression results.

\subsubsection{Example}\label{sec:mainExampleHuffman}

This example continues the encoding of the linear SLCF tree grammar $\mathcal{G}$ from Example \ref{ex:mainExampleEncoding} on page \pageref{ex:mainExampleEncoding}. The Tables \ref{tbl:mainExampleHuffmanCodingStartProduction}, \ref{tbl:mainExampleHuffmanCodingProductions} and \ref{tbl:mainExampleHuffmanCodingNames} list the three base Huffman codings, called $C_1$, $C_2$ and $C_3$ in the sequel, which are calculated by our implementation. The columns labeled \emph{Old code} show the initial Huffman codes while the columns labeled \emph{New code} list the newly assigned codes after the necessary reorganization described in Sect.~\ref{sec:serializingHuffmanTrees} on page \pageref{sec:serializingHuffmanTrees}.

While Fig.~\ref{fig:valueSequence} on page \pageref{fig:valueSequence} shows the second part of the output file as it is generated by a run of \trp the Fig.~\ref{fig:mainExampleFirstPartOutputFile} shows the first part of it. The latter stores the base Huffman codings $C_1$, $C_2$ and $C_3$ together with the corresponding super Huffman coding. For the sake of clarity the corresponding values are denoted by their integer representation instead of by their fixed-length or Huffman code. The Huffman coding $C_1$ from Table \ref{tbl:mainExampleHuffmanCodingStartProduction}, for instance, is given by the sequence of code lengths ranging from value 13 to value 22, where value 12 informs us about the length of this sequence. Analogously, the code lengths of the Huffman codings $C_2$ and $C_3$ are given by the values 24--32 and 34--60, respectively. The sequence of code lengths of the Huffman coding $C_3$ exhibits a longer run, namely, $94$ consecutive occurrences of the code length $0$. This run is encoded by the run indicator $9=n+3$ and the bit string $\mathsf{bin}_7(94-12)=1010010$, where $n=6$ is the maximal length of a code from $C_3$.

\begin{figure}[bt]
	%\hspace{-1.75cm}
         \centering
		\pgfdeclarelayer{background}
		\pgfsetlayers{background,main}
		\begin{tikzpicture}[semithick]
			\pgfmathsetmacro{\cols}{16}
			\pgfmathsetmacro{\zero}{0}
			\pgfmathsetmacro{\hei}{0.5}
			
			\pgfmathdivide{\hei}{2}
			\pgfmathsetmacro{\halfhei}{\pgfmathresult}
			
			\pgfmathsetmacro{\dist}{1.8}
			\newcounter{curvaltwo}
			\setcounter{curvaltwo}{0}
		
			\foreach \content in {3,10,1,6,4,3,3,3,5,0,0,6,10,2,4,0,0,0,4,0,0,3,1,9,0,4,2,2,3,4,3,
			4,4,118,0,0,0,3,9,{\hfill 101\newline\hfill 0010},5,3,0,0,5,0,0,5,4,0,4,6,0,6,2,0,0,5,4,3,5} {
				\pgfmathmod{\value{curvaltwo}}{\cols}
				\pgfmathsetmacro\mod\pgfmathresult

%				\ifx \pgfmathlessthan{\number}{3} \pgfmathlessthan{1}{3}
%					\let\div\pgfmathlessthan{4}{3}
%				\else
					\pgfmathdivide{\value{curvaltwo}}{\cols}
					\pgfmathfloor{\pgfmathresult}
					\pgfmathsetmacro{\div}{\pgfmathresult}
%				\fi

				\pgfmathmultiply{\div}{\dist}
				\pgfmathsetmacro{\y}{\pgfmathresult}
				
				\begin{scope}[xshift=\mod cm,yshift=-\y cm]
				
					\draw (0.5,-0.5) node[inner sep=0pt, outer sep=0pt] (value\arabic{curvaltwo}southeast) {};
					\draw (0.5,0.5) node[inner sep=0pt, outer sep=0pt] (value\arabic{curvaltwo}northeast) {};
					\draw (-0.5,-0.5) node[inner sep=0pt, outer sep=0pt] (value\arabic{curvaltwo}southwest) {};
					\draw (-0.5,0.5) node[inner sep=0pt, outer sep=0pt] (value\arabic{curvaltwo}northwest) {};
					\draw (0,0) node[inner sep=0pt, outer sep=0pt] (value\arabic{curvaltwo}center) {};
					\draw (0,0.5) node[inner sep=0pt, outer sep=0pt] (value\arabic{curvaltwo}northcenter) {};
				
%					\ifx\mod\zero
						\draw (-0.5,-0.5) -- (0.5,-0.5) -- (0.5,0.5) -- (-0.5,0.5) -- cycle;
%					\fi
					\draw (0,0) node[anchor=center,text centered,text width=0.8cm] {$\content$};
					\draw (-0.5,0.5) node[anchor=north west, gray] {\scriptsize\arabic{curvaltwo}};
					%\fill (\x-18,2) circle (0.75);
				\end{scope}
				
				\stepcounter{curvaltwo}
			}
			
			\begin{scope}[pattern color=gray]
				% super Huffman coding
				\filldraw[draw=black,fill=lightgray,pattern=north west lines] (value1northwest) -- (value11northeast) -- ([xshift=0pt,yshift=\hei cm]value11northeast.center) -- ([xshift=0pt,yshift=\hei cm]value1northwest.center) -- (value1northwest);
				\draw ([yshift=\halfhei cm]value6northcenter) node[anchor=center, black,fill=white, inner sep=2pt] {\footnotesize super Huffman coding};
			
				% base Huffman coding C1
				\filldraw[draw=black,fill=lightgray,pattern=dots] (value12northwest) -- (value15northeast) -- ([xshift=0pt,yshift=\hei cm]value15northeast.center) -- ([xshift=0pt,yshift=\hei cm]value12northwest.center) -- (value12northwest);
				\filldraw[draw=black,fill=lightgray,pattern=dots] (value16northwest) -- (value22northeast) -- ([xshift=0pt,yshift=\hei cm]value22northeast.center) -- ([xshift=0pt,yshift=\hei cm]value16northwest.center) -- (value16northwest);
				\draw ([yshift=\halfhei cm]value19northcenter) node[anchor=center,black,fill=white,inner sep=2pt] {\footnotesize base Huffman coding $C_1$};
				
				% base Huffman coding C2
				\filldraw[draw=black,fill=lightgray,pattern=north east lines] (value23northwest) -- (value31northeast) -- ([xshift=0pt,yshift=\hei cm]value31northeast.center) -- ([xshift=0pt,yshift=\hei cm]value23northwest.center) -- (value23northwest);
				\filldraw[draw=black,fill=lightgray,pattern=north east lines] (value32northwest) -- (value32northeast) -- ([xshift=0pt,yshift=\hei cm]value32northeast.center) -- ([xshift=0pt,yshift=\hei cm]value32northwest.center) -- (value32northwest);
				\draw ([yshift=\halfhei cm]value27northcenter) node[anchor=center,black,fill=white,inner sep=2pt] {\footnotesize base Huffman coding $C_2$};
				
				% base Huffman coding C3
				\filldraw[draw=black,fill=lightgray,pattern=checkerboard light gray] (value33northwest) -- (value47northeast) -- ([xshift=0pt,yshift=\hei cm]value47northeast.center) -- ([xshift=0pt,yshift=\hei cm]value33northwest.center) -- (value33northwest);
				\filldraw[draw=black,fill=lightgray,pattern=checkerboard light gray] (value48northwest) -- (value60northeast) -- ([xshift=0pt,yshift=\hei cm]value60northeast.center) -- ([xshift=0pt,yshift=\hei cm]value48northwest.center) -- (value48northwest);
				\draw ([xshift=0cm,yshift=\halfhei cm]value40northwest) node[anchor=center, black,fill=white, inner sep=2pt] {\footnotesize base Huffman coding $C_3$};
			\end{scope}
			
			\begin{pgfonlayer}{background}
				\foreach \l/\r/\lc/\rc in {1/1/lightgray/lightgray,12/12/lightgray/lightgray,23/23/lightgray/lightgray,33/33/lightgray/lightgray} {
					\shade[left color=\lc,right color=\rc] (value\l northwest.center) rectangle (value\r southeast.center);
				}
			\end{pgfonlayer}
		\end{tikzpicture}
	\caption{Depiction of the part of the output file which contains the serialized four Huffman codings.}\label{fig:mainExampleFirstPartOutputFile}
\end{figure}

The super Huffman coding listed in Table \ref{tbl:mainExampleSuperHuffmanCoding} is written to the output file (values 2--11) using 3 bits per integer as it is stated by the value 0 of the output file. There need to be enumerated $10$ super code lengths since $10$ values --- the base code lengths $0,1,\ldots,6$ and the run indicators $7,8,9$ which are used by the base Huffman coding $C_3$ --- need to be encoded.

\section{Experimental Results}\label{ch:experimentalResults}

In the following, we compare the compression performance of our implementation of the Re-pair for Trees algorithm with existing algorithms. Furthermore, we will check the impact of the DAG representation of the input tree on the compression factors achieved and we will learn about the influences of small changes to the maximal rank allowed for a nonterminal.

\subsection{XML Documents Used}

The set of XML documents we used for investigating the performance of \trp consists of 23 files with different characteristics (\cf Table \ref{tbl:characteristicsXmlDocuments}). Most of them were used in past papers evaluating various XML compressors and therefore may be familiar to the reader. The original files can be obtained from the sources listed in Table \ref{tbl:sourcesOfXmlDocuments}. In all cases character data, attributes, comments, namespace information were removed from the XML files, \ie, the XML documents consist only of start tags, end tags and empty element tags. We do so, because, at this time, \trp ignores this information and solely concentrates on the XML document tree.

\begin{table}[t]
	%\thisfloatpagestyle{empty}
	%\hspace{-0.4cm}
        \centering
		\begin{tabular}{lrrrrrr}\toprule
		XML document						&File size (kb)&\#\,Edges	&Depth	&\#\,Element types	&Source\\
		\midrule
		\texttt{1998statistics}			&349			&28\,305		&5		&46				&1\\
		\texttt{catalog-01}				&4\,219		&225\,193		&7		&50				&9\\
		\texttt{catalog-02}				&44\,656		&2\,390\,230	&7		&53				&9\\
		\texttt{dictionary-01}				&1\,737		&277\,071		&7		&24				&9\\
		\texttt{dictionary-02}				&17\,128		&2\,731\,763	&7		&24				&9\\
		\texttt{dblp}						&117\,822		&10\,802\,123	&5		&35				&2\\
		\texttt{EnWikiNew}				&4\,843		&404\,651		&4		&20				&3\\
		\texttt{EnWikiQuote}				&3\,134		&262\,954		&4		&20				&3\\
		\texttt{EnWikiSource}			&13\,457		&1\,133\,534	&4		&20				&3\\
		\texttt{EnWikiVersity}			&5\,887		&495\,838		&4		&20				&3\\
		\texttt{EnWikTionary}			&99\,201		&8\,385\,133	&4		&20				&3\\
		\texttt{EXI-Array}				&5\,347		&226\,522		&9		&47				&5\\
		\texttt{EXI-factbook}			&1\,214		&55\,452		&4		&199				&5\\
		\texttt{EXI-Invoice}				&266			&15\,074		&6		&52				&5\\
		\texttt{EXI-Telecomp}			&3\,700		&177\,633		&6		&39				&5\\
		\texttt{EXI-weblog}				&1\,104		&93\,434		&2		&12				&5\\
		\texttt{JST\_gene.chr1}			&4\,202		&216\,400		&6		&26				&8\\
		\texttt{JST\_snp.chr1}				&13\,795		&655\,945		&7		&42				&8\\
		\texttt{medline02n0328}			&51\,751		&2\,866\,079	&6		&78				&6\\
		\texttt{NCBI\_gene.chr1}			&6\,862		&360\,349		&6		&50				&8\\
		\texttt{NCBI\_snp.chr1}			&63\,941		&3\,642\,224	&3		&15				&8\\
		\texttt{sprot39.dat}				&111\,175		&10\,903\,567	&5		&48				&7\\
		\texttt{treebank}				&19\,551		&2\,447\,726	&36		&251				&4\\
		\bottomrule
		\end{tabular}
	\caption{Characteristics of the XML documents used in our tests. The values in the "Source"-column match the source IDs in Table \ref{tbl:sourcesOfXmlDocuments}. The depth of an XML document tree specifies the length (number of edges) of the longest path from the root of the tree to a leaf.}
	\label{tbl:characteristicsXmlDocuments}
\end{table}

\begin{table}[tb]
	\footnotesize\centering
	\begin{tabular}{cl}
		\toprule
		ID&Source\\
		\midrule
		1&\url{http://www.cafeconleche.org/examples}\\
		2&\url{http://dblp.uni-trier.de/xml}\\
		3&\url{http://download.wikipedia.org/backup-index.html}\\
		4&\url{http://www.cs.washington.edu/research/xmldatasets}\\
		5&\url{http://www.w3.org/XML/EXI}\\
		6&\url{http://www.ncbi.nlm.nih.gov/pubmed}\\
		7&\url{http://expasy.org/sprot}\\
		8&\url{http://snp.ims.u-tokyo.ac.jp}\\
		9&\url{http://softbase.uwaterloo.ca/~ddbms/projects/xbench}\\
		\bottomrule
	\end{tabular}
	\caption{Sources of the XML documents from Table \ref{tbl:characteristicsXmlDocuments}.}
	\label{tbl:sourcesOfXmlDocuments}
\end{table}

\subsection{Algorithms Used in Comparison}

Basically, we compare our implementation of Re-pair for Trees with two
other compression algorithms based on linear SLCF tree grammars,
namely, BPLEX \cite{Busatto08efficient} and Extended-Repair
\cite{Krislin08repair,Boettcher10clustering}. The former is a
sliding-window based linear time approximation algorithm. It searches
bottom-up in a fixed window for repeating tree patterns. The size of
the sliding window, the maximal pattern size and the maximal rank of a
nonterminal can be specified as input parameters. 
One of the main drawbacks of BPLEX is that there exists only a slowly running implementation of it.

Extended-Repair (which we sometimes call E-Repair in the
sequel) is an algorithm developed by a group from 
the University of Paderborn, Germany 
\cite{Krislin08repair,Boettcher10clustering}. This algorithm is, just
like our Re-pair for Trees algorithm, based on the Re-pair algorithm
introduced in \cite{larsson2000off}. However, it was independently
developed and exhibits some fundamental differences to our
algorithm. One of the main differences is that the Extended-Repair
algorithm at first generates a DAG of the input tree and then
processes each part of it individually, \ie, it generates multiple
grammars which are combined in the end. The individual parts of the
input tree are called "repair packets". The maximal size of each
packet can be specified by an input parameter (default is 20\,000
edges). The author of \cite{Krislin08repair} points out that this
packet-based behavior may have a negative impact on the compression
performance of the Extended-Repair algorithm. Our own investigations
concerning a \trp version running on the DAG of the input tree instead 
of on the whole tree support this point of view. 

In \cite{Krislin08repair} it is shown that Extended-Repair achieves a much better
compression ratio on the XML document \texttt{NCBI\_snp.chr1},
when the input tree is not broken down into packets (this can 
be achieved by choosing the maximum packet size large enough). 
However, our experiments show that at
the same time the memory requirements and the runtime of the
Extended-Repair algorithm rise drastically. Note that, regarding our
algorithm, the DAG representation is merely used to save memory
resources and is almost completely 
transparent to the overlying \tp replacement process 
(\cf Sect.~\ref{sec:impactDag} on page \pageref{sec:impactDag}). 

\subsection{Testing Environment}

Our experiments were done on a computer with an Intel\textregistered\space Core\texttrademark\space2 Duo CPU T9400 processor, four gigabytes of RAM and the Linux operating system. Every algorithm was executed on a single processor core, \ie, no algorithm was able to make use of multiprocessing. \trp and BPLEX were compiled with the \texttt{gcc}-compiler using the \texttt{-O3} (compile time optimizations) and \texttt{-m32} (\ie, we generated them as 32bit-applications) switches. We were not able to compile the \texttt{succ}-tool of the BPLEX distribution with compile time optimizations (\ie, using the \texttt{-O3} switch). This tool is used to apply a succinct coding to a grammar generated by the BPLEX algorithm.  However, this should not have a great influence on the runtime measured for BPLEX since the \texttt{succ}-tool usually executes quite fast compared to the runtime of the actual BPLEX algorithm. In contrast, Extended-Repair is an application written in Java\texttrademark\space for which we only had the bytecode at hand, \ie, we did not have access to the source code of it. We executed Extended-Repair using the Java SE Runtime Environment\texttrademark\space in version \texttt{1.6.0\_15}.

During the execution of the algorithms we always measured their memory usage. We accomplished this by constantly polling the \texttt{VmRSS}-value which is printed out by executing the command \texttt{cat /proc/<pid>/status}, where \texttt{<pid>} is the process ID assigned to the algorithm. In the first second of the execution of an algorithm this value was checked every ten milliseconds and after that the frequency was slowly reduced to one second.

Every time we executed BPLEX we used its default input parameters, namely, window size: 20\,000, maximal pattern size: 20, maximal rank: 10. In order to be able to test BPLEX together with every file of our set of test XML documents we needed to explicitly allow large stack sizes using the standard tool \texttt{ulimit}.

\begin{table}[tb]
	\centering\small
	\begin{tabular}{lrrrrr}
		\toprule
		&\trp&BPLEX&E-Repair&mDAG&bin. mDAG\\
		\midrule
		Edges (\%)&2.9&3.4&4.1&12.8&18.3\\
		\#\,NTs&4\,753&13\,660&6\,522&2\,075&5\,320\\
		\midrule
		Time (sec)&10&322&63&-&-\\
		Mem (MB)&47&536&401&-&-\\
		File size (\%)&0.46&0.71&0.61&-&-\\
		\bottomrule
	\end{tabular}
	\caption{Average values of the characteristics of the generated grammars and of the corresponding runs of the algorithms.}
	\label{tbl:avgValuesOfCharacteristicsOptimizeEdges}
\end{table}

\subsection{Comparison of the Generated Grammars}

In this section, we compare the final grammars generated by the algorithms \trp, BPLEX and Extended-Repair. All algorithms were instructed to minimize the number of edges of the generated grammar. For \trp, we achieved this behavior by specifying the \texttt{-optimize edges} input parameter. Regarding Extended-Repair, we used the supplied \texttt{ConfEdges.xml} configuration file which is supposed to make Extended-Repair minimize the number of edges. The BPLEX algorithm was executed with its default input parameter values and no changes were made to the generated grammar (besides pruning nonterminals which are referenced only once by using the supplied \texttt{gprint} tool).

Table \ref{tbl:avgValuesOfCharacteristicsOptimizeEdges} shows the average values of the essential characteristics of the final grammars generated by the three competing algorithms. The first row shows the average compression factors in terms of the number of edges in percent. The edge compression factor is computed as follows: if $t\in T(\mathcal{F})$ is the binary representation of the input tree and $\mathcal{G}$ is the final grammar, we obtain the edge compression factor by computing $\nicefrac{|\mathcal{G}|}{|t|}\cdot 100$. The second row shows the average number of nonterminals of the final grammars. For the sake of completeness, the average runtimes (in seconds), the average memory usages (in megabytes) and the average file size compression factors are also listed. The compression factor in terms of file size specifies the ratio between the size of the input file and the file size of the succinct coding of the final grammar in percent.

We also added two columns to Table \ref{tbl:avgValuesOfCharacteristicsOptimizeEdges} showing the average number of edges and the average number of nonterminals of the minimal DAGs of the input trees (mDAG) and the minimal DAGs of the binary representations of the input trees (bin. mDAG).

As it can be seen, on average, \trp generates the smallest linear SLCF tree grammars (in terms of the number of edges) compared to the other two algorithms. At the same time, its grammars exhibit a small number of nonterminals. It outperforms BPLEX and Extended-Repair in terms of runtime and memory usage. The speed and moderate requirements on main memory are a result of the transparent DAG representation of the input tree and the many optimizations we made to the source code of \trp during our investigations.

\begin{table}[tb]
	\centering\small
	\begin{tabular}{lrrrrrr}
		\toprule
		&\trp&BPLEX&E-Repair&XMill&gzip&bzip2\\
		\midrule
		File size (\%)&0.45&0.57&0.61&0.47&1.36&0.58\\
		\midrule
		Time (sec)&10&329&167&119&$<$\,1&16\\
		Mem (MB)&47&536&399&7&-&7\\
		Edges (\%)&3.0&3.9&4.1&-&-&-\\
		\#\,NTs&2\,642&2\,796&7\,003&-&-&-\\
		\bottomrule
	\end{tabular}
	\caption{Average values of the characteristics of the runs of the three algorithms when making a small size of the output file top priority.}\label{tbl:avgValuesOfCharacteristicsOptimizeFileSize}
\end{table}

Figure \ref{fig:diagramEdgesCompression} on page \pageref{fig:diagramEdgesCompression} gives an impression on how each of the three algorithms performs on the individual XML documents in terms of the size of the final grammar in edges. For each file, the algorithm which generates the largest grammar is set to 100\%. In Appendix \ref{sec:detailedResultsOptimizationEdges} on page \pageref{sec:detailedResultsOptimizationEdges} there is a detailed table listing all relevant characteristics of the runs of the algorithms on the set of test XML documents.

\subsection{Comparison of Output File Sizes}

In this section, we concentrate on the sizes of the files generated by the runs of the algorithms on our set of test XML documents. In fact, we execute each algorithm in a mode in which the size of the resulting file is made a top priority. For \trp, we achieve this by specifying the input parameter \texttt{-optimize filesize} and for Extended-Repair, we get such a behavior by using the supplied \texttt{ConfSize.xml} configuration file and the \texttt{-s 4} switch. The latter chooses a certain succinct coding of the Extended-Repair distribution which is supposed to generate very small representations of the generated grammar. Regarding BPLEX, we first apply the supplied \texttt{gprint}-tool using the parameters \texttt{-\xspace-prune} and \texttt{-\xspace-threshold 14}. After that we use the \texttt{succ}-tool of the BPLEX distribution together with the parameter \texttt{-\xspace-type 68} to generate a Huffman coding-based succinct coding of the corresponding grammar. In \cite{Maneth08xml} it is stated that this approach leads to the best compression performance of BPLEX in general (in terms of file size).

In addition to the above three algorithms, we also consider the compression results produced by gzip, bzip2\footnote{For more information about the gzip algorithm, see \url{http://www.gzip.org}. For bzip2, see \url{http://www.bzip.org}.} and XMill 0.8 \cite{liefke2000xmill}. We include them in our comparison to make it easier to get a handle for common compression rates and runtimes. The first two algorithms are widely used general purpose file compressors which, of course, produce a non-queryable compressed representation of the input file. In contrast, XMill is a compressor specialized in compressing the structure and, in particular, the character data of XML documents. In fact, it mainly concentrates on how to group the character data of an XML document in such a way that it can be efficiently compressed by general purpose compressors like gzip. Since its implementation does not exhibit a special "only consider the structure of the XML document" mode, it may be unfair to directly compare its compression results with those of \trp, BPLEX or Extended-Repair. However, we included its compression results, which we obtained using its default input parameters, because we were interested in its performance in this setting.

Table \ref{tbl:avgValuesOfCharacteristicsOptimizeFileSize} shows the average sizes of the output files generated by the six algorithms mentioned above. For the sake of completeness, the average runtime, the average memory usage, the average number of edges and the average number of nonterminals are also listed. Again, \trp outperforms BPLEX and Extended-Repair regarding all considered characteristics. Surprisingly, its queryable output files are even smaller than the non-queryable ones produced by the highly optimized gzip and bzip2 algorithms. However, gzip (but interestingly not bzip2) runs much faster than \trp on our test data.

Figure \ref{fig:diagramFileSizeCompression} gives an impression on how each of the six algorithms performs on the individual XML documents in terms of the size of the generated output file. For each file, the algorithm which generates the biggest output file is set to 100\%. In Appendix \ref{sec:detailedResultsOptimizationFileSize} on page \pageref{sec:detailedResultsOptimizationFileSize} there is a detailed table listing all relevant characteristics of the runs of the algorithms on our set of test XML documents.

\begin{figure}[p]
	\subfigure[{Comparison of the number of edges of the final grammars.}]{
		\centering
		\includegraphics{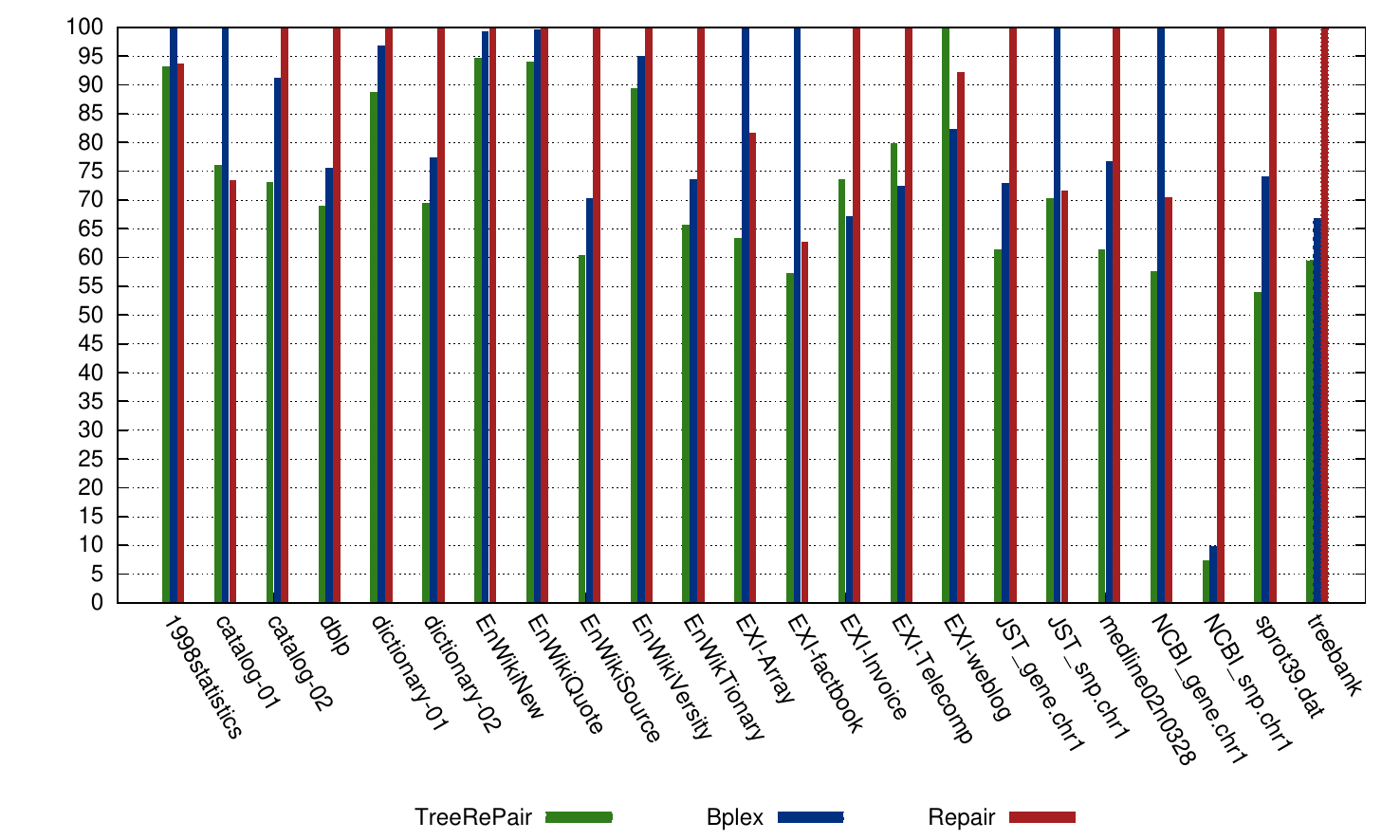}
		\label{fig:diagramEdgesCompression}
	}
	\subfigure[{Comparison of the sizes of the output files.}]{
		\centering
		\includegraphics{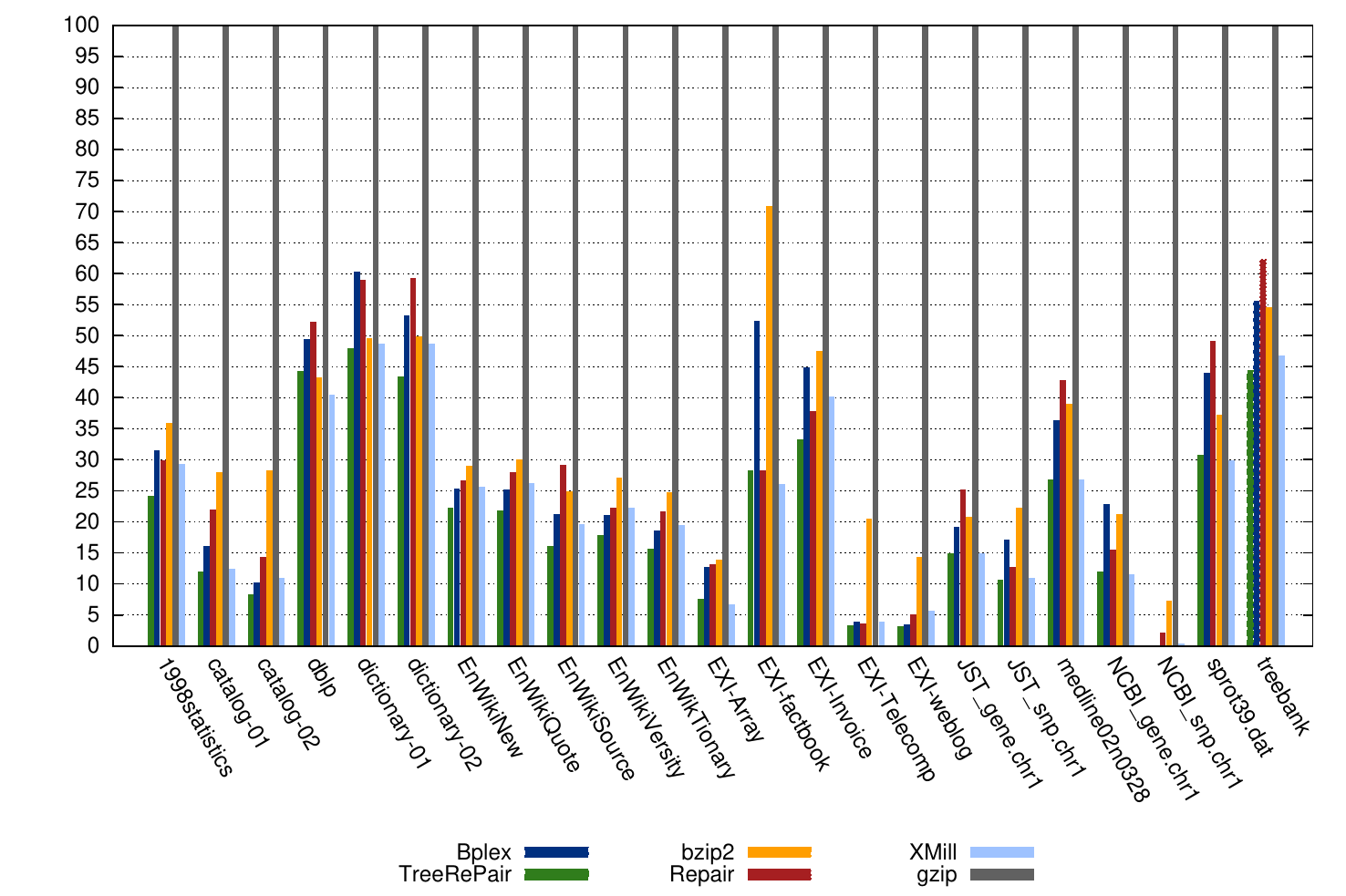}
		\label{fig:diagramFileSizeCompression}
	}
\end{figure}

\subsection{Results without DAG Representation}\label{sec:resultsWithoutDag}

Table \ref{tbl:avgValuesOfCharacteristicsNoDag} shows a comparison between the compression results of \trp when using and when not using, respectively, the DAG representation described in Sect.~\ref{sec:representingTreeInMemory} on page \pageref{sec:representingTreeInMemory}. The left column shows the values obtained when executing \trp with its default parameters in edge optimization mode, \ie, we are only using the \texttt{-optimize edges} switch since our algorithm uses the DAG representation by default. In contrast, the right column is a result of running \trp with the \texttt{-no\_dag} and \texttt{-optimize edges} switches.  Again, in Appendix \ref{sec:detailedResultsNoDag} on page \pageref{sec:detailedResultsNoDag}, there is a detailed table listing all relevant characteristics of the runs of the two \trp configurations on each test XML document.

\begin{table}[tb]
	\centering\small
	\begin{tabular}{lrr}
		\toprule
		&with DAG&without DAG\\
		\midrule
		Edges (\%)&2.86&2.84\\
		\#\,NTs&4\,753&4\,620\\
		File size (\%)&0.463&0.459\\
		Time (sec)&9.8&11.2\\
		Mem (MB)&47&188\\
		\bottomrule
	\end{tabular}
	\caption{Average values of the characteristics of the runs of \trp with and without the DAG representation of the input tree.}\label{tbl:avgValuesOfCharacteristicsNoDag}
\end{table}

Regarding the differences between the compression results of \trp and the ones of the competing algorithms, it can be said that the DAG representation only has a minor impact on the compression performance of our algorithm. However, we can state that it drastically reduces the memory demands of \trp\space--- it slashes the memory consumption by a factor of 4. Interestingly, even without the DAG representation, \trp uses only half as much main memory as Extended-Repair does (\cf Table~\ref{tbl:avgValuesOfCharacteristicsOptimizeEdges}). Furthermore, the DAG representation leads to a faster compression speed since it saves repetitive recalculations concerning equal subtrees.

\subsection{Results with Different Maximal Ranks}\label{sec:resultsWithDifferentMaximalRanks}

\begin{table}[tb]
	\centering\small
	\begin{tabular}{lrrrrrrr}
		\toprule
		Max. rank&0&1&2&3&4&5&6\\
		\midrule
		Edges (\%)&55.02&3.29&2.92&2.89&2.86&2.89&2.89\\
		\#\,NTs&1\,265&5\,539&4\,712&4\,916&4\,753&4\,956&4\,958\\
		File size (\%)&2.12&0.51&0.47&0.47&0.46&0.47&0.46\\
		Time (sec)&7.0&8.4&9.3&9.5&9.6&9.8&9.8\\
		Mem (MB)&44&44&45&47&47&47&47\\
		\bottomrule
	\end{tabular}
	\caption{Average values of the characteristics of the runs of \trp with different maximal ranks allowed for a nonterminal.}\label{tbl:avgValuesOfCharacteristicsRanks}
\end{table}

We executed TreeRePair using the \texttt{-optimize edges} (\ie, we
enabled the edge optimization mode) and the \texttt{-max\_rank}
switches. Each time, we specified a different maximal rank for a
nonterminal in order to get information concerning its influence on 
the compression performance. Table~\ref{tbl:avgValuesOfCharacteristicsRanks} shows that, regarding our
set of test XML documents, a maximal rank 
of 4 leads to the best compression results on average. 

At the same time, we can see that even when restricting the maximal
rank to 1 \trp performs better than BPLEX and Extended-Repair (\cf
Table~\ref{tbl:avgValuesOfCharacteristicsOptimizeEdges}). The fact
that large maximal ranks can lead to a worse compression ratio can be
explained by the trees from Sect.~\ref{sec:limitingTheMaximalRank} on
page \pageref{sec:limitingTheMaximalRank}. Note that the trees from
this section are basically long lists. Although this is not the case
for our test trees, their shape is nevertheless similar to a list
structure. In any case, its quite distinct from the shape of a full
binary tree, where an unlimited maximal rank leads to 
the best compression ratio (\cf Sect.~\ref{sec:observations} on page \pageref{sec:observations}).

%\bibliography{references}
%\bibliographystyle{alpha}

\newcommand{\etalchar}[1]{$^{#1}$}

\newpage

\appendix

\section{Detailed Test Results}\label{ch:detailedResults}

\subsection{Optimization of Total Number of Edges}\label{sec:detailedResultsOptimizationEdges}

\begin{longtable}{lrrrrr}
			\toprule
\textbf{Algorithm}&\textbf{Edges}&\textbf{File size}&\textbf{\#NTs}&\textbf{Time}&\textbf{Mem (MB)}\\\midrule
\endhead
			\toprule
\textbf{Algorithm}&\textbf{Edges}&\textbf{File size}&\textbf{\#NTs}&\textbf{Time}&\textbf{Mem (MB)}\\
\endfirsthead
			\midrule\multicolumn{6}{c}{\emph{1998statistics}}\\*
			TreeRePair&1.68\%&0.20\%&54&100ms&1\\*
			BPLEX&1.80\%&0.34\%&168&1.813s&295\\*
			E-Repair&1.69\%&0.24\%&37&7.518s&114\\*
			bin. mDAG&8.49\%&-&31&-&-\\*
			mDAG&4.87\%&-&15&-&-\\
			\midrule\multicolumn{6}{c}{\emph{catalog-01}}\\*
			TreeRePair&1.69\%&0.10\%&400&887ms&2\\*
			BPLEX&2.22\%&0.22\%&1251&6.548s&315\\*
			E-Repair&1.63\%&0.12\%&291&9.975s&279\\*
			bin. mDAG&3.10\%&-&520&-&-\\*
			mDAG&3.80\%&-&506&-&-\\
			\midrule\multicolumn{6}{c}{\emph{catalog-02}}\\*
			TreeRePair&1.11\%&0.07\%&965&9.409s&10\\*
			BPLEX&1.38\%&0.11\%&3045&30s&512\\*
			E-Repair&1.52\%&0.11\%&1499&42s&511\\*
			bin. mDAG&2.22\%&-&805&-&-\\*
			mDAG&1.39\%&-&792&-&-\\
			\midrule\multicolumn{6}{c}{\emph{dblp}}\\*
			TreeRePair&3.89\%&0.59\%&25250&43s&227\\*
			BPLEX&4.27\%&0.73\%&38712&57m 42s&1644\\*
			E-Repair&5.65\%&0.68\%&30430&4m 34s&510\\*
			bin. mDAG&19.36\%&-&6592&-&-\\*
			mDAG&11.11\%&-&3378&-&-\\
			\midrule\multicolumn{6}{c}{\emph{dictionary-01}}\\*
			TreeRePair&7.72\%&1.54\%&1676&1.010s&9\\*
			BPLEX&8.43\%&2.37\%&3994&44s&323\\*
			E-Repair&8.71\%&1.83\%&1248&16s&433\\*
			bin. mDAG&27.99\%&-&2058&-&-\\*
			mDAG&21.07\%&-&448&-&-\\
			\midrule\multicolumn{6}{c}{\emph{dictionary-02}}\\*
			TreeRePair&5.92\%&1.38\%&9757&11s&69\\*
			BPLEX&6.58\%&1.95\%&23209&6m 12s&587\\*
			E-Repair&8.52\%&1.83\%&11672&1m 40s&494\\*
			bin. mDAG&24.93\%&-&16281&-&-\\*
			mDAG&19.96\%&-&2414&-&-\\
			\midrule\multicolumn{6}{c}{\emph{EnWikiNew}}\\*
			TreeRePair&2.29\%&0.21\%&667&1.585s&8\\*
			BPLEX&2.40\%&0.30\%&1369&35s&337\\*
			E-Repair&2.42\%&0.24\%&476&12s&347\\*
			bin. mDAG&17.31\%&-&23&-&-\\*
			mDAG&8.67\%&-&29&-&-\\
			\midrule\multicolumn{6}{c}{\emph{EnWikiQuote}}\\*
			TreeRePair&2.42\%&0.21\%&452&1.158s&7\\*
			BPLEX&2.56\%&0.31\%&985&25s&321\\*
			E-Repair&2.58\%&0.26\%&323&9.924s&290\\*
			bin. mDAG&18.14\%&-&19&-&-\\*
			mDAG&9.09\%&-&25&-&-\\
			\midrule\multicolumn{6}{c}{\emph{EnWikiSource}}\\*
			TreeRePair&1.10\%&0.10\%&861&4.927s&26\\*
			BPLEX&1.28\%&0.16\%&1895&1m 9s&418\\*
			E-Repair&1.82\%&0.18\%&1106&23s&500\\*
			bin. mDAG&17.52\%&-&19&-&-\\*
			mDAG&8.77\%&-&24&-&-\\
			\midrule\multicolumn{6}{c}{\emph{EnWikiVersity}}\\*
			TreeRePair&1.44\%&0.13\%&525&2.107s&12\\*
			BPLEX&1.53\%&0.18\%&1043&34s&347\\*
			E-Repair&1.61\%&0.15\%&423&12s&437\\*
			bin. mDAG&17.60\%&-&19&-&-\\*
			mDAG&8.81\%&-&24&-&-\\
			\midrule\multicolumn{6}{c}{\emph{EnWikTionary}}\\*
			TreeRePair&0.97\%&0.11\%&4535&36s&183\\*
			BPLEX&1.09\%&0.14\%&6402&8m 58s&1287\\*
			E-Repair&1.48\%&0.15\%&6315&1m 33s&540\\*
			bin. mDAG&17.32\%&-&26&-&-\\*
			mDAG&8.66\%&-&30&-&-\\
			\midrule\multicolumn{6}{c}{\emph{EXI-Array}}\\*
			TreeRePair&0.41\%&0.03\%&123&1.281s&14\\*
			BPLEX&0.65\%&0.06\%&383&42s&322\\*
			E-Repair&0.53\%&0.05\%&142&8.017s&320\\*
			bin. mDAG&56.51\%&-&8&-&-\\*
			mDAG&42.20\%&-&13&-&-\\
			\midrule\multicolumn{6}{c}{\emph{EXI-factbook}}\\*
			TreeRePair&2.35\%&0.31\%&145&271ms&2\\*
			BPLEX&4.11\%&0.77\%&1423&5.138s&298\\*
			E-Repair&2.58\%&0.31\%&146&11s&408\\*
			bin. mDAG&9.16\%&-&236&-&-\\*
			mDAG&8.07\%&-&293&-&-\\
			\midrule\multicolumn{6}{c}{\emph{EXI-Invoice}}\\*
			TreeRePair&0.68\%&0.21\%&14&74ms&1\\*
			BPLEX&0.62\%&0.30\%&40&1.483s&293\\*
			E-Repair&0.93\%&0.24\%&20&4.689s&119\\*
			bin. mDAG&13.74\%&-&6&-&-\\*
			mDAG&7.12\%&-&15&-&-\\
			\midrule\multicolumn{6}{c}{\emph{EXI-Telecomp}}\\*
			TreeRePair&0.07\%&0.01\%&21&780ms&3\\*
			BPLEX&0.06\%&0.02\%&47&9.684s&310\\*
			E-Repair&0.08\%&0.02\%&21&11s&452\\*
			bin. mDAG&11.15\%&-&10&-&-\\*
			mDAG&5.59\%&-&15&-&-\\
			\midrule\multicolumn{6}{c}{\emph{EXI-weblog}}\\*
			TreeRePair&0.06\%&0.01\%&13&324ms&3\\*
			BPLEX&0.04\%&0.01\%&24&9.097s&303\\*
			E-Repair&0.05\%&0.02\%&11&7.868s&279\\*
			bin. mDAG&18.19\%&-&2&-&-\\*
			mDAG&9.10\%&-&2&-&-\\
			\midrule\multicolumn{6}{c}{\emph{JST\_gene.chr1}}\\*
			TreeRePair&1.84\%&0.10\%&354&874ms&3\\*
			BPLEX&2.19\%&0.19\%&1113&11s&315\\*
			E-Repair&2.99\%&0.17\%&126&8.006s&233\\*
			bin. mDAG&6.75\%&-&114&-&-\\*
			mDAG&4.24\%&-&76&-&-\\
			\midrule\multicolumn{6}{c}{\emph{JST\_snp.chr1}}\\*
			TreeRePair&1.51\%&0.09\%&856&3.150s&8\\*
			BPLEX&2.15\%&0.21\%&4193&31s&360\\*
			E-Repair&1.54\%&0.10\%&634&15s&445\\*
			bin. mDAG&6.20\%&-&282&-&-\\*
			mDAG&3.59\%&-&242&-&-\\
			\midrule\multicolumn{6}{c}{\emph{medline02n0328}}\\*
			TreeRePair&4.13\%&0.35\%&9064&16s&79\\*
			BPLEX&5.17\%&0.62\%&33976&5m 52s&574\\*
			E-Repair&6.73\%&0.54\%&13010&1m 32s&479\\*
			bin. mDAG&25.84\%&-&20013&-&-\\*
			mDAG&22.80\%&-&3960&-&-\\
			\midrule\multicolumn{6}{c}{\emph{NCBI\_gene.chr1}}\\*
			TreeRePair&1.37\%&0.09\%&504&1.374s&4\\*
			BPLEX&2.38\%&0.28\%&3631&14s&327\\*
			E-Repair&1.68\%&0.11\%&328&10s&308\\*
			bin. mDAG&3.98\%&-&605&-&-\\*
			mDAG&4.45\%&-&436&-&-\\
			\midrule\multicolumn{6}{c}{\emph{NCBI\_snp.chr1}}\\*
			TreeRePair&$<$\,0.01\%&$<$\,0.01\%&17&15s&80\\*
			BPLEX&$<$\,0.01\%&$<$\,0.01\%&23&2m 6s&770\\*
			E-Repair&0.03\%&0.01\%&291&37s&504\\*
			bin. mDAG&22.22\%&-&2&-&-\\*
			mDAG&11.11\%&-&2&-&-\\
			\midrule\multicolumn{6}{c}{\emph{sprot39.dat}}\\*
			TreeRePair&2.30\%&0.38\%&20224&43s&178\\*
			BPLEX&3.16\%&0.79\%&111167&14m 41s&1446\\*
			E-Repair&4.27\%&0.59\%&33102&3m 48s&499\\*
			bin. mDAG&13.18\%&-&31116&-&-\\*
			mDAG&16.07\%&-&10243&-&-\\
			\midrule\multicolumn{6}{c}{\emph{treebank}}\\*
			TreeRePair&20.72\%&4.41\%&32857&22s&164\\*
			BPLEX&23.29\%&6.16\%&76109&21m 27s&645\\*
			E-Repair&34.85\%&6.03\%&48358&6m 50s&526\\*
			bin. mDAG&59.42\%&-&43586&-&-\\*
			mDAG&53.75\%&-&24746&-&-\\
			\bottomrule
		\end{longtable}

\newpage

\subsection{Optimization of File Size}\label{sec:detailedResultsOptimizationFileSize}

\begin{longtable}{lrrrrr}
			\toprule
\textbf{Algorithm}&\textbf{Edges}&\textbf{File size}&\textbf{\#NTs}&\textbf{Time}&\textbf{Mem (MB)}\\\midrule
\endhead
			\toprule
\textbf{Algorithm}&\textbf{Edges}&\textbf{File size}&\textbf{\#NTs}&\textbf{Time}&\textbf{Mem (MB)}\\
\endfirsthead
			\midrule\multicolumn{6}{c}{\emph{1998statistics}}\\*
			TreeRePair&1.77\%&0.20\%&35&109ms&1\\*
			BPLEX&2.19\%&0.25\%&27&2.018s&295\\*
			E-Repair&1.68\%&0.24\%&37&4.578s&108\\*
			bzip2&-&0.29\%&-&229ms&4\\*
			gzip&-&0.81\%&-&8ms&-\\*
			XMill&-&0.24\%&-&2.728s&2\\
			\midrule\multicolumn{6}{c}{\emph{catalog-01}}\\*
			TreeRePair&1.76\%&0.10\%&279&898ms&2\\*
			BPLEX&2.23\%&0.14\%&342&6.834s&315\\*
			E-Repair&2.77\%&0.19\%&236&11s&349\\*
			bzip2&-&0.24\%&-&2.701s&8\\*
			gzip&-&0.85\%&-&51ms&-\\*
			XMill&-&0.11\%&-&12s&2\\
			\midrule\multicolumn{6}{c}{\emph{catalog-02}}\\*
			TreeRePair&1.12\%&0.07\%&770&10s&10\\*
			BPLEX&1.27\%&0.08\%&948&32s&512\\*
			E-Repair&1.49\%&0.12\%&1692&47s&521\\*
			bzip2&-&0.23\%&-&28s&8\\*
			gzip&-&0.81\%&-&450ms&-\\*
			XMill&-&0.09\%&-&1m 58s&12\\
			\midrule\multicolumn{6}{c}{\emph{dblp}}\\*
			TreeRePair&4.03\%&0.58\%&14533&43s&227\\*
			BPLEX&4.52\%&0.65\%&11693&61m 15s&1644\\*
			E-Repair&5.52\%&0.68\%&35125&42m 48s&516\\*
			bzip2&-&0.56\%&-&1m 11s&8\\*
			gzip&-&1.30\%&-&1.230s&-\\*
			XMill&-&0.53\%&-&11m 36s&15\\
			\midrule\multicolumn{6}{c}{\emph{dictionary-01}}\\*
			TreeRePair&8.08\%&1.47\%&930&1.117s&9\\*
			BPLEX&9.67\%&1.85\%&1044&46s&323\\*
			E-Repair&8.51\%&1.81\%&1428&19s&462\\*
			bzip2&-&1.52\%&-&1.313s&7\\*
			gzip&-&3.07\%&-&39ms&-\\*
			XMill&-&1.49\%&-&17s&2\\
			\midrule\multicolumn{6}{c}{\emph{dictionary-02}}\\*
			TreeRePair&6.15\%&1.32\%&5024&11s&69\\*
			BPLEX&7.56\%&1.63\%&5424&6m 12s&587\\*
			E-Repair&8.30\%&1.81\%&13698&1m 57s&475\\*
			bzip2&-&1.52\%&-&15s&7\\*
			gzip&-&3.05\%&-&279ms&-\\*
			XMill&-&1.49\%&-&2m 41s&13\\
			\midrule\multicolumn{6}{c}{\emph{EnWikiNew}}\\*
			TreeRePair&2.38\%&0.20\%&390&1.721s&8\\*
			BPLEX&2.63\%&0.23\%&335&35s&337\\*
			E-Repair&2.42\%&0.24\%&476&12s&369\\*
			bzip2&-&0.26\%&-&2.999s&8\\*
			gzip&-&0.90\%&-&57ms&-\\*
			XMill&-&0.23\%&-&23s&2\\
			\midrule\multicolumn{6}{c}{\emph{EnWikiQuote}}\\*
			TreeRePair&2.51\%&0.20\%&274&1.195s&7\\*
			BPLEX&2.81\%&0.23\%&236&25s&321\\*
			E-Repair&2.58\%&0.26\%&323&10s&268\\*
			bzip2&-&0.28\%&-&2.013s&8\\*
			gzip&-&0.93\%&-&36ms&-\\*
			XMill&-&0.24\%&-&15s&2\\
			\midrule\multicolumn{6}{c}{\emph{EnWikiSource}}\\*
			TreeRePair&1.14\%&0.10\%&515&5.025s&26\\*
			BPLEX&1.40\%&0.13\%&535&1m 10s&418\\*
			E-Repair&1.82\%&0.18\%&1127&23s&488\\*
			bzip2&-&0.16\%&-&8.742s&8\\*
			gzip&-&0.63\%&-&131ms&-\\*
			XMill&-&0.12\%&-&1m 4s&9\\
			\midrule\multicolumn{6}{c}{\emph{EnWikiVersity}}\\*
			TreeRePair&1.50\%&0.12\%&303&2.244s&12\\*
			BPLEX&1.70\%&0.15\%&287&36s&347\\*
			E-Repair&1.61\%&0.15\%&423&13s&415\\*
			bzip2&-&0.19\%&-&3.698s&8\\*
			gzip&-&0.69\%&-&59ms&-\\*
			XMill&-&0.15\%&-&28s&2\\
			\midrule\multicolumn{6}{c}{\emph{EnWikTionary}}\\*
			TreeRePair&1.00\%&0.11\%&2575&37s&183\\*
			BPLEX&1.15\%&0.13\%&2062&9m 13s&1287\\*
			E-Repair&1.48\%&0.15\%&6314&1m 40s&526\\*
			bzip2&-&0.17\%&-&57s&8\\*
			gzip&-&0.68\%&-&938ms&-\\*
			XMill&-&0.13\%&-&7m 25s&15\\
			\midrule\multicolumn{6}{c}{\emph{EXI-Array}}\\*
			TreeRePair&0.44\%&0.03\%&75&1.393s&14\\*
			BPLEX&0.77\%&0.05\%&124&43s&322\\*
			E-Repair&0.51\%&0.05\%&155&7.833s&312\\*
			bzip2&-&0.05\%&-&3.250s&8\\*
			gzip&-&0.37\%&-&67ms&-\\*
			XMill&-&0.03\%&-&10s&6\\
			\midrule\multicolumn{6}{c}{\emph{EXI-factbook}}\\*
			TreeRePair&2.51\%&0.31\%&99&356ms&2\\*
			BPLEX&6.44\%&0.58\%&170&5.333s&298\\*
			E-Repair&2.59\%&0.31\%&151&12s&438\\*
			bzip2&-&0.78\%&-&854ms&8\\*
			gzip&-&1.10\%&-&17ms&-\\*
			XMill&-&0.29\%&-&5.248s&1\\
			\midrule\multicolumn{6}{c}{\emph{EXI-Invoice}}\\*
			TreeRePair&0.72\%&0.21\%&11&147ms&2\\*
			BPLEX&0.78\%&0.28\%&8&1.406s&293\\*
			E-Repair&0.91\%&0.24\%&21&4.320s&113\\*
			bzip2&-&0.30\%&-&191ms&3\\*
			gzip&-&0.64\%&-&7ms&-\\*
			XMill&-&0.26\%&-&1.256s&2\\
			\midrule\multicolumn{6}{c}{\emph{EXI-Telecomp}}\\*
			TreeRePair&0.08\%&0.01\%&12&829ms&3\\*
			BPLEX&0.07\%&0.02\%&15&9.548s&310\\*
			E-Repair&0.08\%&0.02\%&24&13s&450\\*
			bzip2&-&0.09\%&-&2.363s&8\\*
			gzip&-&0.45\%&-&36ms&-\\*
			XMill&-&0.02\%&-&11s&2\\
			\midrule\multicolumn{6}{c}{\emph{EXI-weblog}}\\*
			TreeRePair&0.06\%&0.01\%&9&400ms&3\\*
			BPLEX&0.05\%&0.01\%&12&9.004s&303\\*
			E-Repair&0.05\%&0.02\%&12&7.942s&288\\*
			bzip2&-&0.06\%&-&720ms&8\\*
			gzip&-&0.40\%&-&14ms&-\\*
			XMill&-&0.02\%&-&8.342s&2\\
			\midrule\multicolumn{6}{c}{\emph{JST\_gene.chr1}}\\*
			TreeRePair&1.91\%&0.10\%&227&906ms&3\\*
			BPLEX&2.42\%&0.13\%&211&11s&315\\*
			E-Repair&2.99\%&0.17\%&128&9.947s&211\\*
			bzip2&-&0.14\%&-&2.599s&8\\*
			gzip&-&0.67\%&-&43ms&-\\*
			XMill&-&0.10\%&-&14s&2\\
			\midrule\multicolumn{6}{c}{\emph{JST\_snp.chr1}}\\*
			TreeRePair&1.58\%&0.08\%&537&3.213s&8\\*
			BPLEX&2.45\%&0.14\%&569&32s&360\\*
			E-Repair&1.51\%&0.10\%&673&15s&453\\*
			bzip2&-&0.18\%&-&9.251s&8\\*
			gzip&-&0.79\%&-&149ms&-\\*
			XMill&-&0.09\%&-&40s&8\\
			\midrule\multicolumn{6}{c}{\emph{medline02n0328}}\\*
			TreeRePair&4.32\%&0.34\%&4923&16s&79\\*
			BPLEX&6.47\%&0.46\%&6717&5m 45s&574\\*
			E-Repair&6.71\%&0.54\%&13243&1m 38s&477\\*
			bzip2&-&0.49\%&-&31s&7\\*
			gzip&-&1.26\%&-&544ms&-\\*
			XMill&-&0.34\%&-&2m 13s&13\\
			\midrule\multicolumn{6}{c}{\emph{NCBI\_gene.chr1}}\\*
			TreeRePair&1.43\%&0.09\%&354&1.442s&4\\*
			BPLEX&3.00\%&0.16\%&464&14s&327\\*
			E-Repair&1.66\%&0.11\%&342&10s&265\\*
			bzip2&-&0.15\%&-&4.110s&8\\*
			gzip&-&0.71\%&-&65ms&-\\*
			XMill&-&0.08\%&-&21s&8\\
			\midrule\multicolumn{6}{c}{\emph{NCBI\_snp.chr1}}\\*
			TreeRePair&$<$\,0.01\%&$<$\,0.01\%&11&15s&80\\*
			BPLEX&$<$\,0.01\%&$<$\,0.01\%&15&2m 6s&770\\*
			E-Repair&0.03\%&0.01\%&292&33s&465\\*
			bzip2&-&0.03\%&-&40s&8\\*
			gzip&-&0.39\%&-&578ms&-\\*
			XMill&-&0.00\%&-&3m 45s&14\\
			\midrule\multicolumn{6}{c}{\emph{sprot39.dat}}\\*
			TreeRePair&2.41\%&0.37\%&11699&43s&178\\*
			BPLEX&4.33\%&0.53\%&11783&13m 43s&1446\\*
			E-Repair&4.25\%&0.59\%&33700&3m 59s&497\\*
			bzip2&-&0.45\%&-&1m 11s&8\\*
			gzip&-&1.20\%&-&1.122s&-\\*
			XMill&-&0.36\%&-&9m 52s&15\\
			\midrule\multicolumn{6}{c}{\emph{treebank}}\\*
			TreeRePair&21.59\%&4.28\%&17186&22s&164\\*
			BPLEX&26.21\%&5.37\%&21302&21m 36s&646\\*
			E-Repair&34.53\%&6.01\%&51470&7m 44s&514\\*
			bzip2&-&5.26\%&-&6.407s&7\\*
			gzip&-&9.65\%&-&843ms&-\\*
			XMill&-&4.51\%&-&1m 36s&12\\
			\bottomrule
		\end{longtable}

\newpage

\subsection{Without Using DAG Representation}\label{sec:detailedResultsNoDag}

\begin{longtable}{lrrrrr}
			\toprule
\textbf{Algorithm}&\textbf{Edges}&\textbf{File size}&\textbf{\#NTs}&\textbf{Time}&\textbf{Mem (MB)}\\\midrule
\endhead
			\toprule
\textbf{Algorithm}&\textbf{Edges}&\textbf{File size}&\textbf{\#NTs}&\textbf{Time}&\textbf{Mem (MB)}\\
\endfirsthead
			\midrule\multicolumn{6}{c}{\emph{1998statistics}}\\*
			Without DAG&1.62\%&0.20\%&53&121ms&4\\*
			With DAG&1.68\%&0.20\%&54&214ms&1\\
			\midrule\multicolumn{6}{c}{\emph{catalog-01}}\\*
			Without DAG&1.69\%&0.10\%&400&1.381s&20\\*
			With DAG&1.69\%&0.10\%&400&1.022s&3\\
			\midrule\multicolumn{6}{c}{\emph{catalog-02}}\\*
			Without DAG&1.11\%&0.07\%&967&15s&199\\*
			With DAG&1.11\%&0.07\%&965&9.584s&10\\
			\midrule\multicolumn{6}{c}{\emph{dblp}}\\*
			Without DAG&3.89\%&0.59\%&25039&55s&1015\\*
			With DAG&3.89\%&0.59\%&25250&44s&227\\
			\midrule\multicolumn{6}{c}{\emph{dictionary-01}}\\*
			Without DAG&7.63\%&1.51\%&1622&1.238s&25\\*
			With DAG&7.72\%&1.54\%&1676&1.044s&9\\
			\midrule\multicolumn{6}{c}{\emph{dictionary-02}}\\*
			Without DAG&5.88\%&1.36\%&9390&12s&238\\*
			With DAG&5.92\%&1.38\%&9757&11s&69\\
			\midrule\multicolumn{6}{c}{\emph{EnWikiNew}}\\*
			Without DAG&2.28\%&0.21\%&656&2.042s&37\\*
			With DAG&2.29\%&0.21\%&667&1.732s&8\\
			\midrule\multicolumn{6}{c}{\emph{EnWikiQuote}}\\*
			Without DAG&2.41\%&0.21\%&458&1.320s&24\\*
			With DAG&2.42\%&0.21\%&452&1.223s&7\\
			\midrule\multicolumn{6}{c}{\emph{EnWikiSource}}\\*
			Without DAG&1.09\%&0.10\%&863&5.652s&101\\*
			With DAG&1.10\%&0.10\%&861&5.087s&26\\
			\midrule\multicolumn{6}{c}{\emph{EnWikiVersity}}\\*
			Without DAG&1.43\%&0.13\%&522&2.472s&45\\*
			With DAG&1.44\%&0.13\%&525&2.229s&12\\
			\midrule\multicolumn{6}{c}{\emph{EnWikTionary}}\\*
			Without DAG&0.97\%&0.11\%&4539&42s&743\\*
			With DAG&0.97\%&0.11\%&4535&38s&183\\
			\midrule\multicolumn{6}{c}{\emph{EXI-Array}}\\*
			Without DAG&0.40\%&0.03\%&122&1.378s&21\\*
			With DAG&0.41\%&0.03\%&123&1.394s&14\\
			\midrule\multicolumn{6}{c}{\emph{EXI-factbook}}\\*
			Without DAG&2.34\%&0.31\%&144&331ms&6\\*
			With DAG&2.35\%&0.31\%&145&330ms&2\\
			\midrule\multicolumn{6}{c}{\emph{EXI-Invoice}}\\*
			Without DAG&0.61\%&0.21\%&12&85ms&3\\*
			With DAG&0.68\%&0.21\%&14&124ms&1\\
			\midrule\multicolumn{6}{c}{\emph{EXI-Telecomp}}\\*
			Without DAG&0.06\%&0.01\%&17&1.132s&17\\*
			With DAG&0.07\%&0.01\%&21&850ms&3\\
			\midrule\multicolumn{6}{c}{\emph{EXI-weblog}}\\*
			Without DAG&0.05\%&0.01\%&10&607ms&10\\*
			With DAG&0.06\%&0.01\%&13&400ms&3\\
			\midrule\multicolumn{6}{c}{\emph{JST\_gene.chr1}}\\*
			Without DAG&1.73\%&0.09\%&299&1.365s&21\\*
			With DAG&1.84\%&0.10\%&354&910ms&3\\
			\midrule\multicolumn{6}{c}{\emph{JST\_snp.chr1}}\\*
			Without DAG&1.50\%&0.09\%&841&4.187s&59\\*
			With DAG&1.51\%&0.09\%&856&3.287s&8\\
			\midrule\multicolumn{6}{c}{\emph{medline02n0328}}\\*
			Without DAG&4.11\%&0.34\%&8524&17s&235\\*
			With DAG&4.13\%&0.35\%&9064&17s&79\\
			\midrule\multicolumn{6}{c}{\emph{NCBI\_gene.chr1}}\\*
			Without DAG&1.37\%&0.09\%&486&1.959s&32\\*
			With DAG&1.37\%&0.09\%&504&1.498s&4\\
			\midrule\multicolumn{6}{c}{\emph{NCBI\_snp.chr1}}\\*
			Without DAG&$<$\,0.01\%&$<$\,0.01\%&13&18s&337\\*
			With DAG&$<$\,0.01\%&$<$\,0.01\%&17&15s&80\\
			\midrule\multicolumn{6}{c}{\emph{sprot39.dat}}\\*
			Without DAG&2.31\%&0.37\%&18516&55s&936\\*
			With DAG&2.30\%&0.38\%&20224&44s&178\\
			\midrule\multicolumn{6}{c}{\emph{treebank}}\\*
			Without DAG&20.71\%&4.41\%&32786&14s&215\\*
			With DAG&20.72\%&4.41\%&32857&22s&164\\
			\bottomrule
		\end{longtable}

\end{document}